\newif\ifappendices\appendicestrue   
\newif\ifdraft\drafttrue   
\newcommand{\toolname}{{\sc Burst}\xspace} 
\newcommand{\set}[1]{\{#1\}}
\definecolor{dkblue}{rgb}{0,0.1,0.5}
\definecolor{dkcyan}{rgb}{0.1, 0.3, 0.3}
\definecolor{dkgreen}{rgb}{0,0.3,0}
\definecolor{dkred}{rgb}{0.6,0,0}
\definecolor{dkpurple}{rgb}{0.7,0,0.4}
\definecolor{olive}{rgb}{0.4, 0.4, 0.0}
\definecolor{orange}{rgb}{0.9,0.6,0.2}
\definecolor{lightyellow}{RGB}{255, 255, 179}
\definecolor{lightgreen}{RGB}{170, 255, 220}
\definecolor{teal}{RGB}{141,211,199}
\definecolor{darkbrown}{RGB}{121,37,0}
\definecolor{princetonorange}{RGB}{255,143,0}
\definecolor{tmlblue}{RGB}{0,58,120}
\colorlet{listing-comment}{gray}
\colorlet{operator-symbol}{darkbrown}
\lstdefinelanguage{OCaml}{
    language=Caml,
    morekeywords={include, module, sig, struct, val, unit, switch, on},
    morekeywords={inl, inr, fix, fst, snd, unl, unr},
    morekeywords=[2]{false, true, f},
    keywordstyle=[2]\color{dkgreen},
    morekeywords=[3]{int, bool, list, nat},
    keywordstyle=[3]\color{dkcyan},
    literate=%
      {=}{{{\color{operator-symbol}=}}}1
      {<}{{{\color{operator-symbol}<}}}1
      {>}{{{\color{operator-symbol}>}}}1
      {:}{{{\color{operator-symbol}:}}}1
      {;}{{{\color{operator-symbol};}}}1
      {|}{{{\color{operator-symbol}|}}}1
      {[}{{{\color{operator-symbol}[}}}1
      {]}{{{\color{operator-symbol}]}}}1
      {\&}{{{\color{operator-symbol}\&}}}1
      {->}{{{\color{operator-symbol}->}}}1
}
\lstdefinestyle{default}{
    basicstyle=\linespread{0.9}\ttfamily,
    columns=fullflexible,
    commentstyle=\sffamily\color{black!50!white},
    escapechar=\#,
    framexleftmargin=1ex,
    framexrightmargin=1ex,
    keepspaces=true,
    keywordstyle=\color{dkblue},
    mathescape,
    showstringspaces=true,
    stepnumber=1,
    xleftmargin=2.5em,
}
\lstdefinestyle{smallstyle}{
    basicstyle=\footnotesize\ttfamily,
    columns=fullflexible,
    commentstyle=\sffamily\color{black!50!white},
    escapechar=\#,
    framexleftmargin=1ex,
    framexrightmargin=1ex,
    keepspaces=true,
    keywordstyle=\color{dkblue},
    mathescape,
    showstringspaces=true,
    stepnumber=1,
    xleftmargin=2.5em,
}
\newcommand\zlstinline{\let\par\endgraf\lstinline}
\small\lstset{}}
\algnewcommand\algorithmicforeach{\textbf{for\,each}}
\algnewcommand\algorithmicforeachdo{\textbf{do}}
\algnewcommand\algorithmicendforeach{\textbf{end\ for\,each}}
\algnewcommand\algorithmicmatch{\textbf{match}}
\algnewcommand\algorithmicmatchwith{\textbf{with}}
\algnewcommand\algorithmicendmatch{\textbf{end\ match}}
\algnewcommand\algorithmiccase{|}
\algnewcommand\algorithmicendcase{}
\algnewcommand\Input{\textbf{input: }} 
\algnewcommand\Output{\textbf{output: }}
\providecommand{\leftsquigarrow}{%
  \mathrel{\mathpalette\reflect@squig\relax}%
}
\newcommand{\reflect@squig}[2]{%
  \reflectbox{$\m@th#1\rightsquigarrow$}%
}
\providecommand{\leftvdash}{%
  \mathrel{\mathpalette\reflect@tack\relax}%
}
\newcommand{\reflect@tack}[2]{%
  \reflectbox{$\m@th#1\vdash$}%
}
\newcommand{\mlstinline}[1]{\text{\lstinline{#1}}}
\newcommand{\COMMENT}[3]{\ifdraft\textcolor{#1}{[\textbf{#2}: {#3}]}}
\newcommand{\afm}[1]{\COMMENT{dkgreen}{AFM}{#1}}
\newcommand{\ToolText}[1]{\textsc{#1}\xspace}
\newcommand{\Burst}{\ToolText{Burst}}
\newcommand{\Myth}{\ToolText{Myth}}
\newcommand{\Smyth}{\ToolText{SMyth}}
\newcommand{\Synquid}{\ToolText{Synquid}}
\newcommand{\Leon}{\ToolText{Leon}}
\newcommand{\GEq}{\ensuremath{::=~}\xspace}
\newcommand{\Values}{\ensuremath{\mathcal{V}}\xspace}
\newcommand{\CreateVersionSpace}{\ensuremath{\textsc{CreateVersionSpace}}\xspace}
\newcommand{\GetWitness}{\ensuremath{\textsc{GetWitness}}\xspace}
\newcommand{\Intersect}{\ensuremath{\mathsf{Intersect}}\xspace}
\newcommand{\Label}{\ensuremath{\mathsf{Label}}\xspace}
\newcommand{\SuchThat}{\ensuremath{~|~}}
\newcommand{\SemanticsOf}[1]{\ensuremath{[ \! [#1] \! ]}}
\newcommand{\Angel}{\Innocey{}}
\newcommand{\angelicmodels}{\ensuremath{\models^{\text{\tiny \Angel}}}}
\newcommand{\Trace}{\ensuremath{\textsc{Trace}}}
\newcommand{\Synthesize}{\textsc{Synthesize}\xspace}
\newcommand{\List}[1]{\ensuremath{[#1]}}
\newcommand{\Append}{\ensuremath{+\!\!\!\!+\ }}
\newcommand{\GetAcceptingRun}{\ensuremath{\mathsf{GetAcceptingRun}}\xspace}
\newcommand{\GetAcceptingTraces}{\textsc{GetAcceptingTraces}}
\newcommand{\Conflict}{\textsc{Conflict}\xspace}
\newcommand{\Prohibit}{\textsc{Prohibit}\xspace}
\newcommand{\children}{\ensuremath{\mathsf{Children}}}
\newcommand{\ExtractAngelicTraces}{\ensuremath{\textsc{ExtractAngelicTraces}}}
\newcommand{\Spec}{\ensuremath{\varphi}}
\newcommand{\BuildAngelicFTA}{\textsc{BuildAngelicFTA}\xspace}
\newcommand{\groundSpec}{\ensuremath{\chi}\xspace}
\newcommand{\domain}{\ensuremath{\mathsf{dom}}}
\newcommand{\vs}{\ensuremath{\mathcal{V}}\xspace}
\newcommand{\fta}{\ensuremath{\mathcal{A}}}
\newcommand{\ftastate}{\ensuremath{q}}
\newcommand{\ftastates}{\ensuremath{Q}}
\newcommand{\alphabet}{\ensuremath{\Sigma}}
\newcommand{\wildcard}{\ensuremath{\bot}}
\newcommand{\nodes}{\ensuremath{N}}
\newcommand{\rootnode}{\ensuremath{\mathsf{root}}\xspace}
\newcommand{\finalstates}{\ensuremath{\ftastates_f}}
\newcommand{\transitions}{\ensuremath{\Delta}}
\newcommand{\LanguageOf}[1]{\ensuremath{\mathcal{L}(#1)}}
\newcommand{\true}{\emph{true}}
\newcommand{\false}{\emph{false}}
\newcommand{\cmark}{\ding{51}}%
\newcommand{\xmark}{\ding{55}}
\newcommand{\needsrev}[1]{{\color{blue}{#1}}}
\newcommand{\witness}{\omega}
\newcommand{\correct}{{\color{dkgreen} \cmark}}
\newcommand{\incorrect}{{\color{dkred} \xmark}}
\newcommand{\Respects}{\ensuremath{\textsc{Respects}}\xspace}
\DeclareMathOperator{\opr}{op}
\keywords{Program Synthesis, Angelic Execution, Logical Specifications}
\begin{document}


\title{Bottom-Up Synthesis of Recursive Functional Programs using Angelic Execution} 

\begin{abstract}
We present a novel bottom-up method for the synthesis of functional recursive programs. 
While bottom-up synthesis techniques 
can work better than top-down methods in certain settings, 
there is no prior  technique for synthesizing recursive programs from logical specifications in a purely bottom-up fashion. The main challenge is that effective bottom-up methods need to execute sub-expressions of the code being synthesized, but it is impossible to execute a recursive subexpression of a program that has not been fully constructed yet.  In this paper, we address this challenge using the concept of \emph{angelic semantics}.
Specifically, our method finds a program that satisfies the specification under angelic semantics (we refer to this as \emph{angelic synthesis}), analyzes the assumptions made during its angelic execution, uses this analysis to strengthen the specification, and finally reattempts synthesis with the strengthened specification. Our proposed angelic synthesis algorithm is based on version space learning and therefore deals effectively with many incremental synthesis calls made during the overall algorithm. 
We have implemented this approach in a prototype called \toolname and evaluate it on synthesis problems from prior work. 
Our experiments show that \toolname is able to synthesize a solution to 94\% of the benchmarks in our benchmark suite, outperforming prior work.
\end{abstract}

\author{Anders Miltner}
\affiliation{
  \institution{UT Austin}
  \city{Austin}
  \state{TX}
  \country{USA}                   
}
\email{amiltner@cs.utexas.edu}

\author{Adrian Trejo Nu\~nez}
\affiliation{
  \institution{UT Austin}
  \city{Austin}
  \state{TX}
  \country{USA}                   
}
\email{atrejo@cs.utexas.edu}

\author{Ana Brendel}
\affiliation{
  \institution{UT Austin}
  \city{Austin}
  \state{TX}
  \country{USA}                   
}
\email{anabrendel@utexas.edu}

\author{Swarat Chaudhuri}
\affiliation{
  \institution{UT Austin}
  \city{Austin}
  \state{TX}
  \country{USA}                   
}
\email{swarat@cs.utexas.edu}

\author{Isil Dillig}
\affiliation{
  \institution{UT Austin}
  \city{Austin}
  \state{TX}
  \country{USA}                   
}
\email{isil@cs.utexas.edu}

\maketitle

\section{Introduction}

Methods for program synthesis from formal specifications typically come in two flavors: \emph{top-down} and \emph{bottom-up}. Top-down methods~\citep{thesys,igor,flashfill,myth,myth2,lambda2,synquid} iterate through a sequence of \emph{partial programs}, starting with an ``empty'' program and progressively refining them through the addition of new code. In contrast, bottom-up methods~\citep{transit,escher,odena2020bustle,stun} maintain a pool of \emph{complete programs} and  progressively generate new programs by composing existing ones. 

Top-down and bottom-up approaches  have complementary strengths. For example, top-down methods work well when the specification  can be naturally decomposed into subgoals through an analysis of partial programs.
However, they can run into imprecision or computational complexity issues when the specification or the language semantics are complicated. In contrast, a bottom-up approach only needs to evaluate complete sub-expressions of a program, which is generally a much easier task than that of reasoning about partial programs.


Unfortunately, it is difficult to apply bottom-up synthesis  to programming languages that permit {recursion}.
This is because effective bottom-up  approaches need to  execute all sub-expressions of the target program; however, for recursive programs, sub-expressions can call the function being synthesized, whose semantics are still unknown.
One way to overcome this issue is to assume that the specification is \emph{trace-complete}, i.e., that the result of each such evaluation is part of the specification. (Indeed, such a strategy is followed in the \ToolText{Escher}~\citep{escher} system for the bottom-up synthesis of recursive programs.) However, trace-completeness is a restrictive assumption, and writing trace-complete specifications can be cumbersome and unintuitive. 

In this paper, we propose a new approach to bottom-up program synthesis that addresses this difficulty.
The key insight behind our solution
 is to use \emph{angelic execution}~\cite{angelic-first} to evaluate recursive sub-expressions of the program being synthesized. Specifically, our method first performs \emph{angelic synthesis} to find a program $P$ that satisfies the specification under the assumption that recursive calls can return \emph{any} value that is consistent with the specification. For example, if the specification is $0 \leq f(x) \leq x$, the angelic synthesizer assumes that a recursive call $f(2)$ can return \emph{any} of the integers $0, 1, $ or $2$, although in reality it can only return \emph{one} of these. Thus, when performing angelic synthesis of a function $f$, we only need access to $f$'s \emph{specification} rather than its full \emph{implementation}. 
 
 One complication with this approach is that a program $P$ that angelically satisfies its specification $\varphi$ may not \emph{actually} satisfy $\varphi$. To deal with this  difficulty, our method combines angelic synthesis with \emph{specification strengthening} and back-tracking search. In more detail, given an angelic synthesis result $P$, our synthesis technique first checks if $P$ satisfies $\varphi$ under the standard semantics. If so, then $P$ is returned as a solution. Otherwise, our method analyzes the assumptions made in angelic executions of $P$, uses this information to strengthen the specification, and re-attempts synthesis with the strengthened specification. If synthesis is unsuccessful with the strengthened specification, it backtracks and tries a different strengthening, continuing this process until it either finds the right program or exhausts the search space. 
 
 

{As illustrated by the above discussion, our end-to-end approach requires gradually strengthening the specification and making many calls to an \emph{angelic synthesizer}. Thus, for our approach to be practical, it is important to have an  angelic synthesis technique that can reuse partial synthesis results. Additionally, it must be possible to easily analyze assumptions made in angelic executions in order to determine how to  strengthen the specification. Motivated by these considerations, we propose an angelic synthesis technique based on  \emph{finite tree automata}~\cite{blaze,dace}. Our proposed angelic synthesizer handles incremental specifications by taking the intersection of previously constructed tree automata (for weaker specifications) with new automata constructed from the additional specifications. This incremental nature of the angelic synthesizer allows our approach to efficiently handle a series of increasingly more complex specifications. Furthermore, by inspecting runs of the tree automaton, we can easily and efficiently analyze the assumptions made by the angelic synthesizer.}

We have implemented our technique in a tool called \toolname\footnote{Bottom-Up Recursive SynThesizer} and evaluate it on 45 benchmarks from prior work~\cite{myth} using three types of specifications, namely (1) input-output examples, (2) reference implementations, and (3) logical formulas.
Our evaluation shows that \toolname can synthesize more functions than prior work on all three types of specifications. In particular, our tool is able to synthesize 96\% (43) of the functions from input/output examples, 96\% (43) of the functions from reference implementations, and 91\% (41) of the functions from logical specifications. We also compare \toolname against a simpler variant that does not perform specification strengthening, and we show that our proposed backtracking  search technique is useful in practice.

In summary, this paper makes the following contributions:
\begin{itemize}[leftmargin=*]
\item We present the first bottom-up synthesis procedure that can handle general recursion and general logical specifications, and does not require the restrictive trace-completeness assumption.
    \item  We introduce a new form of \emph{angelic program synthesis} that combines the use of angelic program semantics and specification strengthening and can make use of efficient version space representations. Some of the insights in our algorithm may be applicable outside the immediate setting that we target. 
    
    
    \item Our artifactual contribution is an implementation of our approach, called \Burst. We have conducted an extensive experimental evaluation on synthesis benchmarks from prior work.
    Our experiments show that \Burst significantly outperforms the state-of-the-art in the synthesis of recursive programs on several counts.
\end{itemize}


\section{Overview}
\label{sec:extended-example}

\begin{figure}
\centering
\begin{minipage}{.5\textwidth}
  \centering
  \includegraphics[width=.8\linewidth]{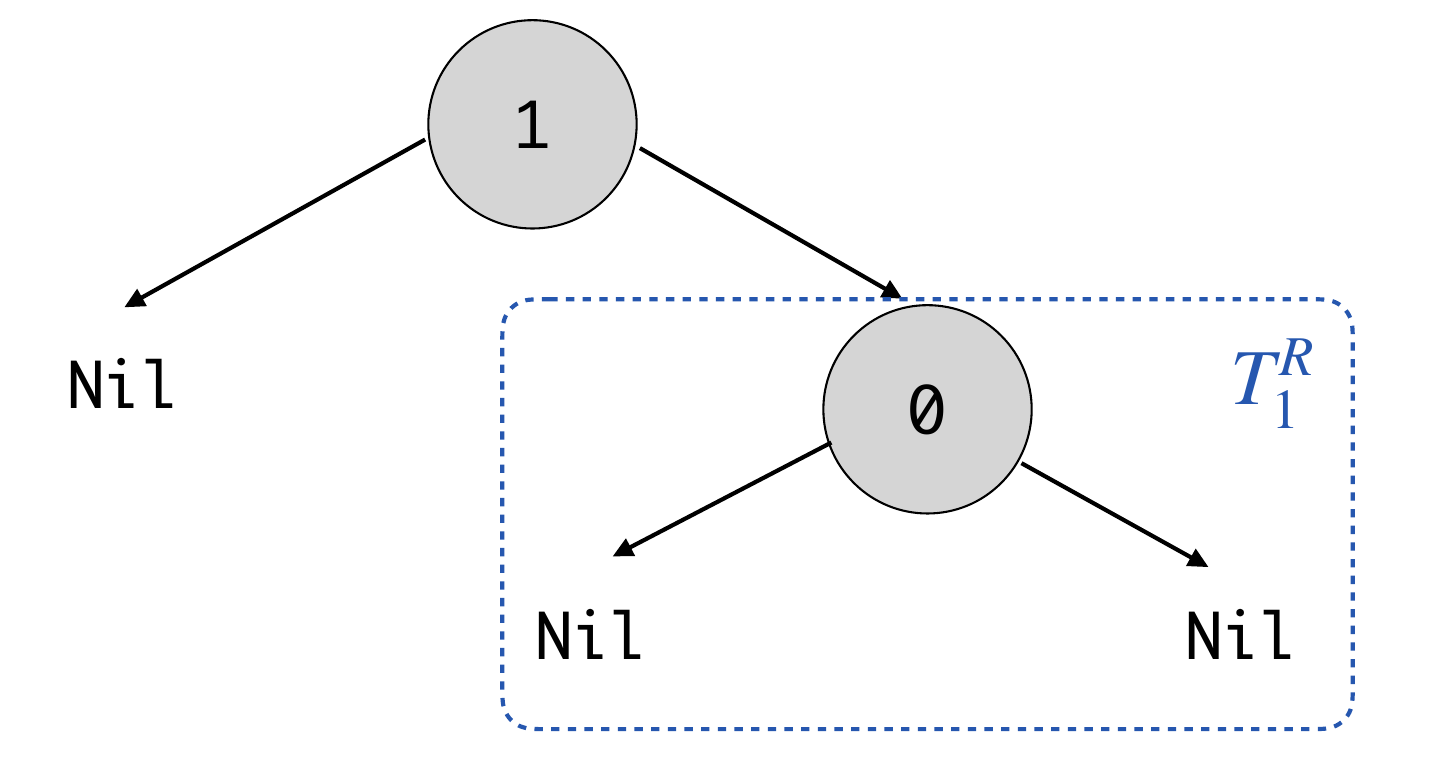}
  \captionof{figure}{First example $T_1$}
  \label{fig:tree1}
\end{minipage}%
\begin{minipage}{.5\textwidth}
  \centering
    \vspace{-0.65in}
  \includegraphics[width=.8\linewidth]{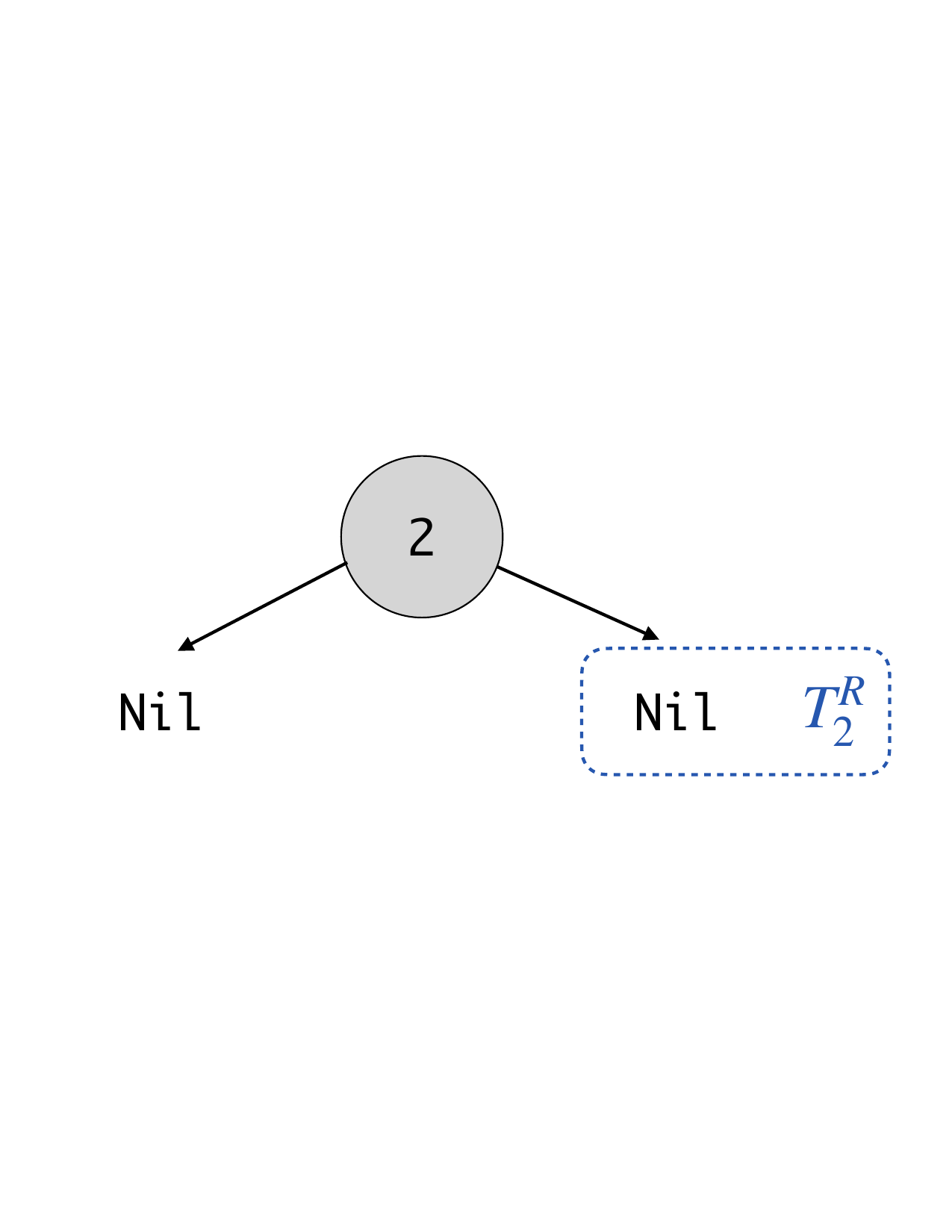}
  \vspace{-1.0in}
  \captionof{figure}{Second example $T_2$}
  \label{fig:tree2}
\end{minipage}
\end{figure}

In this section, we give an overview of our method with the aid of a motivating example. Our goal in this example is to synthesize a recursive implementation of the \lstinline{right_spine} procedure, which takes as input a tree and produces a list that is obtained by traversing the rightmost children of a node, starting from the root  and  continuing until a leaf node is reached.   As an example, Figures~\ref{fig:tree1} and~\ref{fig:tree2} show two trees $T_1$ and $T_2$, and a partial input-output specification for \lstinline{right_spine} is given as follows:
\begin{equation}
\label{eq:init-spec}
\begin{split}
\mlstinline{right_spine}(T_1) & = [1;0] \\ 
\mlstinline{right_spine}(T_2) & = [2]
\end{split}
\end{equation}
Note that our method can work with specifications that are not input-output examples (see Section~\ref{sec:logical-general}); here, we simply choose it for simplicity of presentation. We now explain how our technique synthesizes this \lstinline{right_spine} procedure in a bottom-up fashion.

\subsection{High Level Algorithm}\label{sec:overview-high}

Our algorithm works in a refinement loop that performs two major steps: (1) it synthesizes a program that \emph{angelically} satisfies the specification, and (2) strengthens the specification based on the assumptions made in the angelic execution. In this subsection, we illustrate the high-level approach on \lstinline{right_spine}, leaving the details of angelic synthesis to Section~\ref{sec:overview-angelic}.

\paragraph*{Iteration 1} The algorithm starts by invoking the angelic synthesizer to find a program that \emph{angelically} satisfies the specification shown in Equation~\ref{eq:init-spec}. As we will discuss later, the angelic synthesizer outputs the following program in this iteration:

\begin{lstlisting}
   let rec P1(x) = 
            match x with 
              | Nil -> []
              | Node(l,v,r) -> P1(r)
\end{lstlisting}
Clearly, this program does not actually satisfy the specification, but it does satisfy the specification under the angelic semantics: Since the specification from Eq.~\ref{eq:init-spec} does not constrain the output of the recursive call on $T_1^R$, the angelic synthesizer assumes that  the recursive call to \lstinline{P1} can return anything, including \lstinline{[1;0]}, for the right subtree of $T_1$. Thus, program \lstinline{P1} satisfies the specification under the angelic semantics of recursion.

Next, our algorithm checks whether the candidate program satisfies the specification under the actual semantics. Since 
\lstinline{P1}($T_1$) = \lstinline{P1}($T_2$) = \lstinline{[]},
it clearly does not, and our algorithm analyzes the assumptions made in the angelic execution to determine how to strengthen the specification. In this case, the angelic execution assumes that the recursive call on the right-subtrees $T_{1}^R$ and $T_2^R$ return $[1; 0]$ and $[2]$ respectively. Thus, our algorithm re-attempts synthesis using the following strengthened specification: 

\begin{equation}
\label{eq:spec2a}
\begin{array}{rlrl}
\mlstinline{right_spine}(T_1) & = [1;0] & \hspace*{2em} 
\mlstinline{right_spine}(T_2) & = [2] \\[4pt]
\mlstinline{right_spine}(T_1^R) & = [1;0] & \hspace*{2em}
\mlstinline{right_spine}(T_2^R) & = [2]
\end{array}
\end{equation}

\paragraph*{Iteration 2a} In the next recursive call, our algorithm invokes the angelic synthesizer to find a program consistent with the specification shown in  Eq.~\ref{eq:spec2a} but it fails.


\paragraph*{Iteration 2b} Since synthesis was unsuccessful for Eq.~\ref{eq:spec2a}, our algorithm backtracks and tries a different strengthening. Specifically, since we could not find a program where the recursive calls on $T_1^R$ and $T_2^R$ return $[1;0]$ and $[2]$ respectively, we now strengthen the specification using the \emph{negation} of these assumptions. This yields the following specification for the next recursive call to the synthesizer:

\begin{equation}
\label{eq:spec2b}
\begin{array}{rcl}
\mlstinline{right_spine}(T_1) = [1;0] & & 
\mlstinline{right_spine}(T_2) = [2] \\[4pt]
\neg(\mlstinline{right_spine}(T_1^R) = [1;0] & \wedge & 
\mlstinline{right_spine}(T_2^R) = [2])
\end{array}
\end{equation}

In this case, the angelic synthesizer returns the following program:

\begin{lstlisting}
 let rec P2(x) = 
            match x with 
              | Nil -> [0]
              | Node(l,v,r) -> v::P2(r)
\end{lstlisting}

This program is again incorrect but it does satisfy Eq.~\ref{eq:spec2b} under the angelic semantics. Indeed, a ``witness" to angelic satisfaction is:
\begin{center}
\lstinline{P2}($T_1^R$) = \lstinline{[0]} $\land$ \lstinline{P2}($T_2^R$) = \lstinline{[]}
\end{center}
Note that this assumption is allowed under the angelic semantics since these return values on $T_1^R$ and $T_2^R$ are both consistent with Eq.~\ref{eq:spec2b}. Thus, using the witness to angelic satisfaction, we now strengthen the specification as follows:

\begin{equation}
\label{eq:spec3b}
\begin{array}{rcl}
\mlstinline{right_spine}(T_1) = [1;0] & & 
\mlstinline{right_spine}(T_2) = [2] \\[4pt]
\neg(\mlstinline{right_spine}(T_1^R) = [1;0] & \wedge & 
\mlstinline{right_spine}(T_2^R) = [2])\\[4pt]
\mlstinline{right_spine}(T_1^R) = [0] & &
\mlstinline{right_spine}(T_2^R) = [] \\
\end{array}
\end{equation}

\paragraph*{Iteration 3b} In the next (and last) iteration, when we invoke the angelic synthesizer on Eq.~\ref{eq:spec3b}, it outputs the following program:

\begin{lstlisting}
 let rec P3(x) = 
            match x with 
              | Nil -> []
              | Node(l,v,r) -> v::P3(r)
\end{lstlisting}

This program satisfies the specification under the actual semantics; thus, the algorithm terminates with \lstinline{P3} as the (correct) solution.

\subsection{Angelic Synthesis using FTAs}\label{sec:overview-angelic}

\begin{figure}[t]
    \centering
    \includegraphics[scale=0.23]{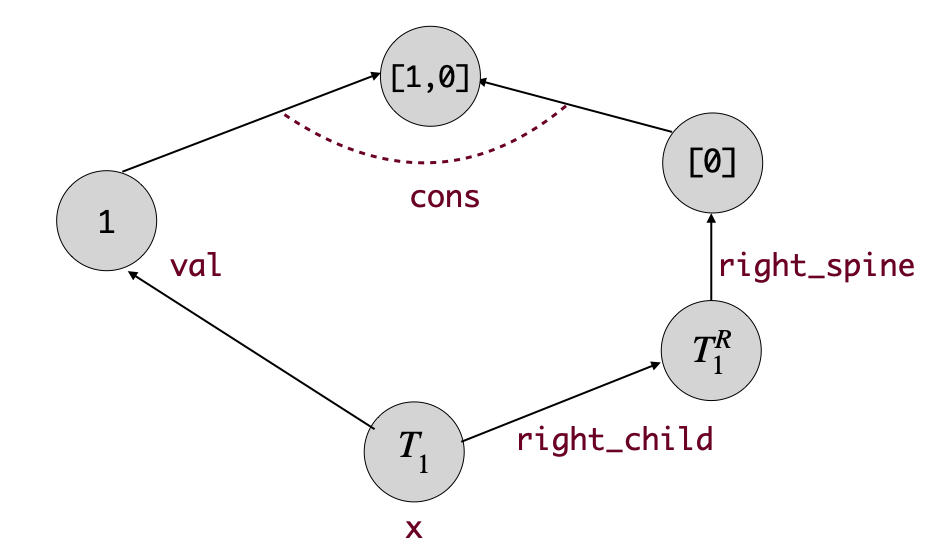}
    \caption{An example FTA that accepts a program that brings the input $T_1$ to the valid output of \lstinline{[1;0]}.}
    \label{fig:example-fta}
\end{figure}

As illustrated by the above discussion, a key piece of our technique is the \emph{angelic synthesizer} for finding a program that satisfies the specification under the angelic semantics. Inspired by prior work on bottom-up synthesis~\cite{blaze,dace}, our angelic synthesizer constructs a \emph{finite tree automaton (FTA)} that compactly represents a set of programs. In a nutshell, FTAs generalize standard automata by accepting trees instead of words. In our setting, the states in the automata correspond to concrete program values (e.g., lists like \lstinline{[1;0]} or \lstinline{[]}), and the trees accepted by the automaton correspond to programs (i.e., abstract syntax trees). 

In order to explain our angelic synthesis approach, we first briefly review the construction from prior work~\cite{dace}. The idea is to construct a separate automaton for each input (e.g., $T_1$ from Fig~\ref{fig:tree1}) and then take the intersection of all of these automata. To construct each automaton, we start with the given input and obtain new states by applying language constructs to the existing states. For example, given states $q_2$ and $q_3$ representing integers $2$ and $3$ and the operator $+$, we generate a new state $q_5$ (for integer 5) by applying the transition $+(q_2, q_3) \rightarrow q_5$. Since the accepting states of the FTA are those that satisfy the specification, the language of the constructed automaton includes exactly those programs that are consistent with the specification.

As illustrated by the above discussion, such an FTA-based synthesis method is \emph{bottom-up} in that it evaluates complete sub-expressions on the input and combines the values of these sub-expressions to generate new values. However, prior work cannot deal with recursive functions because it is not possible to evaluate a function that has not yet been synthesized. For example, consider the recursive call to \lstinline{right_spine(r)} where $r$ has value $T_1^R$ in our running example. Since \lstinline{right_spine} has not yet been synthesized, we simply do not know what \lstinline{right_spine} will return on $T_1^R$. 

To deal with this challenge, our angelic synthesizer assumes that the result of the recursive call could be any value that is consistent with the specification. In particular, given a specification $\varphi$ and  FTA states $q_1, \ldots, q_n$, we assume that a recursive  invocation expression \lstinline{f}($\overline{q}$) could evaluate to any $q_i$ as long as $q_i$ is consistent with $\varphi$. For instance, Figure~\ref{fig:example-fta} shows an FTA with states $\set{T_1,T_1^R,1,[0],[1;0]}$ for the angelic synthesis problem for Eq.~\ref{eq:spec2b}. Here, there is a transition \lstinline{right_spine}$(T_1^R) \rightarrow [0]$ since the call \lstinline{right_spine}$(T_1^R) = [0]$ is consistent with Eq.~\ref{eq:spec2b}. {Note that edges in Figure~\ref{fig:example-fta} correspond to program syntax and \lstinline{[1,0]} is an accepting state, so the program \lstinline{right_spine(x) = val(x)::right_spine(right_child(x)))} is accepted by this FTA.}

As illustrated by this discussion, the use of angelic semantics allows us to construct a bottom-up tree automaton despite not knowing what the recursive invocation will return on a given input. However, an obvious ramification of this is that programs accepted by the automaton may not satisfy the specification under the true semantics, which is why our method combines angelic synthesis with specification strengthening and backtracking search, as described in Section~\ref{sec:overview-high}.

\subsection{Incremental Synthesis} 

As we saw from Equations~\ref{eq:init-spec},~\ref{eq:spec2b},  and~\ref{eq:spec3b}  from Section~\ref{sec:overview-high}, successive calls to the synthesis algorithm involve increasingly strong specifications. In particular, if the synthesis algorithm is invoked on specification $\varphi$ in the $i$'th iteration, then the specification in the $i+1$'th iteration is of the form $\varphi \land \psi$. We exploit this incremental nature of the algorithm to make angelic synthesis more efficient.

In particular, recall that our angelic synthesizer based on FTAs constructs a different FTA for each input and then takes their intersection. Thus, given a specification $\varphi \land \psi$ where $\varphi$ is the old specification, we can simply construct a new FTA for $\psi$ and then take its intersection with the old FTA for $\varphi$. Hence, performing angelic synthesis using FTAs allows us to reuse all the work from prior iterations.

\subsection{Generalization to Arbitrary Logical Specs}
\label{sec:logical-general}

In our example so far, we illustrated the synthesis algorithm on the simple input-output examples from Eq.~\ref{eq:init-spec}. However, our method can be generalized to more complicated logical specifications using the standard counterexample-guided inductive synthesis (CEGIS) paradigm. In particular, since our core synthesis algorithm takes as input \emph{ground formulas} (defined in Section~\ref{sec:problem-statement}) as opposed to input-output examples, it can be easily incorporated within the CEGIS loop to handle more general logical specifications. For instance, our method can produce the correct implementation of \lstinline{right_spine} given the following logical specification:
\[
  \phi(in,out) := \mlstinline{no_left_subchildren}(in) \Rightarrow (\mlstinline{tree_size}(in) == \mlstinline{list_size}(out))
\]
where \lstinline{tree_size}  and \lstinline{list_size} return the number of elements in a tree and list respectively,  and \lstinline{no_left_subchildren} returns true if the left child of every node in the tree is a leaf.

\section{Problem Statement}
\label{sec:problem-statement}

\begin{figure}
  \begin{tabularx}{0.4\textwidth}{@{}r@{\ }c@{\ }c@{\ }l}
    $P$ &        &\GEq{} & {\lstinline|rec f($x$) = $\ e$|} \\
    \end{tabularx}\\
  \begin{tabularx}{0.4\textwidth}{@{}r@{\ }c@{\ }c@{\ }l@{\hspace*{2em}}c@{\ \ \ }l}
    $e$ &        &\GEq{} & $x$\\
        &        & | & {\lstinline|$e_1\ e_2$|} & | & {\lstinline|unit|}\\
        &        & | & {\lstinline|inl $\ e$|} & | & {\lstinline|inr $\ e$|} \\
        &        & | & {\lstinline|unl $\ e$|} & | & {\lstinline|unr $\ e$|}\\
        &        & | & {\lstinline|fst $\ e$|} & | & {\lstinline|snd $\ e$|}\\
        &        & | & {\lstinline|($e_1,\ e_2$)|} & | & {\lstinline|switch $\ e_3$ on inl _|\ $\rightarrow e_1$\ \lstinline{| inr _}\ $\rightarrow e_2$}\\
  \end{tabularx}\\
  \begin{tabularx}{0.4\textwidth}{@{}r@{\ }c@{\ }c@{\ }l@{\hspace*{2em}}c@{\ \ \ }l}
    $v$ &        &\GEq{} & {\lstinline|unit|} & | & {\lstinline|($v_1,\ v_2$)|}\\
        &        & | & {\lstinline|inl $\ v$|} & | & {\lstinline|inr $\ v$|}\\
  \end{tabularx}\\
\caption{A functional ML-like language with explicit recursion in which we synthesize programs. The nonterminal $P$ denotes programs in this language, and the nonterminal $v$ denotes values in this language.}
\label{fig:lang}
\end{figure}

In this section, we present our problem statement, which  is synthesizing recursive programs in a simple ML-like language  with products and sums (see Figure~\ref{fig:lang}). Without loss of generality, we assume that programs take a single input, as we can represent multiple inputs using tuples (i.e., pairs with nested pairs). Given a program $P$ and a concrete input $v$, we use the notation $\SemanticsOf{P}(v)$ to denote the result of executing $P$ on input $v$ according to the semantics presented in Figure~\ref{fig:standard-semantics}.

\begin{figure}
\begin{mathpar}
    \inferrule
    {
      e_2 \Downarrow v_2\\
      e_1[\hbox{\zlstinline!rec f!$(x)=e_1$}/\hbox{\zlstinline!f!},v_2/x] \Downarrow v_3\\
    }
    {
      (\hbox{\zlstinline!rec f!$(x)=e_1$})\ e_2 \Downarrow v_3
    }\\
    
    \inferrule
    {
    }
    {
      \hbox{\zlstinline!unit!} \Downarrow \hbox{\zlstinline!unit!}
    }
    
    \inferrule
    {
      e_1 \Downarrow v_1\\
      e_2 \Downarrow v_2\\
    }
    {
      (e_1,e_2) \Downarrow (v_1,v_2)
    }
    
    \inferrule
    {
      e \Downarrow (v_1,v_2)
    }
    {
      \hbox{\zlstinline!fst $\;e \Downarrow v_1$!}
    }
    
    \inferrule
    {
      e \Downarrow (v_1,v_2)
    }
    {
      \hbox{\zlstinline!snd $\;e \Downarrow v_2$!}
    }\\
    
    \inferrule
    {
      e \Downarrow v
    }
    {
      \hbox{\zlstinline!inl! $e \Downarrow$ \zlstinline!inl! $v$}
    }
    
    \inferrule
    {
      e \Downarrow v
    }
    {
      \hbox{\zlstinline!inr! $e \Downarrow$ \zlstinline!inr! $v$}
    }
    
    \inferrule
    {
      e \Downarrow \hbox{\zlstinline!inl!}\ v
    }
    {
      \hbox{\zlstinline!unl!}\ e \Downarrow v
    }
    
    \inferrule
    {
      e \Downarrow \hbox{\zlstinline!inr!}\ v
    }
    {
      \hbox{\zlstinline!unr!}\ e \Downarrow v
    }
    
    \inferrule
    {
      e_3 \Downarrow \hbox{\zlstinline!inl!}\ v_3\\
      e_1 \Downarrow v_1
    }
    {
      \hbox{\zlstinline!switch! $e_3$ \zlstinline!on inl! $\_ \to e_1$ \zlstinline!inr! $\_ \to e_2$}
        \Downarrow v_1
    }
    
    \inferrule
    {
      e_3 \Downarrow \hbox{\zlstinline!inr!}\ v_3\\
      e_2 \Downarrow v_2
    }
    {
      \hbox{\zlstinline!switch! $e_3$ \zlstinline!on inl! $\_ \to e_1$ \zlstinline!inr! $\_ \to e_2$}
        \Downarrow v_2
    }
\end{mathpar}
\caption{Program Semantics. The symbols $e$ range over expressions and $v$ range over values. Both $f$ and $x$ denote arbitrary free variables.
If $P=\mlstinline{rec f}(x)=e$
and $P\ v \Downarrow v'$
then $\SemanticsOf{P}(v) = v'$.}
\label{fig:standard-semantics}
\end{figure}

Our goal in this paper is to synthesize a \emph{single} recursive procedure {\tt f} from a given specification, which is represented as a \emph{ground formula} $\Spec$. We assume that $\Spec$ always contains a special uninterpreted function symbol $f$ which refers to the function to be synthesized. More formally, we define \emph{ground specifications} as follows:

\begin{definition}{\bf (Ground specification)}
A \emph{ground specification} is a boolean combination of atomic formulas of the form $f(i) \opr c$ where $f$ denotes the function to be synthesized, $i$ and $c$ are constants (with $i$ being the input), and $\opr$ is a binary relation.
\end{definition}

\begin{definition}{\bf (Satisfaction of ground spec)}
Given a program $P$ defining function $f$, we say that $P$ \emph{satisfies} a ground specification $\Spec$, denoted $P \models \Spec$, iff the following condition holds:
%
%
\[
P \models \Spec \ \ \Longleftrightarrow \ \  \models \Spec\big [\SemanticsOf{P}(x) / f(x) \big ]
\]
where the notation  $\Spec[\SemanticsOf{P}(x) / f(x)]$ denotes the formula \Spec{} with every ground term  $f(v_i)$ is replaced by $\SemanticsOf{P}(v_i)$.



\end{definition}

\vspace{5pt}
\noindent
\fbox{\parbox{.95\linewidth}{
	{\bf Problem Statement}: Given a ground specification $\Spec$, find program $P$ such that $P \models \Spec$.
  }
}
\vspace{5pt}

Note that ground specifications are quite powerful: synthesizers that can generate programs from ground specifications can also perform synthesis from a number of specification classes.
For example we can always encode I/O examples as ground formulas. but not vice versa.
\begin{example}
Consider the set of input-output examples $\{1 \mapsto 2, 2 \mapsto 3 \}$. We can encode this specification in our format using the ground formula $f(1)= 2 \land f(2) = 3$. In general, I/O examples correspond to specifications of the form:
\[
\bigwedge_k f(i_k) = o_k
\]
where $(i_k, o_k)$ are input-output pairs.
\end{example}

Furthermore, we can also lift  synthesis from ground specifications to an even more general class of specifications  using the well-known CEGIS paradigm (see Figure~\ref{fig:cegis}). Given a fixed set of inputs $I$ and a general predicate $\varphi$ with variables (representing inputs), one can convert this into a ground formula of the form $\bigwedge_{i \in I}\varphi(i)$ where each $i$ a counterexample returned by the verifier. Since the CEGIS paradigm invokes the verifier to add new counterexamples if the synthesized program is not correct, 
synthesis from ground formulas immediately provides a way to perform synthesis from more general logical specifications. 
\begin{example}
Consider the problem of synthesizing a function that returns a value greater than its input for all positive inputs. The specification for such a function is of the form $x>0 \Rightarrow f(x) > x$. While this specification is not a ground formula, we can embed our synthesis technique into the CEGIS paradigm and reduce it to inductively synthesizing programs from ground specifications of the form:
\[
\bigwedge_k f(i_k) > i_k
\]
where each $i_j$ is a positive integer returned as a counterexample by a verifier.
\end{example}

\begin{figure}[t]
    \centering
    \includegraphics[scale=0.3]{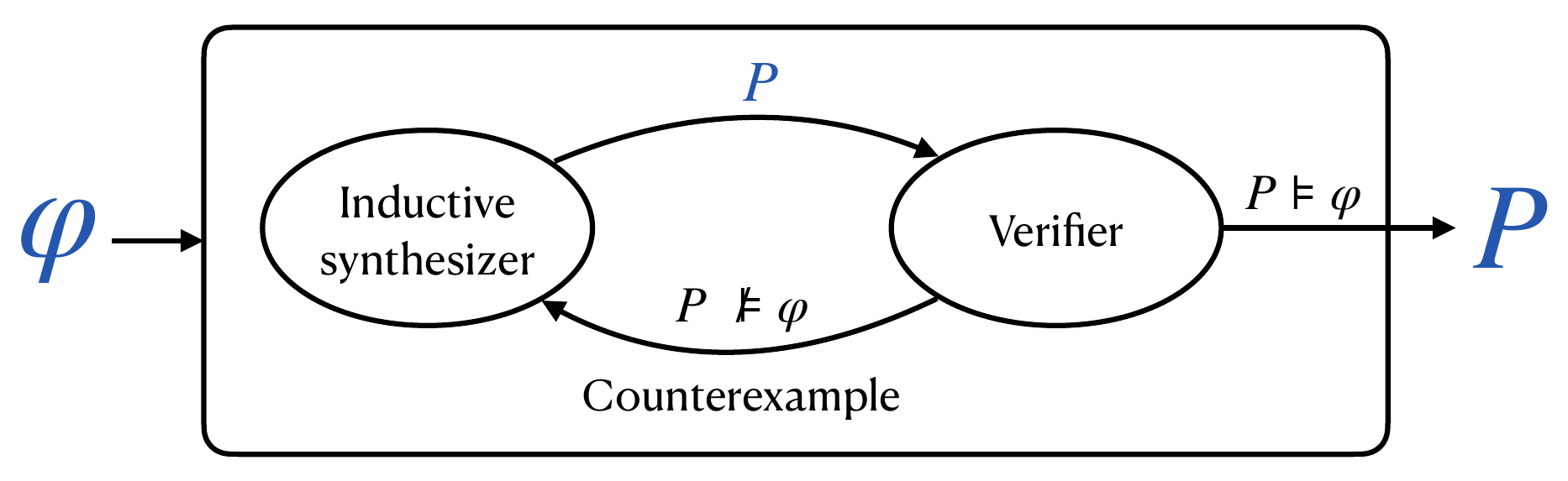}
    \caption{Counterexample-guided inductive synthesis. Since the input to the inductive synthesizer is a ground formula, our approach can be lifted to a general class of specifications using the CEGIS paradigm.}
    \label{fig:cegis}
\end{figure}

\section{Angelic Recursion}

As mentioned earlier, our method is based on bottom-up synthesis, which requires the ability to execute sub-expressions of the program being synthesized. Since this is not feasible for recursive procedures, we introduce the notion of \emph{angelic recursion} and \emph{angelic satisfaction}. 

\begin{definition}{\bf (Angelic recursion)} Given a recursive procedure $P$, the \emph{angelic semantics} of $P$ with respect to specification $\Spec$, denoted $\SemanticsOf{P}^\Spec$, is defined in Figure~\ref{fig:angelic-semantics}. These semantics are very similar to the semantics in Figure~\ref{fig:standard-semantics}; the key difference lies in how recursion is performed. When performing a recursive call \lstinline{f($v$)}, the result can be any $v'$ where $f(v) = v'$ is consistent with the specification. Thus, $\SemanticsOf{P}^\Spec(v)$ yields a \emph{set} of values $\Values$.
\end{definition}

\begin{figure}
\begin{mathpar}
    \inferrule
    {
      e \Downarrow^\varphi v\\
      \mathsf{SAT}(f(v) = v' \wedge \varphi)
    }
    {
      \hbox{\zlstinline!f!}\ e \Downarrow^\varphi v'
    }
    
    \inferrule
    {
    }
    {
      \hbox{\zlstinline!unit!} \Downarrow^\varphi \hbox{\zlstinline!unit!}
    }
    
    \inferrule
    {
      e_1 \Downarrow^\varphi v_1\\
      e_2 \Downarrow^\varphi v_2\\
    }
    {
      (e_1,e_2) \Downarrow^\varphi (v_1,v_2)
    }\\
    
    \inferrule
    {
      e \Downarrow^\varphi (v_1,v_2)
    }
    {
      \hbox{\zlstinline!fst $\;e \Downarrow^\varphi v_1$!}
    }
    
    \inferrule
    {
      e \Downarrow^\varphi (v_1,v_2)
    }
    {
      \hbox{\zlstinline!snd $\;e \Downarrow^\varphi v_2$!}
    }\\
    
    \inferrule
    {
      e \Downarrow^\varphi v
    }
    {
      \hbox{\zlstinline!inl! $e\Downarrow^\varphi$ \zlstinline!inl! $v$}
    }
    
    \inferrule
    {
      e \Downarrow^\varphi v
    }
    {
      \hbox{\zlstinline!inr! $e \Downarrow^\varphi$ \zlstinline!inr! $v$}
    }
    
    \inferrule
    {
      e \Downarrow^\varphi \hbox{\zlstinline!inl!}\ v
    }
    {
      \hbox{\zlstinline!unl!}\ e \Downarrow^\varphi v
    }
    
    \inferrule
    {
      e \Downarrow^\varphi \hbox{\zlstinline!inr!}\ v
    }
    {
      \hbox{\zlstinline!unr!}\ e \Downarrow^\varphi v
    }
    
    \inferrule
    {
      e_3 \Downarrow^\varphi \hbox{\zlstinline!inl!}\ v_3\\
      e_1 \Downarrow^\varphi v_1
    }
    {
      \hbox{\zlstinline!switch! $e_3$ \zlstinline!on inl! $x_1 \to e_1$ \zlstinline!inr! $x_2 \to e_2$}
        \Downarrow^\varphi v_1
    }\\
    
    \inferrule
    {
      e_3 \Downarrow^\varphi \hbox{\zlstinline!inr!}\ v_3\\
      e_2 \Downarrow^\varphi v_2
    }
    {
      \hbox{\zlstinline!switch! $e_3$ \zlstinline!on inl! $x_1 \to e_1$ \zlstinline!inr! $x_2 \to e_2$}
        \Downarrow^\varphi v_2
    }\\
    
    \inferrule
    {
      P = \hbox{\zlstinline|rec f|}(x) = e\\
      e[v/x] \Downarrow^\varphi v'
    }
    {
      v' \in \SemanticsOf{P}^\Spec(v)
    }
\end{mathpar}
\caption{Angelic Semantics. The key difference between angelic semantics and standard semantics lies in the first rule, for recursive calls.}
\label{fig:angelic-semantics}
\end{figure}

Intuitively, angelic recursion is useful in our setting because it allows us to ``execute" a recursive program without knowing the exact behavior of recursive calls.

\begin{example}
Let $P$ be the program \lstinline{rec f(x) = if x=0 then 1 else f(x-1)} and suppose that $\Spec = f(0) > 0$. Then, $\SemanticsOf{P}^\Spec(0) = \set{1}$, and $\SemanticsOf{P}^\Spec(1) = \set{y \mid y > 0}$. In particular, for input $1$, $f$ contains a recursive invocation on input $0$, and the angelic semantics allows the recursive call to return any value greater than $0$. Thus, $\SemanticsOf{P}^\Spec(1)$ is exactly the set of positive integers.

\end{example}

Next, we define a notion of angelic satisfaction:

\begin{definition}{\bf (Angelic satisfaction on input)}
Given a program $P$ defining function $f$, we say that $P$ \emph{angelically satisfies} specification $\Spec$ on input $v$, denoted $P \angelicmodels_v \Spec$, iff the following condition holds:
\[
P \angelicmodels_v \Spec \ \Longleftrightarrow \  \exists v'. \ v' \in \SemanticsOf{P}^\Spec(v) \land \mathsf{SAT}(f(v) = v' \wedge \varphi)
\]
\end{definition}

Next, we generalize this notion of angelic satisfaction from a single input to all inputs:

\begin{definition}{\bf (Angelic satisfaction)}
A program $P$ \emph{angelically satisfies} specification $\Spec$, denoted $P \angelicmodels \Spec$,  iff for all possible inputs $v$, we have  $P \angelicmodels_v \Spec$.
\end{definition}

Note that angelic satisfaction ($P \angelicmodels \Spec$) is a much weaker notion than standard satisfaction ($P \models \Spec$).
This is illustrated by the following example:

\begin{example}
Let $P$ be the program \lstinline{rec f(x) = if x=0 then 1 else f(x-1)} and suppose that $\Spec = f(0) > 0 \land f(1) > 1$. Then, $P \angelicmodels \Spec$, as $\forall x. \  x+1 \in \SemanticsOf{P}^\Spec(x)$. However, clearly, this program does not satisfy $\Spec$ with respect to the standard semantics because we have $\SemanticsOf{P}(1) = 1$. 
\end{example}

If a program $P$ angelically satisfies a specification $\Spec$, we can define a \emph{witness} to angelic satisfaction as follows:

\begin{definition}{\bf (Witness to angelic satisfaction)}\label{def:witness}
Let  $P$ be a program  such that $P \angelicmodels \Spec$. Then, a \emph{witness} $\witness$  to angelic satisfaction of $P$ is a formula $\bigwedge_i f(c_i) = c_i'$ such that, if $P \models \witness$, then $P \models \Spec$. 
\end{definition}

Intuitively, a \emph{witness} to angelic satisfaction specifies what the recursive calls in $P$ must return in order for $P$ to actually satisfy the specification. We discuss how to find these witnesses in Section~\ref{subsec:find-witness}.


\begin{example}
Consider the program \lstinline{rec f(x) = if x=0 then 1 else f(x-1)+1} and  the ground specification  $f(1) > 1 \land f(2) > 2$. This program angelically satisfies the specification in an execution where the recursive call returns $f(0) = 1$ and $f(1) = 2$. Thus, $f(0) = 1 \land f(1) = 2$ is a witness to angelic satisfaction. Of course, note that angelic witnesses are not unique. For example, $f(0) = 2 \land f(1) = 3$ is also a witness to angelic satisfaction.  
\end{example}

\section{ Synthesis Algorithm using Angelic Execution}\label{sec:top-level}

\begin{algorithm}[!t]
\begin{algorithmic}[1]
\Statex {\bf input:} Ground specification \groundSpec
\Statex {\bf output:} A program $P$ or $\bot$ 
\Statex {\bf global:} $\Omega$ is a learned \emph{anti-specification}, initially $\emptyset$
\Procedure{Synthesize}{\groundSpec}
\State $\text{result} \gets \Call{SynthesizeAngelic}{\groundSpec \land \bigwedge_{\phi_i \in \Omega} \neg\phi_i }$ \label{line:angelic}
\Match{$\text{result}$}
\Case{$\mathsf{Failure}(\kappa)$}{}
\State $\Omega \gets \Omega \cup \kappa$ \label{line:failure1}
\State \Return $\bot$
\EndCase
\Case{$\mathsf{Success}(P,\witness)$}{}
\If{$P \models \groundSpec$} \Return $P$ \label{line:success1}
\Else
\State $\mathit{P} \gets \Call{Synthesize}{\groundSpec \land \witness}$ \label{line:witness}
\If{$\mathit{P} = \bot$}
\Return \Call{Synthesize}{$\groundSpec \land \neg \witness$} \label{line:failure2}
\Else \
\Return $\mathit{P}$ \label{line:success2}
\EndIf
\EndIf
\EndCase
\EndMatch
\EndProcedure
\end{algorithmic}
\caption{Core Recursive Synthesis Algorithm}
\label{alg:core-recursive-synthesis}
\end{algorithm}

In this section, we describe our top-level synthesis algorithm based on angelic recursion. While this synthesis algorithm does not specify whether to construct programs in a top-down or bottom-up fashion, we emphasize that it is the use of \emph{angelic recursion} that makes it possible to implement its key components using a bottom-up approach (as we discuss in the next section).

\autoref{alg:core-recursive-synthesis} shows the high-level structure of our synthesis algorithm. The procedure {\sc Synthesize} takes as input a ground specification $\groundSpec$  and returns either a program $P$ or $\bot$ to indicate that synthesis is unsuccessful. Internally, the algorithm also maintains a global variable, namely set $\Omega$, that we refer to as an \emph{anti-specification} which is used for pruning the search space. In particular, $\Omega$ is constructed in such a way that any program that satisfies $\phi \in \Omega$ is guaranteed to \emph{not} satisfy the desired specification $\chi$, i.e.:
\begin{equation}\label{eq:anti-spec}
\forall P. \ \forall \phi \in \Omega. \ P \models \phi \Rightarrow P \not \models \chi
\end{equation}
Since the contrapositive of \autoref{eq:anti-spec} is
$$\forall P. \ \forall \phi \in \Omega. \ P \models \chi \Rightarrow P \not\models \phi$$
which implies that $P$ must satisfy 
$\bigwedge_{\phi_i \in \Omega} \neg \phi_i$ in order to also satisfy $\chi$, we can use $\Omega$ to construct a stronger specification and thereby reduce  the search space.

Our synthesis procedure starts by invoking a procedure call {\sc SynthesizeAngelic} (line \ref{line:angelic}) which takes as input a specification that the returned program must satisfy under the angelic semantics. In particular, given a (ground) specification $\varphi$, {\sc SynthesizeAngelic} either returns failure or a program $P$ that \emph{angelically} satisfies $\varphi$ (i.e., $P \angelicmodels \varphi$). If the output is failure (meaning that there is no program in the search space that satisfies $\varphi$), {\sc SynthesizeAngelic} also returns an anti-specification (i.e., set of formulas) $\kappa$ that serves as an ``explanation" of why angelic synthesis failed. In particular, $\kappa$ has the property, for every $\psi \in \kappa$, there is no program in the search space that satisfies $\psi$. Thus, if {\sc SynthesizeAngelic} returns  $\mathsf{Failure}(\kappa)$, we  add $\kappa$ to  $\Omega$ (line \ref{line:failure1}).

In the extended example shown in Section~\ref{sec:extended-example}, there was a failure in Iteration 2a. Our underlying synthesizer would identify that this failure was due to the constraints $(\mlstinline{right_spine}(T_2) = [2]$ and $\mlstinline{right_spine} (T_2^R) = [2])$. Including this anti-specification would yield the following stronger specification for Iteration 2b:

\begin{equation}
\label{eq:spec2bprime}
\begin{array}{rcl}
\mlstinline{right_spine}(T_1) = [1;0] & & 
\mlstinline{right_spine}(T_2) = [2] \\[4pt]
\neg(\mlstinline{right_spine}(T_1^R) = [1;0] & \wedge & 
\mlstinline{right_spine}(T_2^R) = [2]) \\[4pt]
\neg(\mlstinline{right_spine}(T_2) = [2] & \wedge & 
\mlstinline{right_spine}(T_2^R) = [2])
\end{array}
\end{equation}

If {\sc SynthesizeAngelic} returns a program $P$, our synthesis procedure checks whether $P$ satisfies the specification $\groundSpec$ under the true semantics (line \ref{line:success1}). If so, then it  returns $P$ as a valid solution to the synthesis problem. Otherwise, it uses the witness $\witness$ to angelic satisfaction returned by {\sc SynthesizeAngelic} to construct a stronger specification. In particular, recall that such a witness $\witness$ encodes assumptions that an angelic execution makes in order to satisfy the specification. Thus, we strengthen the specification as $\chi \land \witness$ and  re-attempt synthesis by recursively invoking {\sc Synthesize} on this stronger specification (line \ref{line:witness}). If synthesis is successful, we return the resulting program as a solution (line \ref{line:success2}); otherwise, we backtrack and recursively invoke {\sc Synthesize} with the alternative specification $\chi \land \neg \witness $ (line \ref{line:failure2}), which ends up ruling out $\witness$ from the search space.  Observe that  the anti-specification $\Omega$ also grows during the recursive calls; thus, the second recursive call at line \ref{line:failure2} actually prunes more programs than just those satisfying $\witness$.


The following theorems state the soundness and completeness of our synthesis algorithm.

\begin{theorem}{\bf (Soundness)}\label{thm:sound}
If {\sc Synthesize}($\chi$) returns a program $P$, then we have $P \models \chi$.
\end{theorem}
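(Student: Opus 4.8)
The plan is to prove soundness by structural induction on the recursion tree of the call {\sc Synthesize}$(\chi)$, i.e., by induction on the depth of the recursive invocations. The statement to establish is: whenever {\sc Synthesize}$(\chi)$ returns a non-$\bot$ program $P$, then $P \models \chi$. Note that the global anti-specification $\Omega$ is only used to \emph{strengthen} the argument passed to {\sc SynthesizeAngelic} and never weakens what we must prove about the returned program, so it can largely be ignored for soundness (it matters for completeness, not here).

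First I would identify the only three places where {\sc Synthesize}$(\chi)$ can return a non-$\bot$ value: line~\ref{line:success1}, line~\ref{line:success2}, and line~\ref{line:failure2} (the latter two being returns of the values produced by recursive calls). The base case is line~\ref{line:success1}: here the algorithm returns $P$ only after explicitly checking $P \models \chi$, so the conclusion is immediate. For line~\ref{line:failure2}, the returned value is whatever {\sc Synthesize}$(\chi \wedge \neg\witness)$ returns; by the inductive hypothesis that call, if it returns a program $P'$, guarantees $P' \models \chi \wedge \neg\witness$, which in particular entails $P' \models \chi$, as needed. The genuinely interesting case is line~\ref{line:success2}, where $P$ is returned by the recursive call {\sc Synthesize}$(\chi \wedge \witness)$ on line~\ref{line:witness}. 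By the inductive hypothesis, $P \models \chi \wedge \witness$, hence $P \models \chi$ directly — so even this case is handled by the same observation that the recursive calls are always made on specifications that \emph{entail} $\chi$.

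The main obstacle, and the part deserving the most care, is making precise the claim that {\sc SynthesizeAngelic} together with the $P \models \chi$ check on line~\ref{line:success1} is the only gateway through which a program escapes, and that the witness $\witness$ returned alongside a $\mathsf{Success}(P,\witness)$ result plays no role in soundness beyond being conjoined into a \emph{stronger} specification for the recursive call. In other words, the key structural lemma is: every argument ever passed to a nested {\sc Synthesize} call logically implies the original $\chi$ (each step either conjoins $\witness$ or conjoins $\neg\witness$, both of which only strengthen). Combined with the fact that the only non-recursive return (line~\ref{line:success1}) is guarded by an explicit semantic check, induction closes the argument. I should also note termination is not required for soundness — the theorem is conditional on {\sc Synthesize}$(\chi)$ actually returning — so I do not need to argue that the recursion bottoms out; I only need that each returning computation is finite, which is guaranteed by the fact that we are reasoning about a terminated call.

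A subtle point worth spelling out: the soundness of {\sc SynthesizeAngelic} itself (that a $\mathsf{Success}(P,\witness)$ really does come with a valid witness in the sense of Definition~\ref{def:witness}) is \emph{not} needed for Theorem~\ref{thm:sound}, because we never trust the witness — line~\ref{line:success1} re-verifies $P \models \chi$ under the standard semantics before any angelic-synthesis result is returned directly, and recursive calls only ever return programs that were themselves verified (by the same inductive argument). Thus the proof reduces entirely to: (i) line~\ref{line:success1} returns only verified programs; (ii) lines~\ref{line:success2} and~\ref{line:failure2} return only programs that, by the inductive hypothesis on a strengthened spec, satisfy $\chi$. I would present this as a short induction with the three cases above, remarking explicitly that $\Omega$ and the angelic witness are irrelevant to correctness of what is returned.
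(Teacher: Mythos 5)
Your proof is correct and takes essentially the same approach as the paper, which simply observes that soundness ``follows directly from line~\ref{line:success1}'': the explicit check $P \models \chi$ is the only gateway through which a program is returned, and recursive calls only ever strengthen the specification. Your induction merely spells out the details that the paper's one-line proof elides.
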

\begin{proof}
Follows directly from line \ref{line:success1} of Algorithm~\ref{alg:core-recursive-synthesis}.
\end{proof}
\def\SynthesizeAngelic{{\sc SynthesizeAngelic}}
\begin{theorem}{ \bf (Completeness)}\label{thm:completeness}
If $\Synthesize(\chi)$ returns $\bot$, then there is no program that satisfies $\chi$ under the assumption that (1) \SynthesizeAngelic is complete, and (2) if \ \SynthesizeAngelic \ returns $\mathsf{Failure}(\kappa)$, then $\kappa$ satisfies the assumption from \autoref{eq:anti-spec}.
\end{theorem}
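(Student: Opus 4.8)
The plan is to prove the contrapositive-flavored statement by induction on the recursion depth of {\sc Synthesize}, showing that every $\bot$ returned by {\sc Synthesize}($\chi$) is "justified" in the sense that no program in the search space satisfies $\chi$. There are exactly two places where {\sc Synthesize} can return $\bot$: line~\ref{line:failure1} (after {\sc SynthesizeAngelic} reports $\mathsf{Failure}(\kappa)$) and line~\ref{line:failure2} (after both recursive calls, on $\chi \land \witness$ and on $\chi \land \neg\witness$, return $\bot$). So I would set up the induction on the depth of the call tree and handle these two cases.

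For the base case — return at line~\ref{line:failure1} — {\sc SynthesizeAngelic} was called on $\chi' = \chi \land \bigwedge_{\phi_i \in \Omega}\neg\phi_i$ and returned $\mathsf{Failure}(\kappa)$. By assumption (1), completeness of {\sc SynthesizeAngelic}, this means no program in the search space angelically satisfies $\chi'$; since ordinary satisfaction implies angelic satisfaction ($P \models \psi \Rightarrow P \angelicmodels \psi$, which I would note follows by comparing Figures~\ref{fig:standard-semantics} and~\ref{fig:angelic-semantics} — the standard recursive return value is always among the angelically permitted ones), no program actually satisfies $\chi'$ either. It remains to argue that no program satisfies $\chi$ itself, i.e. that strengthening by $\bigwedge_{\phi_i \in \Omega}\neg\phi_i$ lost nothing. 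This is exactly the invariant maintained by $\Omega$: by \autoref{eq:anti-spec}, every $\phi_i \in \Omega$ has the property $P \models \phi_i \Rightarrow P \not\models \chi$, so any $P$ with $P \models \chi$ must satisfy all the $\neg\phi_i$ and hence $\chi'$; contrapositively, unsatisfiability of $\chi'$ gives unsatisfiability of $\chi$. This reduces the base case to establishing that \autoref{eq:anti-spec} is a genuine invariant of $\Omega$ throughout the whole execution — which follows from assumption (2) for the $\kappa$ added at line~\ref{line:failure1}, together with the fact (used again in the inductive step) that every $\witness$ and $\neg\witness$ fed into recursive calls preserves the satisfiability frontier.

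For the inductive step — return at line~\ref{line:failure2} — we know {\sc SynthesizeAngelic} returned $\mathsf{Success}(P,\witness)$ but $P \not\models \chi$, and both {\sc Synthesize}($\chi \land \witness$) and {\sc Synthesize}($\chi \land \neg\witness$) returned $\bot$. By the induction hypothesis, no program satisfies $\chi \land \witness$ and none satisfies $\chi \land \neg\witness$. Since for any program $P'$ either $P' \models \witness$ or $P' \models \neg\witness$ (as $\witness$ is a ground conjunction $\bigwedge_i f(c_i) = c_i'$ and $P'$ is deterministic, so $\SemanticsOf{P'}(c_i)$ is a definite value, making $\witness$ either true or false under $P'$ — here I would invoke the definition of satisfaction of a ground spec), any $P'$ with $P' \models \chi$ would have to satisfy one of $\chi \land \witness$ or $\chi \land \neg\witness$, contradiction. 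Hence no program satisfies $\chi$, as required. I also need to check that the global $\Omega$ does not get corrupted during these recursive calls: the only writes to $\Omega$ happen at line~\ref{line:failure1}, adding some $\kappa'$ returned by a failing {\sc SynthesizeAngelic}, and assumption (2) guarantees each such $\kappa'$ respects \autoref{eq:anti-spec} relative to the specification active at that call; since those calls are on specifications at least as strong as $\chi$, a program ruled out there is ruled out for $\chi$ too — so strengthening the current spec by $\bigwedge_{\phi_i \in \Omega}\neg\phi_i$ at line~\ref{line:angelic} never removes a program satisfying $\chi$.

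The main obstacle I anticipate is the bookkeeping around the \emph{global} mutable $\Omega$: the clean statements above implicitly assume that whenever line~\ref{line:angelic} strengthens by $\bigwedge_{\phi_i \in \Omega}\neg\phi_i$, no solution to the \emph{current} recursive subgoal is discarded. This requires a monotonicity observation — every recursive call is made on a specification that logically entails the original $\chi$ of that subtree, and entails $\chi$ of every ancestor — so that a program excluded by some $\phi_i$ (which was justified against some ancestor's spec, itself entailing $\chi$) is indeed not a solution to $\chi$. Making this precise means strengthening the induction hypothesis to something like: "for every call {\sc Synthesize}($\chi'$) with $\chi' \models \chi_0$ for the outermost goal $\chi_0$, if it returns $\bot$ then no program satisfies $\chi'$, and moreover \autoref{eq:anti-spec} holds of $\Omega$ with respect to $\chi_0$ at every point." Everything else is routine case analysis on the \texttt{match} in Algorithm~\ref{alg:core-recursive-synthesis}.
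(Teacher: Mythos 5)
Your proof is correct and follows essentially the same route as the paper's: both arguments hinge on (i) the implication $P \models \varphi \Rightarrow P \angelicmodels \varphi$ (which the paper establishes as Lemma~\ref{lemma:semantics} via an auxiliary \Respects relation, doing the work behind your parenthetical remark that the standard return value of a recursive call is angelically permitted when $P \models \varphi$), (ii) the anti-specification invariant for $\Omega$, and (iii) the dichotomy that every program satisfies exactly one of $\witness$ and $\neg\witness$. The paper organizes this as a contradiction from an assumed solution $S \models \chi$ rather than an explicit induction on recursion depth, and is in fact terser than you are about why strengthening by $\bigwedge_{\phi_i\in\Omega}\neg\phi_i$ at line~\ref{line:angelic} discards no solution of $\chi$.
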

\begin{proof}
  The proof is in
  \ifappendices
  Section~\ref{sec:proofs}.
  \else the full version of the paper~\cite{burst-full}.
  \fi
\end{proof}

\paragraph{Remark.} A simpler alternative to the specification strengthening approach in Algorithm~\ref{alg:core-recursive-synthesis} would be to perform enumerative search over the angelic synthesis results as opposed to strengthening the specification. However, as we show empirically in Section~\ref{sec:evaluation}, this simpler alternative is not  as effective. In particular,  many programs that angelically satisfy the specification are wrong due to \emph{shared} incorrect assumptions about recursive calls; thus, our proposed algorithm allows ruling out many incorrect programs at the same time.

\section{Bottom-up Angelic Synthesis using Tree Automata}
\label{sec:tree-aut}

Recall that our top-level synthesis procedure (Algorithm~\ref{alg:core-recursive-synthesis}) uses a key procedure called {\sc SynthesizeAngelic} to find a program that satisfies the specification under the angelic semantics. In this section, we describe a realization of the angelic synthesis algorithm using bottom-up finite tree automata. Towards this goal, we first review tree automata basics and then describe the angelic synthesis algorithm.

\subsection{Tree Automata Preliminaries}
\label{subsec:ta-prelim}

A \emph{finite tree automaton} is a state machine that describes sets of trees~\cite{tata}. More formally, a finite tree automaton is defined as follows:


\begin{definition} {\bf (FTA)}
A bottom-up finite tree automaton (FTA) over alphabet $\alphabet$ is a tuple $\fta = (\ftastates, \finalstates, \transitions)$ where $\ftastates$ is the set of states, $\finalstates \subseteq \ftastates$ are the final states, and $\transitions$ is a set of transitions of the form
$\ell(\ftastate_1, \dots, \ftastate_n) \rightarrow \ftastate$
where $\ftastate, \ftastate_1, \dots, \ftastate_n \in \ftastates$ and $\ell \in \alphabet$.
\end{definition}

Following prior work~\cite{dace,blaze}, the alphabet $\alphabet$ in our context corresponds to constructs in the underlying programming language; FTA states correspond to a finite set of values (i.e., constants);  final states indicate values that satisfy a given specification; and transitions encode the semantics of the programming language. For instance,  a transition $+(1, 2) \rightarrow 3$ indicates that adding the integers $1$ and $2$ yields $3$. 

Since tree automata accept trees, we view each term over alphabet $\alphabet$ as a tree $T = (n, V, E)$ where $n$ is the root node, $V$ is a set of labeled vertices, and $E$ is the set of edges.
We say that a term $T$ is accepted by an FTA if we can rewrite $T$ to some state $q \in Q_f$ using transitions $\transitions$.
Finally, the language of a tree automaton $\fta$ is  denoted as $\mathcal{L}(\fta)$ and consists of the set of all terms accepted by~$\fta$.

\begin{example}
\label{ex:introfta}
Consider a tree automaton $\fta$ with states $\ftastates = \set{q_0, q_1}$, final states $\finalstates = \set{q_0}$, and the following transitions:
$$\transitions = \set{x() \rightarrow q_1, \mlstinline{xor}(q_i, q_i) \rightarrow q_0, \mlstinline{xor}(q_i, q_j) \rightarrow q_1 \text{ if } i \neq j}
$$
where $x$ has arity zero and $\mlstinline{xor}$ is a binary function.
$\fta$ accepts boolean equations combining \zlstinline{xor} and $x$, where the resulting boolean equation evaluates to \false{} when $x$ is initially \true{}.
\end{example}

Next, we define the notion of an \emph{accepting run} of an FTA:

\begin{definition} {\bf (Accepting run)}
An \emph{accepting run} of an FTA $\fta = (Q, Q_f, \transitions)$ is a pair $(T, L)$ where $T = (n, V, E)$ is a term that is accepted by $\fta$ and $L$ is a mapping from each node in $V$ to an FTA state such that the following conditions are satisfied:
\begin{enumerate}
    \item $L(n) \in Q_f$
    \item If $n$ has children $n_1, \ldots, n_k$ such that $L(n) = q$ and $L(n_1) = q_1, \ldots, L(n_k) = q_k$, then $\mathsf{Label}(n)(q_1, \ldots, q_k) \rightarrow q$ is a transition in $\transitions$.
\end{enumerate}
\end{definition} 

\begin{example}
\begin{figure}[!t]
    \centering
    \includegraphics[scale=0.6]{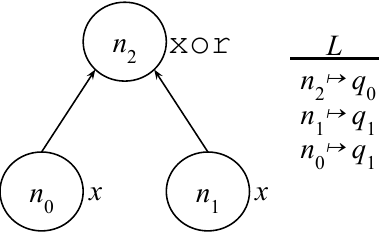}
    \caption{An example tree accepted by the automaton described in Example~\ref{ex:introfta}. The the tree and the associated mapping $L$ comprise an accepting run of the FTA.}
    \label{fig:example-norec}
\end{figure}
Figure~\ref{fig:example-norec} shows an accepting run $(T, L)$ over the FTA described in Example~\ref{ex:introfta}, where $T$ has nodes $\set{n_0,n_1,n_2}$ and edges from $n_0$ and $n_1$ to $n_2$. The labels of $n_0$ and $n_1$ are $x$ and the label of $n_2$ is \lstinline{xor}. This run is accepting since $L(n_2) = q_0 \in \finalstates$.
\end{example}

\subsection{Angelic Synthesis Algorithm}

\begin{algorithm}[!t]
\begin{algorithmic}[1]
\Statex {\bf input:} A ground specification $\groundSpec$
\Statex {\bf output:} $\mathsf{Success}(P,\witness)$ for program $P$ and angelic witness $\witness$; $\mathsf{Failure}(\kappa)$ for anti-specification $\kappa$
\Procedure{SynthesizeAngelic}{$\groundSpec$}
\State $\kappa \gets \emptyset$
\ForEach{$\varphi \in \mathsf{DNFClauses}(\groundSpec)$}

\State $\text{first} \gets \mlstinline{true}$
\State $\psi \gets \mlstinline{true}$
\ForEach{$f(v) \  \emph{op} \ v' \in \varphi$}
\State $\psi \gets \psi \wedge f(v) \  \emph{op} \ v'$
\State $\fta \gets \BuildAngelicFTA(v,\varphi)$
\If{$\text{first}$} $\fta^* \gets \fta$
\State $\text{first} \gets \mlstinline{false}$
\Else{} $\fta^* \gets \Intersect(\fta^*, \fta)$
\EndIf
\If{$\LanguageOf{\fta^*} = \emptyset$} \textbf{break}
\EndIf

\EndForEach
\If{$\LanguageOf{\fta^*} \neq \emptyset$}
\State $(P,L) \gets \GetAcceptingRun(\fta^*)$
\State $\witness \gets \GetWitness(P,L)$
\State \Return $\mathsf{Success}(P,\witness)$
\Else{} $\kappa \gets \kappa \cup \set{\psi}$
\EndIf
\EndForEach
\State \Return $\mathsf{Failure}(\kappa)$
\EndProcedure
\end{algorithmic}
\caption{Angelic synthesis procecdure based on tree automata}
\label{alg:angelic-synthesis}
\end{algorithm}

In this section, we describe an FTA-based implementation of the {\sc SynthesizeAngelic} procedure that is invoked at line 2 of Algorithm~\ref{alg:core-recursive-synthesis}. This procedure, which is summarized  in Algorithm~\ref{alg:angelic-synthesis}, takes as input a ground specification $\chi$ and returns one of two things: If angelic synthesis is successful, the output is a program $P$ that angelically satisfies $\chi$, together with a witness $\witness$ to angelic satisfaction. On the other hand, if there is no program that angelically satisfies $\chi$, {\sc SynthesizeAngelic} returns a set of ground formulas $\kappa$ that serve as an anti-specification satisfying Equation~\ref{eq:anti-spec}.

The algorithm starts by converting the specification $\chi$ to disjunctive normal form (DNF) at line 3 and iterates over each of the DNF clauses (line 3--17). If there is a program $P$ that angelically satisfies  \emph{any} clause $\varphi$, then it returns $P$ (and its corresponding witness $\witness$) as a solution (line 16). On the other hand, if it exhausts all clauses without successfully finding a program, the algorithm returns $\mathsf{Failure}$ at line 18.

In more detail, each clause $\varphi$ is a conjunction of atomic predicates of the form $f(v) \opr v'$ where $v$ and $v'$ are constants (since $\chi$ is a ground specification). The nested loop at lines 6--12 iterates over all of these predicates, builds an FTA $\fta$ for each of them (line 8), and constructs a version space $\fta^*$ satisfying \emph{all} of them by taking the intersection of all FTAs (line 10).  If the language of the resulting automaton $\fta^*$ becomes empty (line 12), then this means there is no program satisfying the current clause so the algorithm moves on to the next clause.

On the other hand, if the final version space $\fta^*$ is non-empty after processing an entire clause (line 13), then we know that there exists a program that satisfies this clause under the angelic semantics. In this case, the algorithm finds an accepting run of this FTA, extracts a witness $\witness$ to angelic satisfaction by calling the {\sc GetWitness} procedure at line 15, and returns ``success" at line~16. 

Finally, if the algorithm exhausts all DNF clauses without finding a program, it returns $\mathsf{Failure}$ at line 18. In particular, the anti-specification $\kappa$ at line 18 consists of a set of \emph{unsynthesizable cores} (UC), where each UC $\psi$ is a conjunction of predicates such that there is no program $P$ that angelically satisfies $\psi$. Thus, it is always safe to strengthen the specification using the negation of an unsynthesizable core.

In the remainder of this subsection, we discuss the {\sc BuildAngelicFTA} and {\sc GetWitness } procedures in more detail.

\subsubsection{FTA Construction using Angelic Semantics}

\begin{figure}
\begin{mathpar}
\inferrule[Init]
{
}
{
q_{v_{in}} \in \ftastates\\
\hbox{\zlstinline|x|}() \to q_{v_{in}} \in \transitions
}

\inferrule[Final]
{
q_v \in \ftastates\\
\mathsf{SAT}(\varphi \land f(v_{in}) = v)
}
{
q_v \in \finalstates
}

\inferrule[Unit]
{
}
{
q_{\hbox{\zlstinline[basicstyle=\scriptsize\ttfamily]|unit|}} \in \ftastates\\
\hbox{\zlstinline|unit|}() \to q_{\hbox{\zlstinline[basicstyle=\scriptsize\ttfamily]|unit|}} \in \transitions
}

\inferrule[Pair]
{
q_{v_1} \in \ftastates\\
q_{v_2} \in \ftastates\\
}
{
q_{\texttt{(}v_1\texttt{,}v_2\texttt{)}} \in \ftastates\\
\texttt{(}\cdot{}\texttt{,}\cdot{}\texttt{)}(q_{v_1},q_{v_2}) \to q_{\texttt{(}v_1\texttt{,}v_2\texttt{)}} \in \transitions
}

\inferrule[Fst]
{
q_{(v_1,v_2)} \in \ftastates\\
}
{
q_{v_1} \in \ftastates\\
\hbox{\zlstinline|fst|}(q_{v_1,v_2}) \to q_{v_1} \in \transitions
}

\inferrule[Snd]
{
q_{(v_1,v_2)} \in \ftastates\\
}
{
q_{v_2} \in \ftastates\\
\hbox{\zlstinline|snd|}(q_{(v_1,v_2)}) \to q_{v_2} \in \transitions
}

\inferrule[Inl]
{
q_v \in \ftastates\\
}
{
q_{\hbox{\zlstinline[basicstyle=\scriptsize\ttfamily]|inl|}\,v} \in \ftastates\\
\hbox{\zlstinline|inl|}(v) \to q_{\hbox{\zlstinline[basicstyle=\scriptsize\ttfamily]|inl|}\,v} \in \transitions
}

\inferrule[Inr]
{
q_v \in \ftastates\\
}
{
q_{\hbox{\zlstinline[basicstyle=\scriptsize\ttfamily]|inr|}\,v} \in \ftastates\\
\hbox{\zlstinline|inr|}(v) \to q_{\hbox{\zlstinline[basicstyle=\scriptsize\ttfamily]|inr|}\,v} \in \transitions
}

\inferrule[Angelic Recursion]
{
\mathsf{SAT}(\varphi \land f(v_1)=v_2)\\
q_{v_1} \in \ftastates
}
{
q_{v_2} \in \ftastates\\
\hbox{\zlstinline|f|}(q_{v_1}) \to q_{v_2} \in \transitions
}

\inferrule[Uneval]
{
}
{
\bot \in \ftastates
}

\inferrule[Uneval Prod]
{
\ell \in \alphabet
}
{
\bot \in \ftastates\\
\ell(\bot,\ldots,\bot) \to \bot
}

\inferrule[Switch Left]
{
q_{\hbox{\zlstinline[basicstyle=\scriptsize\ttfamily]|inl|}\ v_3} \in \ftastates\\
q_{v_1} \in \ftastates\\
}
{
\hbox{\zlstinline|switch|}(q_{\hbox{\zlstinline[basicstyle=\scriptsize\ttfamily]|inl|}\ v_3},q_{v_1},\wildcard) \to q_{v_1} \in \transitions
}

\inferrule[Switch Right]
{
q_{\hbox{\zlstinline[basicstyle=\scriptsize\ttfamily]|inr|}\ v_3} \in \ftastates\\
q_{v_2} \in \ftastates\\
}
{
\hbox{\zlstinline|switch|}(q_{\hbox{\zlstinline[basicstyle=\scriptsize\ttfamily]|inr|}\ v_3},\wildcard,q_{v_2}) \to q_{v_2} \in \transitions
}
\end{mathpar}
\caption{Inference rules for $\BuildAngelicFTA(v_{in},\varphi)$.}
\label{fig:ftacreationrules}
\end{figure}

We now explain {\sc BuildAngelicFTA} procedure that takes as input a value $v_\emph{in}$ and a DNF clause $\varphi$ and returns an FTA $\fta$ whose language is the set of all programs that angelically satisfy $\varphi$ on input $v_\emph{in}$. That is:
\[
P \in \mathcal{L}(\fta) \Longleftrightarrow P \angelicmodels_{v_\emph{in}} \varphi
\]

This procedure  is summarized in Figure~\ref{fig:ftacreationrules} using  inference rules that stipulate which states and transitions should be part of the constructed FTA.  In particular, states in the FTA are of the form $q_v$ where $v$ is a value that arises when angelically executing some program $P$ on input $v_\emph{in}$.  In addition, there is a special state $\bot$ that denotes the  value of expressions that are never evaluated on input $v_\emph{in}$. Since our language contains conditionals in the form of switch statements, this special state $\bot$ is useful for representing the unknown value of branches that are never evaluated during an execution.

Next, we explain each of the rules from Figure~\ref{fig:ftacreationrules} in more detail:

\begin{itemize}
    \item The first rule, labeled {\sc Init}, adds the state $q_{v_{in}}$ to the FTA and adds a transition $x() \rightarrow q_{v_{in}}$. Since $x$ represents the program input, this rule essentially corresponds to binding $x$ to value~$v_{in}$. 
    \item The next rule, labeled Final, marks the final states of the FTA. In particular, since we want the language of the FTA to be those programs that angelically satisfy $\varphi$ on input $v_{in}$, we mark a state $q_v$ as accepting if $f(v_{in}) = v$ is consistent with the given specification $\varphi$. 
\item The next four rules ({\sc Unit}, ..., {\sc Inr}) add FTA states and transitions for the different contructors in our language. For example, according to the {\sc Pair} rule, if $q_{v_1}$ and $q_{v_2}$ are FTA states, then we also add $q_{(v_1, v_2)}$ as a state of the FTA as well as a corresponding transition for the pair constructor.
\item The rule labeled {\sc Angelic Recursion} encodes angelic execution semantics. In particular, if $q_{v_1}$ is an FTA state, then we add a transition $f(q_{v_1}) \rightarrow q_{v_2}$ as long as the formula $f(v_1) = v_2$ is consistent with $\varphi$. 
\item The last two rules encode the semantics of switch statements. In particular, there is a transition $\mlstinline{switch}(q_{v_0}, q_{v_1}, \bot) \rightarrow q_{v_1}$ (resp. $\mlstinline{switch}(q_{v_0}, \bot, q_{v_2}) \rightarrow q_{v_2}$)  iff the \zlstinline{inl} (resp. \zlstinline{inr}) branch of the switch statement is executed on $v_0$ and produces value $v_1$ (resp. $v_2$). As mentioned earlier, the special state $\bot$ encodes the unknown value of expressions that are not evaluated, and the {\sc Uneval Prod} rule is used to propagate such ``unevaluated" values. 
\end{itemize}

The following theorem states the soundness of our angelic synthesis procedure for a specific input $v_\emph{in}$:

\begin{theorem}
\label{thm:65}
If {\sc BuildAngelicFTA}$(v_{in}, \varphi)$ returns $\fta$, then $P \angelicmodels_{v_{in}} \varphi$ for every $P \in \mathcal{L}(\fta) $.
\end{theorem}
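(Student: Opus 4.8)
The plan is to prove the theorem by induction on the structure of the program $P$, mirroring the inductive structure of the angelic big-step relation $\Downarrow^\varphi$ from Figure~\ref{fig:angelic-semantics}. Recall that $P \angelicmodels_{v_{in}} \varphi$ unfolds (via Definition of angelic satisfaction on input together with the last rule of Figure~\ref{fig:angelic-semantics}) to: there exists $v'$ with $e[v_{in}/x] \Downarrow^\varphi v'$ and $\mathsf{SAT}(f(v_{in}) = v' \wedge \varphi)$, where $P = \mlstinline{rec f}(x) = e$. So the real content is a correspondence between accepting runs of $\fta$ and angelic derivations. I would therefore first prove a strengthened auxiliary claim about \emph{subexpressions}: for every subterm $T$ of $P$ (viewed as a tree over $\alphabet$) and every state $q_v \in \ftastates$, if $(T, L)$ is a run of $\fta$ rewriting $T$ to $q_v$ (not necessarily accepting), then the expression $e_T$ corresponding to $T$, with $x$ bound to $v_{in}$, satisfies $e_T[v_{in}/x] \Downarrow^\varphi v$. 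The theorem then follows: given $P \in \mathcal{L}(\fta)$, there is an accepting run $(T_P, L)$ with $L(\rootnode) = q_v \in \finalstates$; the auxiliary claim gives $e[v_{in}/x] \Downarrow^\varphi v$, and the {\sc Final} rule guarantees $\mathsf{SAT}(\varphi \wedge f(v_{in}) = v)$, which is exactly the witness condition in the definition of $\angelicmodels_{v_{in}}$.

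The induction is on the height of the subtree $T$. The base cases are the leaf constructors: the {\sc Init} rule produces $x() \to q_{v_{in}}$, matching the trivial derivation $x[v_{in}/x] = v_{in} \Downarrow^\varphi v_{in}$; and the {\sc Unit} rule matches $\mlstinline{unit} \Downarrow^\varphi \mlstinline{unit}$. For the inductive step I would go rule by rule through Figure~\ref{fig:ftacreationrules} and check each against the corresponding rule of Figure~\ref{fig:angelic-semantics}. For instance, a {\sc Pair} transition $(\cdot,\cdot)(q_{v_1}, q_{v_2}) \to q_{(v_1,v_2)}$ used in the run means the two children rewrite to $q_{v_1}, q_{v_2}$; by the IH the corresponding subexpressions evaluate angelically to $v_1, v_2$, and the pair rule of Figure~\ref{fig:angelic-semantics} then gives $(e_1, e_2)[v_{in}/x] \Downarrow^\varphi (v_1, v_2)$. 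The {\sc Fst}, {\sc Snd}, {\sc Inl}, {\sc Inr}, {\sc Unl}, {\sc Unr} cases are analogous. The {\sc Angelic Recursion} rule $f(q_{v_1}) \to q_{v_2}$ carries the side condition $\mathsf{SAT}(\varphi \wedge f(v_1) = v_2)$, which is precisely the premise of the angelic recursion rule in Figure~\ref{fig:angelic-semantics}, so combined with the IH on the argument subtree we get $f\,e_1 [v_{in}/x] \Downarrow^\varphi v_2$. The two {\sc Switch} rules each fix which branch is taken and leave the other child labeled $\bot$; here I need the fact that the $\bot$ state corresponds to an unevaluated expression, so it is consistent to ignore the discarded branch — this matches the angelic switch rules, which only evaluate $e_3$ and the taken branch.

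The main obstacle I anticipate is the handling of the $\bot$ state and the {\sc Uneval}/{\sc Uneval Prod} rules. A run of $\fta$ assigns a state to \emph{every} node of the accepted term $T$, including nodes inside the not-taken branch of a \mlstinline{switch}; those nodes may be labeled $\bot$, and the auxiliary claim as I stated it would be false for them (there is no value $v$ with $\bot = q_v$, and the subexpression genuinely has no angelic value on this input). The fix is to formulate the auxiliary claim only for nodes labeled with a genuine state $q_v$, and to argue that along any accepting run the root and, inductively, every node that actually contributes to the derivation is labeled by a real $q_v$ rather than $\bot$ — in other words, that $\bot$-labeled subtrees only ever sit under the $\bot$ argument slot of a {\sc Switch} or {\sc Uneval Prod} transition and hence never need to be evaluated. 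Making this precise requires a small side lemma: in any run, if a node is labeled $q_v$ for a real value $v$, then none of the children that the corresponding Figure~\ref{fig:angelic-semantics} rule actually evaluates is labeled $\bot$ (only the explicitly-$\bot$ slot of {\sc Switch Left}/{\sc Switch Right} may be). Once that bookkeeping is in place, the rest is a routine rule-by-rule match between the two figures, and the theorem drops out by applying the auxiliary claim at the root together with the {\sc Final} rule.
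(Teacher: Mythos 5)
Your proposal is correct and follows essentially the same route as the paper: the paper also proves an auxiliary lemma by structural induction on the accepted tree, stating that any run whose root is labeled by a genuine state $q_v$ (i.e., not $\bot$) yields $T \Downarrow^\varphi v$, and then concludes via the {\sc Final} rule at an accepting root. Your extra care about $\bot$-labeled subtrees is handled in the paper simply by the hypothesis $L(\mathsf{root}(T)) \neq \bot$ together with the switch cases, which is the same bookkeeping you describe.
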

\begin{proof}
  The proof is in
  \ifappendices
  Section~\ref{sec:proofs}.
  \else the full version of the paper.
  \fi
\end{proof}

The following theorem generalizes this from individual inputs to ground specifications:

\begin{theorem}
\label{thm:66}
Let $\varphi$ be a ground formula such that:
\[
V = \{ v_i \ | \ f(v_i) \in \mathsf{Terms}(\varphi) \}
\]
Then, if {\sc BuildAngelicFTA}$(v_{i}, \varphi)$ returns $\fta_i$ for inputs $V = \{ v_1, \ldots, v_n\}$, then, for every $P \in \mathcal{L}(\fta_1) \cap \ldots \cap \mathcal{L}(\fta_n) $, we have $P \angelicmodels \varphi$. 
\end{theorem}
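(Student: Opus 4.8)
The plan is to reduce the claim to Theorem~\ref{thm:65} together with an analysis of how FTA intersection interacts with the angelic-satisfaction relation $\angelicmodels$. First I would fix an arbitrary $P \in \mathcal{L}(\fta_1) \cap \cdots \cap \mathcal{L}(\fta_n)$ and recall the definition of angelic satisfaction: $P \angelicmodels \varphi$ holds iff $P \angelicmodels_v \varphi$ for every input $v$. So it suffices to show, for each relevant input, that $P$ angelically satisfies $\varphi$ on that input. The key observation is that $\varphi$ only constrains the behavior of $f$ on the finitely many inputs in $V = \{v_1,\dots,v_n\}$ (these are exactly the values $v_i$ with $f(v_i)$ appearing as a term of $\varphi$); for any input $v \notin V$, the formula $\varphi$ places no constraint on $f(v)$, so $\mathsf{SAT}(f(v) = v' \wedge \varphi)$ holds for any $v'$, and $P \angelicmodels_v \varphi$ is trivially witnessed by any element of $\SemanticsOf{P}^\varphi(v)$ (which is nonempty because angelic execution can always choose a consistent return value for each recursive call). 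Hence the only inputs that require real work are $v_1,\dots,v_n$.

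Next I would use the hypothesis that $P \in \mathcal{L}(\fta_i)$ for each $i$. Since $\fta_i = \textsc{BuildAngelicFTA}(v_i, \varphi)$, Theorem~\ref{thm:65} gives directly that $P \angelicmodels_{v_i} \varphi$ for every $i$. Combining this with the previous paragraph, $P \angelicmodels_v \varphi$ holds for all inputs $v$ — both those in $V$ (by Theorem~\ref{thm:65}) and those outside $V$ (vacuously). By the definition of angelic satisfaction, this is exactly $P \angelicmodels \varphi$, which is what we wanted.

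The only subtlety — and the step I expect to require the most care — is the claim that membership of $P$ in the intersection of the languages is precisely what licenses applying Theorem~\ref{thm:65} to \emph{each} $\fta_i$ simultaneously for the \emph{same} program $P$. Theorem~\ref{thm:65} is stated per-automaton, so I need $P \in \mathcal{L}(\fta_i)$ to invoke it for index $i$; this is immediate from $P \in \bigcap_i \mathcal{L}(\fta_i)$ by definition of set intersection. A secondary point worth spelling out is why angelic satisfaction across all inputs follows from angelic satisfaction on each $v_i$ plus vacuous satisfaction elsewhere — this is just unfolding the definition, but it is where the hypothesis on $V$ (that it captures \emph{all} inputs mentioned by $\varphi$) is used, so I would state it explicitly rather than gloss over it. No induction on program structure is needed here, since all the structural work has already been discharged inside Theorem~\ref{thm:65}; this theorem is purely a lifting argument.
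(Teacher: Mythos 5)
Your proposal is correct and follows essentially the same route as the paper: the paper's proof is a one-line appeal to Theorem~\ref{thm:65} applied to each $\fta_i$ (licensed by $P$ lying in the intersection of the languages) combined with the definition of angelic satisfaction as per-input satisfaction. Your additional observation that inputs outside $V$ are handled vacuously is a detail the paper glosses over, but it does not change the argument's structure.
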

\begin{proof}
  {As the definition of angelic satisfaction simply requires satisfaction on every individual input, this comes directly from Theorem~\ref{thm:65} and the definition of intersection.}
\end{proof}

\subsubsection{Finding Witnesses to Angelic Satisfaction}
\label{subsec:find-witness}

\begin{figure}
\begin{mathpar}

\inferrule
{
P = (n, V, E) \\ 
\Label(n) = \hbox{\zlstinline|f|}\\
\children(n) = [n']\\
L \vdash (n', V, E) \rightsquigarrow \witness'\\
L(n) = (v_1,\ldots,v_k) \\
L(n') = (v_1',\ldots,v_k')}
{
L \vdash P \rightsquigarrow \witness'\ \wedge \bigwedge_{i \in [1\ldots{}k]} f(v_i') = v_i
}

\inferrule
{
P = (n, V, E)\\ \Label(n) \neq \hbox{\zlstinline|f|}\\
\children(n) = [n_1,\ldots,n_k]\\
\forall i\in[1\ldots{}k]. \ \ L\vdash (n_i, V, E) \rightsquigarrow \witness_i
}
{
L \vdash P \rightsquigarrow \wedge_{i \in [1\ldots{}k]} \witness_i
}

\end{mathpar}
\caption{Inference rules describing the {\sc GetWitness} procedure. }
\label{fig:get-assumptions}
\end{figure}

In this section, we describe our procedure for finding witnesses to angelic satisfaction. In particular, given an accepting run $(P, L)$ of the tree automaton where $P$ is a program (represented as an AST) and $L$ is a mapping from AST nodes to FTA states, {\sc GetWitness} returns a witness $\witness$ of the form $\bigwedge_i f(v_i) = v_i'$ that identifies all assumptions made during the angelic execution associated with labeling function $L$.

Before we explain the rules from Figure~\ref{fig:get-assumptions}, we note that $L$ maps each AST node to a tuple  $(v_1, \ldots, v_n)$ where each $v_i$ is a value. In particular, while the states for each individual FTA consist of individual values, recall that Algorithm~\ref{alg:angelic-synthesis} takes the intersection of several FTAs. Thus, after $n$ intersection operations, the states of the FTA correspond $n$-tuples of the form $(v_1, \ldots, v_n)$.

With this in mind, Figure~\ref{fig:get-assumptions} presents the {\sc GetWitness} procedure using  inference rules that derive judgments of the form $L \vdash P \rightsquigarrow \witness$. The meaning of this judgment is that $\witness$ is a witness to angelic satisfaction of $P$ in the angelic execution associated with labeling function $L$.  
The first rule in Figure~\ref{fig:get-assumptions} deals with recursive invocations of procedure $f$. In this case, the root node of the AST is a node $n$ labeled with $f$, and $n$  has a single child $n'$ (since $f$ takes a single argument). Now, suppose that $L$ maps $n$ to the tuple $(v_1, \ldots,  v_k)$ and $n'$ to $(v'_1, \ldots,  v'_k)$. Such a transition corresponds to the assumption that the recursive call to $f$ returns value $v_i$ on input $v_i'$. Thus, the resulting witness includes the conjunct $\bigwedge_i f(v_i') = v_i$. Furthermore, since the argument to $f$ can contain nested recursive calls, this rule also computes a witness $\witness'$ for the sub-AST rooted at $n'$ (i.e., $L \vdash (n', V, E) \rightsquigarrow \witness'$). The final witness is therefore the conjunction of $\witness'$ and $\bigwedge_i f(v_i') = v_i$.

The second rule in Figure~\ref{fig:get-assumptions} deals with the scenario where the top-level expression is \emph{not} a recursive call to $f$. However, since the sub-expressions may contain recursive calls, we recurse down to the children and obtain witnesses for the sub-expressions. The resulting witness is the conjunction of witnesses for all sub-expressions. 

\section{Implementation}
\label{sec:implementation}

We have implemented our proposed technique in a tool called \toolname that is implemented in OCaml. 
In this section, we discuss some important implementation details and optimizations omitted from the technical development.

\subsection{Termination of Synthesized Programs}
\label{subsec:termination}
{To ensure that our synthesized programs terminate, our implementation utilizes  a well-founded default ordering  on our values. In particular, \toolname ensures that it generates terminating programs by only permitting recursive calls on values that are \emph{strictly smaller} than the input. If $v_\emph{in}$ is provided as an input, recursive calls} to \lstinline{f} can only be applied to values $v$ when $v \prec v_\emph{in}$. This prevents generating infinite loops like \lstinline{let rec f(x) = f(x+1)}.

\subsection{Finitization of States}
Recall that our angelic synthesis technique uses finite tree automata to find a program that satisfies the specification under the angelic semantics. Further, recall that states in the tree automaton correspond to concrete values, of which there may be infinitely many. Similar to prior work~\cite{dace},  our implementation bounds the number of automaton states  using a parameter $k$ that controls the number of applications of the inference rules from Figure~\ref{fig:ftacreationrules}. By default, this parameter is set to 4. 

Another complication in our setting is due to the use of angelic recursion in the inference rules in Figure~\ref{fig:ftacreationrules}. In particular, under the angelic semantics, a recursive call can return any value that satisfies the specification. If the specification is \emph{true}, there are infinitely many concrete values that satisfy it. \Burst{} gets around this issue by iteratively constructing FTAs from smallest input to largest, and finitizing as it goes.

For example, consider trying to synthesize a program $f$ with the ground specification $(f(0) \geq 0 \wedge f(0) \text{ mod } 2 = 0) \wedge (f(1) \geq 1 \wedge f(1) \text{ mod } 2 = 0)$.
The smallest number involved in this ground specification is 0. As zero is the smallest element in the naturals, \Burst{} cannot make any recursive calls, so there is no need to worry about finitizing the outputs of recursive calls. Thus, \Burst{} can create $\fta_0$ (the automaton corresponding to input 0), which has final states 0, 2, and 4. Next, when creating $\fta_1$ (the automaton corresponding to input 1), \Burst{} only needs to consider the recursive call $f(0)$, which can only return 0, 2 and 4, as these are the only final states of $\fta_0$. In general, \Burst{} only constructs $\fta_n$ (the automaton on input n) after constructing $\fta_0, \ldots, \fta_{n-1}$; it then uses their final states to determine the possible values of the recursive calls.

\subsection{Program Selection}

In general, there may be many programs that satisfy the given specification, and most synthesis algorithms use heuristics to choose which program to return to the user. One of these heuristics is to prefer smaller programs, and, inspired by the effectiveness of this heuristic in prior work~\cite{myth,smyth,lambda2}, \toolname  also returns the smallest program in terms of AST size. However, to provide such a minimality guarantee, our implementation slightly deviates from the core synthesis procedure shown in Algorithm~\ref{alg:core-recursive-synthesis}.

In particular, Algorithm~\ref{alg:core-recursive-synthesis} makes recursive calls to two distinct strengthened specifications -- one to $\groundSpec \wedge \witness$ (line 10) and one to $\groundSpec \wedge \neg\witness$ (line 11). In this algorithm, the call with input $\groundSpec \wedge \neg\witness$ is made only after the call with input $\groundSpec \wedge \witness$ fails. Our actual implementation maintains a priority queue over these specifications sorted according to the size of the minimal solution for the corresponding angelic synthesis problem. It then explores these programs from smallest to largest. Thus, our implementation guarantees that the program returned to the user is the smallest one among those that satisfy the specification.
\ifappendices{The interested reader can see the pseudo-code  in Section~\ref{sec:additional-algs} of the appendix.}\fi

\subsection{Improving the CEGIS Loop}\label{sec:opt-cegis}

Recall that our technique can perform synthesis from logical specifications by integrating our proposed approach within a CEGIS loop. While the standard CEGIS paradigm only uses ground formulas for inductive synthesis, our approach can actually utilize the original logical specification when performing angelic synthesis. In particular, when deciding which values can be returned by a recursive call, our implementation utilizes the original logical specification as opposed to the weaker ground specifications. For example, suppose that the original specification is $f(x) < x$ and our current counterexamples include $3$ and $4$ (i.e., ground specification is $f(3) < 3 \land f(4) < 4$). While the ground specification does not constrain the output of recursive call $f(2)$, we can use the original specification to constrain the return value of $f(2)$ to be either $0$ or $1$ (assuming that $x$ is a natural number). 

\subsection{Optimizations}
Our implementation also utilizes a few standard optimizations described in prior synthesis literature. Since it is common to perform type-directed pruning in synthesis~\cite{myth,lambda2},  we construct  our FTAs to only accept well-typed programs. Inspired by prior work that utilizes eta-long beta-normal form~\cite{myth,myth2,smyth}, we also modify our FTA construction rules to only accept such normalized programs.

\section{Evaluation}
\label{sec:evaluation}

In this section, we describe a series of experimental evaluations that are designed to answer the following research questions:

\begin{enumerate}
\item[\textbf{\emph{RQ1.}}] Is \Burst able to effectively synthesize programs from a variety of different specifications?
\item[\textbf{\emph{RQ2.}}] How does \Burst compare against prior work in terms of synthesis efficiency and correctness of synthesized programs?
\item[\textbf{\emph{RQ3.}}] How important is it to combine angelic synthesis with specification strengthening?
\end{enumerate}

All experiments described in this section are performed on a 2.5 GHz Intel Core i7 processor with 16 GB of 1600 MHz DDR3 running macOS Big Sur with a time limit of 120 seconds.

\subsection{Benchmarks and Baselines}

To answer the  research questions listed above, we use \toolname to synthesize 45 recursive functional programs from prior work~\cite{myth,smyth} and compare it against the following baselines:

\begin{enumerate}[leftmargin=*]
    \item  {\bf SMyth}~\cite{smyth}, which is a top-down type-directed programming-by-example tool. In particular, \Smyth generalizes \Myth~\cite{myth} to handle input-output examples that are not trace-complete.
    \item {\bf Synquid}~\cite{synquid}, which performs synthesis from liquid types.
    \item {\bf Leon}~\cite{leon}, which is a synthesizer that performs synthesis from logical specifications.
\end{enumerate}

\subsection{Specifications}
To evaluate whether \toolname can handle a variety of different specifications and to compare it against different tools, we consider three classes of specifications for each of our 45 benchmarks:

\begin{enumerate}[leftmargin=*]
    \item {\bf IO:} These are input-output examples written by developers of  \Smyth~\cite{smyth}.
    \item {\bf Ref:} These are reference implementations written by us. 
\item {\bf Logical:} These are logical specifications that specify pre- and post-conditions (or, in the case of \Synquid, refinement types) on the function to be synthesized.
\end{enumerate}

While \toolname can perform synthesis from all three classes of specifications listed above, not all baselines can effectively handle these different specifcations. Thus, we only compare against \Smyth on the {\bf IO} and {\bf Ref} specifications and against \Leon and \Synquid for the {\bf Logical} specifications. Note that  \Smyth can be adapted to perform synthesis from a reference implementation by integrating it inside a CEGIS loop and obtaining input-output examples from the reference implementation. Furthermore, while \Leon and \Synquid can, \emph{in principle}, handle {\bf IO} specifications, prior work has shown that they are not effective when used for this purpose~\cite{smyth}. Thus, we only compare against \Leon and \Synquid on the {\bf Logical} specifications.

\subsection{Synthesis from Input/Output Specifications}
\label{subsec:synth-io}

\begin{figure}[!h]
    \centering
    \footnotesize
    \begin{tabular}{l|ccc|ccc}%
    \bfseries Test & \multicolumn{3}{c}{\bfseries \Burst} & \multicolumn{3}{c}{\bfseries \Smyth}\\
    & Time (s) & Correct? & Size & Time (s) & Correct? & Size
    \csvreader[head to column names]{examples.csv}{}
    {\\\hline\Test & \ComputationTime & \Correct & \Size & \SmythComputationTime & \SmythCorrect & \SmythSize }
    \end{tabular}
    \caption{The results of running \Burst and \Smyth on the IO benchmark suite. A cross mark under the Time column indicates failure (i.e., either timeout or terminating without finding a solution). Under the ``Correct?'' column  ``\correct'' indicates that the synthesized program is the desired one, and ``\incorrect" indicates that the synthesized program matches the IO examples but not the user intent.  The column labeled ``Size'' shows the size of the synthesized program. }
    \label{fig:io-table}
\end{figure}

Figure~\ref{fig:io-table} presents the results of our comparison against \Smyth on synthesis tasks from {\bf IO} specifications. Here, the column labeled ``Time'' shows the synthesis time in seconds, and a cross mark (\incorrect) indicates failure (e.g., time-out). The column labeled ``Correct?'' shows whether the synthesized program is the one intended by the user. The column labeled ``Size'' shows the size of the synthesized program. In particular, when synthesis is successful, the returned program always satisfies the provided IO examples, however, it may or may not be the program intended by the user. Thus, this additional column allows us to evaluate how generalizable the synthesis results of these tools are.

As we can see from Figure~\ref{fig:io-table}, \toolname is able to synthesize a program consistent with the IO examples in all but 2 cases, whereas \Smyth fails on 4 benchmarks. 
For the benchmarks that can be solved by both tools, the  running time of both tools is quite fast (a few seconds or less) with the exception of a few outliers. 

Finally, the programs synthesized by \toolname and \Smyth are roughly equal in terms of generalization power: for \Burst{} there are 3 cases where the synthesized program is not the intended one, for \Smyth{} there are 4 such cases.



\paragraph*{Failure Analysis} Next, we analyze the two benchmarks that \toolname fails on and provide some intuition about why it is unable to solve them. For the benchmarks called list-take and list-sorted-insert, \toolname times out because the specification does not in any way constrain the outputs of the many possible recursive calls. This both makes FTA creation quite slow and also causes the algorithm to explore many different strengthenings of the specification.

\vspace{5pt}
\noindent
\fbox{\parbox{.95\linewidth}{
	{\bf Result \#1}: When synthesizing programs from IO specifications, \toolname is competitive with \Smyth, a state-of-the-art  tool for synthesizing recursive programs from input-output examples. In particular, \toolname can solve two more benchmarks despite not specializing in IO specifications.
  }
}

\subsection{Synthesis from Reference Implementation}
\label{subsec:synth-re}

\begin{figure}[!h]
    \centering
    \footnotesize
    \begin{tabular}{l|ccc|ccc}%
    {\bfseries Test} & \multicolumn{3}{c}{\bfseries \Burst} & \multicolumn{3}{c}{\bfseries \Smyth}\\
    & Time (s) & \# Iters & Size & Time (s) & \# Iters & Size
    \csvreader[head to column names]{equiv.csv}{}
    {\\\hline\Test & \ComputationTime & \LoopCount & \Size & \SmythComputationTime & \SmythLoopCount & \SmythSize }
    \end{tabular}
    \caption{The results of running \Burst and \Smyth on the \textbf{Ref} benchmark suite. A cross mark under the Time column indicates failure (i.e., either timeout or terminating without finding a solution). The column labeled ``\# Iters" shows the number of iterations within the CEGIS loop. The column labeled ``Size'' shows the size of the synthesized program.}
    \label{fig:equiv-table}
\end{figure}

In this section, we use \toolname to synthesize programs from reference implementations and compare against \Smyth. We incorporate both tools into a CEGIS loop and, for each candidate program, check whether it is equivalent to our reference implementation. If not, we ask the verifier for a concrete counterexample $I$ and obtain its corresponding output $O$ by running the reference implementation on $I$. We then add $(I, O)$ as a new input-output example and continue this process until the synthesizer program is indeed equivalent to the reference implementation.\footnote{We actually use bounded testing instead of verification; however, we manually confirmed that the generated programs are indeed equivalent to the reference implementation.}

The results of this evaluation are shown in Figure~\ref{fig:equiv-table}. In particular, the column labeled ``Time" indicates synthesis time in seconds, with \incorrect{} indicating failure as before. The second column labeled ``\# Iters" shows the number of iterations of the CEGIS loop for those benchmarks that can be synthesized. The third column ``Size'' shows the size of the synthesized program. As we can see from this figure, \toolname successfully solves all but two of the benchmarks with an average synthesis time of 4.49 seconds and 4.37 CEGIS iterations on average. In contrast, \Smyth fails to solve 12 of these benchmarks, and it takes an average of 3.07 seconds and 4.52 CEGIS iterations. Overall, we believe these results indicate that \toolname{} is able to deal better with the random examples generated by the verifier compared to \Smyth.\footnote{Recall that the IO examples used in the previous experiment are written by the \Smyth developers.}

\paragraph*{Failure Analysis} {The reason that \toolname fails on  tree-nodes-at-level  is similar to that for pure IO examples: the verifier returns IO pairs for which the results of the many possible recursive calls are highly under-constrained, resulting in slow FTA construction as well as many strengthening steps.}
{\Burst fails on list-rev-tailcall due to our requirement for ensuring termination of synthesized programs (see Section~\ref{subsec:termination}). Concretely, list-rev-tailcall needs to make a recursive call on \lstinline{([2],[1])} for input \lstinline{([1;2],[])}, but our default ordering does not consider \lstinline{([2],[1])} to be strictly less than \lstinline{([1;2],[])}.}



\vspace{5pt}
\noindent
\fbox{\parbox{.95\linewidth}{
	{\bf Result \#2}: \Burst is able to synthesize 96\% of the programs from a reference implementation. In contrast, \Smyth is only able to synthesize 73\%. 
  }
}



\subsection{Synthesis from Logical Specifications}
\label{subsec:synth-logic}

Figure~\ref{fig:logical-table} shows the results of our evaluation on logical specifications for each  of \toolname, \Synquid, and \Leon. As before, the column labeled ``Time'' shows the synthesis time for each tool, and the column labeled ``Correct?'' shows whether the tools were able to generate the intended program from the given specification. The column ``Size'' shows the size of the synthesized program.\footnote{\Leon did not seem to provide an automated way to identify size, so we did not include a ``Size'' column for it.} As we can see in this table, \toolname solves more benchmarks than \Leon and significantly more compared to \Synquid. 
In what follows, we explain the failure cases of each tool and contrast them with each other.

\begin{figure}[!h]
    \centering
    \footnotesize
    \begin{tabular}{c|ccc|cc|ccc}%
    \bfseries Test & \multicolumn{3}{c}{\bfseries \Burst} & \multicolumn{2}{c}{\bfseries \Leon} & \multicolumn{2}{c}{\bfseries \Synquid}\\
    & Time (s) & Correct? & Size & Time (s) & Correct? 
    & Time (s) & Correct? & Size
    \csvreader[head to column names]{postconditional.csv}{}
    {\\\hline\Test & \ComputationTime & \Correct & \Size & \LeonTime & \LeonCorrect 
    & \SynquidTime & \SynquidCorrect & \SynquidSize }
    \end{tabular}
    \caption{ The results of running \Burst, \Synquid, and \Leon on the \textbf{Logical} benchmark suite. The result ``\incorrect'' under ``Time'' indicates failure (in this case, timeout). Under the ``Correct?'' column,``\correct'' (resp. ``\incorrect'') indicates that the synthesized program was (resp. not) the intended one. The column labeled ``Size'' shows the size of the synthesized program. }
    \label{fig:logical-table}
\end{figure}

\paragraph*{Failure Analysis for \toolname.} The two unique failure cases for \toolname are nat-add and nat-max. For the first one, given the unary encoding of two natural numbers, the goal is to add them, and for the second one, the goal is to return the maximum. While these benchmarks look very easy at first glance, \toolname fails on them because the specification is very under-constrained. For example, the specification of nat-max only states that the output should be greater than or equal to both inputs, but under the angelic semantics of recursion, this results in many possible outputs of the recursive calls and causes a blow-up. This is a common theme for \toolname across all types of specifications: Since it constructs a version space based on angelic semantics, synthesis becomes more difficult when the specification is ``loose" for arguments used in recursive calls.

\paragraph{Behavior of \Synquid.} At first glance, \Synquid seems to perform surprisingly poorly on these benchmarks.  Upon further investigation, we found that \Synquid is only able to successfully synthesize programs from highly stylized specifications. For example, consider our specification for the list-compress benchmark shown in Figure~\ref{fig:synquid-lc1}. While \Synquid is unable to synthesize a program from this specification within the given time limit, it \emph{can} synthesize the desired program from the specification shown in Figure~\ref{fig:synquid-lc2}. These specifications are semantically equivalent; however, \Synquid's behavior on them is very different. Thus, while it may be possible to re-engineer our specifications so that \Synquid performs successful synthesis, coming up with specifications that are \Synquid-friendly is a highly non-trivial task.


\begin{figure}
    \begin{lstlisting}[style=smallstyle]
data List where
  Nil :: List
  Cons :: Nat -> List -> List

measure heads :: List -> Set Nat where
  Nil -> []
  Cons x xs -> [x]
  
measure no_adjacent_dupes :: List -> Bool where
  Nil -> True 
  Cons x xs -> !(x in heads xs) && no_adjacent_dupes xs

compress :: xs: List a -> {List | elems xs == elems _v && no_adjacent_dupes _v}
compress = ??\end{lstlisting}
    \caption{Our list-compress benchmark specification for \Synquid.}
    \label{fig:synquid-lc1}

\end{figure}

\begin{figure}
\begin{lstlisting}[style=smallstyle]
data PList a <p :: a -> PList a -> Bool> where
	Nil :: PList a <p>
	Cons :: x: a -> xs: {PList a <p> | p x _v} -> PList a <p>
  
measure heads :: PList a -> Set a where
  Nil -> []
  Cons x xs -> [x]

type List a = PList a <{True}>  
type CList a = PList a <{!(_0 in heads _1)}>

compress :: xs:List a -> {CList a | elems xs == elems _v}
compress = ??\end{lstlisting}
\caption{Alternative list-compress benchmark specification for \Synquid.}
\label{fig:synquid-lc2}
\end{figure}

\paragraph*{Comparisons to Leon} \Leon performs better than \Synquid for our specifications; however, it solves 34 benchmarks compared to the 41 of \toolname. Overall, \Leon tends to perform relatively poorly on benchmarks with nontrivial branching (e.g., list-compress and tree-binsert). This behavior is likely due to their condition abduction procedure  failing to infer the correct branch conditions when they are deeply nested. On the higher-order benchmarks (list-filter, list-map), \Leon either reports an error or returns a wrong solution that does not satisfy the provided logical specification. 


\vspace{5pt}
\noindent
\fbox{\parbox{.95\linewidth}{
	{\bf Result \#3}: \Burst is able to synthesize  91\% of the benchmarks from logical specifications and solves more benchmarks than both \Leon (76\%) and \Synquid (49\%).
  }
}

\subsection{Ablation Study for Specification Strengthening}

Our proposed synthesis algorithm combines angelic synthesis with specification strengthening. However, an alternative approach is to perform enumerative search over all programs that angelically satisfy the specification. That is, one could repeatedly sample solutions to the angelic synthesis problem and test whether they satisfy the specification until we exhaust the search space or find the correct program. In this section, we perform an ablation study to evaluate the benefit of specification  strengthening compared to a simpler enumerative search baseline.

The results of this ablation study are presented in Figure~\ref{fig:ablation}, where {\sc Burst}$^\dagger$  is a variant of \toolname that performs basic enumerative search instead of backtracking search with specification strengthening. As we can see from this table, \toolname with specification strengthening solves more benchmarks within the given time limit, and this difference is particularly pronounced for the {\bf IO} and {\bf Ref} specifications. For {\bf Logical} specifications, the difference between \toolname  and {\sc Burst}$^\dagger$ is less stark due to the optimization described in Section~\ref{sec:opt-cegis}.

\begin{figure}[t]
    \centering
    \begin{tabular}{|r|c|c|c|}%
    \hline
    & \textbf{IO} & \textbf{Ref} & \textbf{Logical} \\
    \hline
    \Burst 
    & 96\% 
    & 96\% 
    & 91\% 
    \\\hline
    {\sc Burst}$^\dagger$
    & 73\% 
    & 78\% 
    & 84\% 
    \\\hline
    \end{tabular}
    \caption{Number of benchmarks that can be solved within the time limit for each of the three specifications.}
    \label{fig:ablation}
\end{figure}

\vspace{5pt}
\noindent
\fbox{\parbox{.95\linewidth}{
	{\bf Result \#4}: The variant of \toolname that performs enumerative search over angelic synthesis results solves fewer benchmarks than \toolname (with specification strengthening) for all three specification types. 
  }
}

\section{Related Work}
\label{sec:related-work}


The prior work most related to this paper can be divided into four overlapping categories: (i) 
bottom-up synthesis, 
(ii) version-space-based synthesis, (iii) synthesis of functional recursive programs, and (iv) synthesis based on angelic semantics. 
Now we elaborate on these categories of work. 
For a broader survey of program synthesis, see \citet{ps-now}. 

\paragraph*{Bottom-up Synthesis} Bottom-up enumeration is a classic approach to program synthesis. A canonical example is \ToolText{Transit} \citep{transit}. \ToolText{Transit} grows a pool of programs of increasing complexity, ensuring that no program in the pool is \emph{observationally equivalent} to another program in the pool. 
The \ToolText{Stun}~\citep{stun} approach generalizes this method by decomposing the input-output specification into multiple parts, synthesizing programs that work for these sub-specifications in a bottom-up way, then combining these programs using a form of anti-unification. \ToolText{Bustle}~\citep{odena2020bustle} offers another generalization, using a learning algorithm to guide bottom-up exploration. However, none of these methods handle programs with general recursion. 

\ToolText{Escher}~\citep{escher} is a bottom-up inductive synthesis approach that handles recursion. The algorithm here combines a forward search, in which terms are generated bottom-up, with a procedure for inferring conditional statements. However, a key limitation of this approach is that it requires a trace-complete specification to handle recursive calls. 


\paragraph*{Version-Space-Based Synthesis} Version space approaches to synthesis use an efficient data structure to represent the set of all programs that satisfy a specification. 
Early techniques~\citep{lau2003learning} proved hard to scale. 
\ToolText{FlashFill}~\cite{flashfill}, which represented version spaces using a form of e-graphs~\citep{Downey:1980}, was a major leap forward.
\ToolText{FlashFill}'s success led to followup methods, for example, \ToolText{FlashExtract}~\citep{flashextract}, \ToolText{FlashRelate} \citep{flashrelate}, and \ToolText{Refazer}~\citep{refazer}. 
This line of work culminated in \ToolText{FlashMeta} \citep{flashmeta}, a framework for version-space-based synthesis that supports the above methods as instantiations.
Unlike \Burst, these methods all construct versions spaces top-down. They work backward from the desired output for a specific input, iteratively producing subgoals describing the unknown parts of the target program, and then construct version spaces for these subgoals.


The use of tree automata (FTAs) as a version space representation was first explored by \citet{madhusudan2011synthesizing} in the setting of reactive synthesis. Subsequently, the \ToolText{Dace} \citep{dace}, \ToolText{Relish} \citep{relish}, and \ToolText{Blaze} \citep{blaze} systems used tree automata to represent version spaces in the synthesis of functional programs.
Like \Burst, these three approaches proceed bottom-up:
rather than starting from the desired outputs and producing subgoals for incomplete programs, they start from the input values and propagate these inputs through a space of programs. 
However, for reasons explained earlier, these methods cannot handle general recursion. 


\paragraph*{Synthesis of Recursive Programs} The synthesis of (functional) recursive programs has a long history. 
Most methods in this area consider synthesis from examples, but synthesis from richer specifications, such as refinement types, has also been considered. 
The \ToolText{Thesys}~\citep{thesys} and  \ToolText{Igor2}~\citep{igor} systems are two early examples of work of this sort. 
Given a set of examples, these methods first synthesize straight-line programs in a top-down manner, then identify patterns within a given program, then generalize these patterns into a recursive program. 

More recently, the \ToolText{Myth}~\citep{myth} and $\lambda^2$~\cite{lambda2} systems introduced types as a means of directing an inductive synthesis process. 
\ToolText{Myth2}~\citep{myth2} extended \Myth with more complex types of refinement types, including negative examples, intersection, and union types. 
All these approaches are top-down; also, all of them, except for $\lambda^2$, rely on trace-completeness to handle recursive calls. While $\lambda^2$ does not assume trace-completeness, it only applies deductive reasoning to limited forms of recursion assuming trace-completeness, and defaults to brute-force enumeration when handling general recursion or on non-trace-complete examples.

\Smyth~\citep{smyth} is a generalization of \Myth that also performs top-down deduction-aided search, but does not have the trace-completeness requirement. 
To handle non-trace-complete specifications, \Smyth generates partial programs, then propagates constraints from partial programs to the remaining holes. This propagation is both complex and domain-specific, and incomplete. By contrast, \Burst relies on a single generalizable principle of angelic recursion, and is complete.

\ToolText{Leon}~\cite{leon} performs synthesis modulo recursive functions. Leon takes a pre-/post-condition specification and searches for a recursive function that can satisfy it. \ToolText{Leon} solves this task using a term generation engine that produces candidate programs, a condition abduction engine that synthesizes branches, and a verification engine that evaluates candidate solutions. 
\Leon's term generation runs similarly (though not exactly, due to the lack of angelic execution) to our \Burst{}$^\dagger$ ablation; it does not search through programs based on recursive results but simply based on increasing cost. Its condition abduction engine is top-down and quite distinct from ours.

\ToolText{Synquid}~\cite{synquid} synthesizes programs from polymorphic refinement types. Such types are expressive and allow the specification of desired functions in a way that is both compositional and tight.
However, while \Synquid can synthesize nearly any function from a carefully crafted refinement type, there are many kinds of  
realizable specifications on which it simply gives up. 
In particular, \Synquid cannot synthesize from specifications that are non-inductive, including many of the specifications in our benchmark suite. 
More generally, \Burst and \Synquid address different problems: \Synquid focuses on always being able to synthesize from a well-written refinement type, while \Burst focuses on best-effort synthesis from arbitrary 
logical specifications.

The recent \ToolText{Cypress}~\citep{cypress} system targets synthesis of recursive programs from separation logic specifications. 
\ToolText{Cypress} generates a satisfying straight-line program, then ``folds'' that program into a generalized recursive procedure. 
We attempted a similar strategy in our setting but found the space of possible foldings to be prohibitively large.
In contrast, our synthesis algorithm follows a lazier strategy: instead of synthesizing the full search space, then finding ways to fold it together, we overapproximate the search space and then discover ways to refine it.

\paragraph*{Angelic Synthesis}
While angelic non-determinism has been used to expose synthesizers to programmers, there is almost no work on the use of angelic semantics as a core part of a synthesis algorithm. The one approach that we know of is \ToolText{FrAngel}~\cite{shi2019frangel}, which adds control structures to the well-studied problem of component-based synthesis. \ToolText{FrAngel} first identifies candidate partial programs by synthesizing programs with \emph{no} control structures, but instead with \emph{angelic placeholders}, then attempts to place appropriate control structures in place of the angelic placeholders. 
In contrast, \Burst does not have angelic placeholders but instead  updates the generated code itself to fulfill the requirements put in place by  angelic recursion.

Angelic execution has used in the related field of program repair. SPR~\cite{spr} uses angelic executions to identify candidate locations for condition repair, then instantiates those conditions in a second phase. SPR is similar to FrAngel, as it stages the synthesis into a sketch identification stage (using Angelic Semantics to identify promising sketches), and a sketch completion stage. Angelix~\cite{angelix} generalizes this approach to perform multi-location repairs, through their novel ``angelic forest'' data structure.





\section{Future Work}

As found in our failure analysis in Section~\ref{sec:evaluation}, \Burst{} has issues with severely underconstrained specifications. This is due to two primary reasons: (1) extensive backtracking and (2) output blowup. To address (1) we believe that additional work in anti-specifications could be helpful. By identifying a more general anti-specification, more parts of the search space are eliminated, necessitating less backtracking. To address (2) we believe that integrating an abstraction refinement algorithm like that used by \textsc{Blaze}~\cite{blaze} could tame blowups in candidate outputs.

\textsc{DryadSynth}~\cite{dryadsynth} and \textsc{Duet}~\cite{duet} have shown that integrating bottom-up and top-down approaches results in synthesizers greater than the sum of their parts, and we believe these findings would generalize to problems involving recursion. In particular, we think there is promising future work in integrating \Burst{} with a top-down recursive synthesizer like \textsc{SMyth}. 

Lastly, there is a large class of important specifications that \Burst{} cannot currently address -- relational specifications. Relational specifications describe the interactions between multiple different program runs. For example, this means that \Burst cannot synthesize programs that are idempotent, as we cannot reduce the postcondition $f(f(x)) = f(x)$ to a ground specification. We think there is promising future work in integrating \Burst{} with existing techniques for synthesizing programs from relational specifications~\cite{relish}.

\section{Conclusion}
\label{sec:conclusion}

In this paper, we presented a new technique for synthesizing recursive functional programs. Our approach differs from prior work in this space as it performs synthesis in a \emph{bottom-up} fashion. Our algorithm first performs \emph{angelic synthesis} wherein recursive calls may return any value consistent with the specification. This result may be spurious, so our method analyzes the assumptions made in angelic executions and gradually strengthens the specification to find the correct program.

We have implemented the proposed algorithm in a tool called \Burst and showed that it can synthesize programs from a variety of specifications, including examples, reference implementations, and logical formulas. Our comparison against three synthesizers (\Smyth, \Leon, and \Synquid) shows that \Burst advances the state-of-the-art in synthesizing recursive functional programs. 

\section*{Acknowledgements}

We thank our anonymous reviewers, our anonymous shepherd, Ben Mariano, and Todd Millstein for their helpful feedback. We thank Michael James, Tristan Knoth, and Nadia Polikarpova for their help with Synquid and the tooling surrounding it. This work is supported in part by NSF Award 1762299, NSF Award 1811865, NSF Award 1918651, DARPA Contract FA8750-20-C-0208, and US Air Force and DARPA Contract FA8750-20-C-0002.


\bibliography{ref}


\begin{thebibliography}{33}


\ifx \showCODEN    \undefined \def \showCODEN     #1{\unskip}     \fi
\ifx \showDOI      \undefined \def \showDOI       #1{#1}\fi
\ifx \showISBNx    \undefined \def \showISBNx     #1{\unskip}     \fi
\ifx \showISBNxiii \undefined \def \showISBNxiii  #1{\unskip}     \fi
\ifx \showISSN     \undefined \def \showISSN      #1{\unskip}     \fi
\ifx \showLCCN     \undefined \def \showLCCN      #1{\unskip}     \fi
\ifx \shownote     \undefined \def \shownote      #1{#1}          \fi
\ifx \showarticletitle \undefined \def \showarticletitle #1{#1}   \fi
\ifx \showURL      \undefined \def \showURL       {\relax}        \fi
\providecommand\bibfield[2]{#2}
\providecommand\bibinfo[2]{#2}
\providecommand\natexlab[1]{#1}
\providecommand\showeprint[2][]{arXiv:#2}

\bibitem[\protect\citeauthoryear{Albarghouthi, Gulwani, and
  Kincaid}{Albarghouthi et~al\mbox{.}}{2013}]%
        {escher}
\bibfield{author}{\bibinfo{person}{Aws Albarghouthi}, \bibinfo{person}{Sumit
  Gulwani}, {and} \bibinfo{person}{Zachary Kincaid}.}
  \bibinfo{year}{2013}\natexlab{}.
\newblock \showarticletitle{Recursive Program Synthesis}. In
  \bibinfo{booktitle}{\emph{Computer Aided Verification}},
  \bibfield{editor}{\bibinfo{person}{Natasha Sharygina} {and}
  \bibinfo{person}{Helmut Veith}} (Eds.). \bibinfo{publisher}{Springer Berlin
  Heidelberg}, \bibinfo{address}{Berlin, Heidelberg},
  \bibinfo{pages}{934--950}.
\newblock
\showISBNx{978-3-642-39799-8}
\urldef\tempurl%
\url{https://doi.org/10.1007/978-3-642-39799-8_67}
\showDOI{\tempurl}


\bibitem[\protect\citeauthoryear{Alur, {\v{C}}ern{\'y}, and Radhakrishna}{Alur
  et~al\mbox{.}}{2015}]%
        {stun}
\bibfield{author}{\bibinfo{person}{Rajeev Alur}, \bibinfo{person}{Pavol
  {\v{C}}ern{\'y}}, {and} \bibinfo{person}{Arjun Radhakrishna}.}
  \bibinfo{year}{2015}\natexlab{}.
\newblock \showarticletitle{Synthesis Through Unification}. In
  \bibinfo{booktitle}{\emph{Computer Aided Verification}},
  \bibfield{editor}{\bibinfo{person}{Daniel Kroening} {and}
  \bibinfo{person}{Corina~S. P{\u{a}}s{\u{a}}reanu}} (Eds.).
  \bibinfo{publisher}{Springer International Publishing},
  \bibinfo{address}{Cham}, \bibinfo{pages}{163--179}.
\newblock
\showISBNx{978-3-319-21668-3}
\urldef\tempurl%
\url{https://doi.org/10.1007/978-3-319-21668-3_10}
\showDOI{\tempurl}


\bibitem[\protect\citeauthoryear{Barowy, Gulwani, Hart, and Zorn}{Barowy
  et~al\mbox{.}}{2015}]%
        {flashrelate}
\bibfield{author}{\bibinfo{person}{Daniel~W. Barowy}, \bibinfo{person}{Sumit
  Gulwani}, \bibinfo{person}{Ted Hart}, {and} \bibinfo{person}{Benjamin Zorn}.}
  \bibinfo{year}{2015}\natexlab{}.
\newblock \showarticletitle{FlashRelate: Extracting Relational Data from
  Semi-Structured Spreadsheets Using Examples}.
\newblock \bibinfo{journal}{\emph{SIGPLAN Not.}} \bibinfo{volume}{50},
  \bibinfo{number}{6} (\bibinfo{date}{jun} \bibinfo{year}{2015}),
  \bibinfo{pages}{218–228}.
\newblock
\showISSN{0362-1340}
\urldef\tempurl%
\url{https://doi.org/10.1145/2813885.2737952}
\showDOI{\tempurl}


\bibitem[\protect\citeauthoryear{Broy and Wirsing}{Broy and Wirsing}{1981}]%
        {angelic-first}
\bibfield{author}{\bibinfo{person}{M. Broy} {and} \bibinfo{person}{M.
  Wirsing}.} \bibinfo{year}{1981}\natexlab{}.
\newblock \showarticletitle{On the algebraic specification of nondeterministic
  programming languages}. In \bibinfo{booktitle}{\emph{CAAP '81}},
  \bibfield{editor}{\bibinfo{person}{Egidio Astesiano} {and}
  \bibinfo{person}{Corrado B{\"o}hm}} (Eds.). \bibinfo{publisher}{Springer
  Berlin Heidelberg}, \bibinfo{address}{Berlin, Heidelberg},
  \bibinfo{pages}{162--179}.
\newblock
\showISBNx{978-3-540-38716-9}
\urldef\tempurl%
\url{https://doi.org/10.1007/3-540-10828-9_61}
\showDOI{\tempurl}


\bibitem[\protect\citeauthoryear{Comon, Dauchet, Gilleron, Jacquemard, Lugiez,
  L{\"o}ding, Tison, and Tommasi}{Comon et~al\mbox{.}}{2008}]%
        {tata}
\bibfield{author}{\bibinfo{person}{Hubert Comon}, \bibinfo{person}{Max
  Dauchet}, \bibinfo{person}{R{\'e}mi Gilleron}, \bibinfo{person}{Florent
  Jacquemard}, \bibinfo{person}{Denis Lugiez}, \bibinfo{person}{Christof
  L{\"o}ding}, \bibinfo{person}{Sophie Tison}, {and} \bibinfo{person}{Marc
  Tommasi}.} \bibinfo{year}{2008}\natexlab{}.
\newblock \bibinfo{booktitle}{\emph{{Tree Automata Techniques and
  Applications}}}.
\newblock 262 pages.
\newblock
\urldef\tempurl%
\url{https://hal.inria.fr/hal-03367725}
\showURL{%
\tempurl}


\bibitem[\protect\citeauthoryear{Downey, Sethi, and Tarjan}{Downey
  et~al\mbox{.}}{1980}]%
        {Downey:1980}
\bibfield{author}{\bibinfo{person}{Peter~J. Downey}, \bibinfo{person}{Ravi
  Sethi}, {and} \bibinfo{person}{Robert~Endre Tarjan}.}
  \bibinfo{year}{1980}\natexlab{}.
\newblock \showarticletitle{Variations on the Common Subexpression Problem}.
\newblock \bibinfo{journal}{\emph{J. ACM}} \bibinfo{volume}{27},
  \bibinfo{number}{4} (\bibinfo{date}{Oct.} \bibinfo{year}{1980}),
  \bibinfo{pages}{758--771}.
\newblock
\showISSN{0004-5411}
\urldef\tempurl%
\url{https://doi.org/10.1145/322217.322228}
\showDOI{\tempurl}


\bibitem[\protect\citeauthoryear{Feser, Chaudhuri, and Dillig}{Feser
  et~al\mbox{.}}{2015}]%
        {lambda2}
\bibfield{author}{\bibinfo{person}{John~K. Feser}, \bibinfo{person}{Swarat
  Chaudhuri}, {and} \bibinfo{person}{Isil Dillig}.}
  \bibinfo{year}{2015}\natexlab{}.
\newblock \showarticletitle{Synthesizing Data Structure Transformations from
  Input-Output Examples}.
\newblock \bibinfo{journal}{\emph{SIGPLAN Not.}} \bibinfo{volume}{50},
  \bibinfo{number}{6} (\bibinfo{date}{June} \bibinfo{year}{2015}),
  \bibinfo{pages}{229–239}.
\newblock
\showISSN{0362-1340}
\urldef\tempurl%
\url{https://doi.org/10.1145/2813885.2737977}
\showDOI{\tempurl}


\bibitem[\protect\citeauthoryear{Frankle, Osera, Walker, and Zdancewic}{Frankle
  et~al\mbox{.}}{2016}]%
        {myth2}
\bibfield{author}{\bibinfo{person}{Jonathan Frankle},
  \bibinfo{person}{Peter-Michael Osera}, \bibinfo{person}{David Walker}, {and}
  \bibinfo{person}{Steve Zdancewic}.} \bibinfo{year}{2016}\natexlab{}.
\newblock \showarticletitle{Example-Directed Synthesis: A Type-Theoretic
  Interpretation}.
\newblock \bibinfo{journal}{\emph{SIGPLAN Not.}} \bibinfo{volume}{51},
  \bibinfo{number}{1} (\bibinfo{date}{Jan.} \bibinfo{year}{2016}),
  \bibinfo{pages}{802–815}.
\newblock
\showISSN{0362-1340}
\urldef\tempurl%
\url{https://doi.org/10.1145/2914770.2837629}
\showDOI{\tempurl}


\bibitem[\protect\citeauthoryear{Gulwani}{Gulwani}{2011}]%
        {flashfill}
\bibfield{author}{\bibinfo{person}{Sumit Gulwani}.}
  \bibinfo{year}{2011}\natexlab{}.
\newblock \showarticletitle{Automating String Processing in Spreadsheets Using
  Input-Output Examples}. In \bibinfo{booktitle}{\emph{Proceedings of the 38th
  Annual ACM SIGPLAN-SIGACT Symposium on Principles of Programming Languages}}
  (Austin, Texas, USA) \emph{(\bibinfo{series}{POPL '11})}.
  \bibinfo{publisher}{Association for Computing Machinery},
  \bibinfo{address}{New York, NY, USA}, \bibinfo{pages}{317–330}.
\newblock
\showISBNx{9781450304900}
\urldef\tempurl%
\url{https://doi.org/10.1145/1926385.1926423}
\showDOI{\tempurl}


\bibitem[\protect\citeauthoryear{Gulwani, Polozov, and Singh}{Gulwani
  et~al\mbox{.}}{2017}]%
        {ps-now}
\bibfield{author}{\bibinfo{person}{Sumit Gulwani}, \bibinfo{person}{Oleksandr
  Polozov}, {and} \bibinfo{person}{Rishabh Singh}.}
  \bibinfo{year}{2017}\natexlab{}.
\newblock \showarticletitle{Program Synthesis}.
\newblock \bibinfo{journal}{\emph{Foundations and TrendsÂ® in Programming
  Languages}} \bibinfo{volume}{4}, \bibinfo{number}{1-2}
  (\bibinfo{year}{2017}), \bibinfo{pages}{1--119}.
\newblock
\showISSN{2325-1107}
\urldef\tempurl%
\url{https://doi.org/10.1561/2500000010}
\showDOI{\tempurl}


\bibitem[\protect\citeauthoryear{Huang, Qiu, Shen, and Wang}{Huang
  et~al\mbox{.}}{2020}]%
        {dryadsynth}
\bibfield{author}{\bibinfo{person}{Kangjing Huang}, \bibinfo{person}{Xiaokang
  Qiu}, \bibinfo{person}{Peiyuan Shen}, {and} \bibinfo{person}{Yanjun Wang}.}
  \bibinfo{year}{2020}\natexlab{}.
\newblock \showarticletitle{Reconciling Enumerative and Deductive Program
  Synthesis}. In \bibinfo{booktitle}{\emph{Proceedings of the 41st ACM SIGPLAN
  Conference on Programming Language Design and Implementation}} (London, UK)
  \emph{(\bibinfo{series}{PLDI 2020})}. \bibinfo{publisher}{Association for
  Computing Machinery}, \bibinfo{address}{New York, NY, USA},
  \bibinfo{pages}{1159–1174}.
\newblock
\showISBNx{9781450376136}
\urldef\tempurl%
\url{https://doi.org/10.1145/3385412.3386027}
\showDOI{\tempurl}


\bibitem[\protect\citeauthoryear{Itzhaky, Peleg, Polikarpova, Rowe, and
  Sergey}{Itzhaky et~al\mbox{.}}{2021}]%
        {cypress}
\bibfield{author}{\bibinfo{person}{Shachar Itzhaky}, \bibinfo{person}{Hila
  Peleg}, \bibinfo{person}{Nadia Polikarpova}, \bibinfo{person}{Reuben N.~S.
  Rowe}, {and} \bibinfo{person}{Ilya Sergey}.} \bibinfo{year}{2021}\natexlab{}.
\newblock \showarticletitle{Cyclic Program Synthesis}. In
  \bibinfo{booktitle}{\emph{Proceedings of the 42nd ACM SIGPLAN International
  Conference on Programming Language Design and Implementation}} (Virtual,
  Canada) \emph{(\bibinfo{series}{PLDI 2021})}. \bibinfo{publisher}{Association
  for Computing Machinery}, \bibinfo{address}{New York, NY, USA},
  \bibinfo{pages}{944–959}.
\newblock
\showISBNx{9781450383912}
\urldef\tempurl%
\url{https://doi.org/10.1145/3453483.3454087}
\showDOI{\tempurl}


\bibitem[\protect\citeauthoryear{Kitzelmann, Schmid, Olsson, and
  Kaelbling}{Kitzelmann et~al\mbox{.}}{2006}]%
        {igor}
\bibfield{author}{\bibinfo{person}{Emanuel Kitzelmann}, \bibinfo{person}{Ute
  Schmid}, \bibinfo{person}{Roland Olsson}, {and} \bibinfo{person}{Leslie~Pack
  Kaelbling}.} \bibinfo{year}{2006}\natexlab{}.
\newblock \showarticletitle{Inductive synthesis of functional programs: An
  explanation based generalization approach}.
\newblock \bibinfo{journal}{\emph{Journal of Machine Learning Research}}
  \bibinfo{volume}{7}, \bibinfo{number}{2} (\bibinfo{year}{2006}).
\newblock


\bibitem[\protect\citeauthoryear{Kneuss, Kuraj, Kuncak, and Suter}{Kneuss
  et~al\mbox{.}}{2013}]%
        {leon}
\bibfield{author}{\bibinfo{person}{Etienne Kneuss}, \bibinfo{person}{Ivan
  Kuraj}, \bibinfo{person}{Viktor Kuncak}, {and} \bibinfo{person}{Philippe
  Suter}.} \bibinfo{year}{2013}\natexlab{}.
\newblock \showarticletitle{Synthesis modulo Recursive Functions}. In
  \bibinfo{booktitle}{\emph{Proceedings of the 2013 ACM SIGPLAN International
  Conference on Object Oriented Programming Systems Languages \& Applications}}
  (Indianapolis, Indiana, USA) \emph{(\bibinfo{series}{OOPSLA '13})}.
  \bibinfo{publisher}{Association for Computing Machinery},
  \bibinfo{address}{New York, NY, USA}, \bibinfo{pages}{407–426}.
\newblock
\showISBNx{9781450323741}
\urldef\tempurl%
\url{https://doi.org/10.1145/2509136.2509555}
\showDOI{\tempurl}


\bibitem[\protect\citeauthoryear{Lau, Domingos, and Weld}{Lau
  et~al\mbox{.}}{2003}]%
        {lau2003learning}
\bibfield{author}{\bibinfo{person}{Tessa Lau}, \bibinfo{person}{Pedro
  Domingos}, {and} \bibinfo{person}{Daniel~S. Weld}.}
  \bibinfo{year}{2003}\natexlab{}.
\newblock \showarticletitle{Learning Programs from Traces Using Version Space
  Algebra}. In \bibinfo{booktitle}{\emph{Proceedings of the 2nd International
  Conference on Knowledge Capture}} (Sanibel Island, FL, USA)
  \emph{(\bibinfo{series}{K-CAP '03})}. \bibinfo{publisher}{Association for
  Computing Machinery}, \bibinfo{address}{New York, NY, USA},
  \bibinfo{pages}{36–43}.
\newblock
\showISBNx{1581135831}
\urldef\tempurl%
\url{https://doi.org/10.1145/945645.945654}
\showDOI{\tempurl}


\bibitem[\protect\citeauthoryear{Le and Gulwani}{Le and Gulwani}{2014}]%
        {flashextract}
\bibfield{author}{\bibinfo{person}{Vu Le} {and} \bibinfo{person}{Sumit
  Gulwani}.} \bibinfo{year}{2014}\natexlab{}.
\newblock \showarticletitle{FlashExtract: A Framework for Data Extraction by
  Examples}. In \bibinfo{booktitle}{\emph{Proceedings of the 35th ACM SIGPLAN
  Conference on Programming Language Design and Implementation}} (Edinburgh,
  United Kingdom) \emph{(\bibinfo{series}{PLDI '14})}.
  \bibinfo{publisher}{Association for Computing Machinery},
  \bibinfo{address}{New York, NY, USA}, \bibinfo{pages}{542–553}.
\newblock
\showISBNx{9781450327848}
\urldef\tempurl%
\url{https://doi.org/10.1145/2594291.2594333}
\showDOI{\tempurl}


\bibitem[\protect\citeauthoryear{Lee}{Lee}{2021}]%
        {duet}
\bibfield{author}{\bibinfo{person}{Woosuk Lee}.}
  \bibinfo{year}{2021}\natexlab{}.
\newblock \showarticletitle{Combining the Top-down Propagation and Bottom-up
  Enumeration for Inductive Program Synthesis}.
\newblock \bibinfo{journal}{\emph{Proc. ACM Program. Lang.}}
  \bibinfo{volume}{5}, \bibinfo{number}{POPL}, Article \bibinfo{articleno}{54}
  (\bibinfo{date}{Jan.} \bibinfo{year}{2021}), \bibinfo{numpages}{28}~pages.
\newblock
\urldef\tempurl%
\url{https://doi.org/10.1145/3434335}
\showDOI{\tempurl}


\bibitem[\protect\citeauthoryear{Long and Rinard}{Long and Rinard}{2015}]%
        {spr}
\bibfield{author}{\bibinfo{person}{Fan Long} {and} \bibinfo{person}{Martin
  Rinard}.} \bibinfo{year}{2015}\natexlab{}.
\newblock \showarticletitle{Staged Program Repair with Condition Synthesis}. In
  \bibinfo{booktitle}{\emph{Proceedings of the 2015 10th Joint Meeting on
  Foundations of Software Engineering}} (Bergamo, Italy)
  \emph{(\bibinfo{series}{ESEC/FSE 2015})}. \bibinfo{publisher}{Association for
  Computing Machinery}, \bibinfo{address}{New York, NY, USA},
  \bibinfo{pages}{166–178}.
\newblock
\showISBNx{9781450336758}
\urldef\tempurl%
\url{https://doi.org/10.1145/2786805.2786811}
\showDOI{\tempurl}


\bibitem[\protect\citeauthoryear{Lubin, Collins, Omar, and Chugh}{Lubin
  et~al\mbox{.}}{2020}]%
        {smyth}
\bibfield{author}{\bibinfo{person}{Justin Lubin}, \bibinfo{person}{Nick
  Collins}, \bibinfo{person}{Cyrus Omar}, {and} \bibinfo{person}{Ravi Chugh}.}
  \bibinfo{year}{2020}\natexlab{}.
\newblock \showarticletitle{Program Sketching with Live Bidirectional
  Evaluation}.
\newblock \bibinfo{journal}{\emph{Proc. ACM Program. Lang.}}
  \bibinfo{volume}{4}, \bibinfo{number}{ICFP}, Article \bibinfo{articleno}{109}
  (\bibinfo{date}{Aug.} \bibinfo{year}{2020}), \bibinfo{numpages}{29}~pages.
\newblock
\urldef\tempurl%
\url{https://doi.org/10.1145/3408991}
\showDOI{\tempurl}


\bibitem[\protect\citeauthoryear{Madhusudan}{Madhusudan}{2011}]%
        {madhusudan2011synthesizing}
\bibfield{author}{\bibinfo{person}{Parthasarathy Madhusudan}.}
  \bibinfo{year}{2011}\natexlab{}.
\newblock \showarticletitle{{Synthesizing Reactive Programs}}. In
  \bibinfo{booktitle}{\emph{Computer Science Logic (CSL'11) - 25th
  International Workshop/20th Annual Conference of the EACSL}}
  \emph{(\bibinfo{series}{Leibniz International Proceedings in Informatics
  (LIPIcs)}, Vol.~\bibinfo{volume}{12})},
  \bibfield{editor}{\bibinfo{person}{Marc Bezem}} (Ed.).
  \bibinfo{publisher}{Schloss Dagstuhl--Leibniz-Zentrum fuer Informatik},
  \bibinfo{address}{Dagstuhl, Germany}, \bibinfo{pages}{428--442}.
\newblock
\showISBNx{978-3-939897-32-3}
\showISSN{1868-8969}
\urldef\tempurl%
\url{https://doi.org/10.4230/LIPIcs.CSL.2011.428}
\showDOI{\tempurl}


\bibitem[\protect\citeauthoryear{Mechtaev, Yi, and Roychoudhury}{Mechtaev
  et~al\mbox{.}}{2016}]%
        {angelix}
\bibfield{author}{\bibinfo{person}{Sergey Mechtaev}, \bibinfo{person}{Jooyong
  Yi}, {and} \bibinfo{person}{Abhik Roychoudhury}.}
  \bibinfo{year}{2016}\natexlab{}.
\newblock \showarticletitle{Angelix: Scalable Multiline Program Patch Synthesis
  via Symbolic Analysis}. In \bibinfo{booktitle}{\emph{Proceedings of the 38th
  International Conference on Software Engineering}} (Austin, Texas)
  \emph{(\bibinfo{series}{ICSE '16})}. \bibinfo{publisher}{Association for
  Computing Machinery}, \bibinfo{address}{New York, NY, USA},
  \bibinfo{pages}{691–701}.
\newblock
\showISBNx{9781450339001}
\urldef\tempurl%
\url{https://doi.org/10.1145/2884781.2884807}
\showDOI{\tempurl}


\bibitem[\protect\citeauthoryear{Miltner, Nuñez, Brendel, Chaudhuri, and
  Dillig}{Miltner et~al\mbox{.}}{2021}]%
        {burst-full}
\bibfield{author}{\bibinfo{person}{Anders Miltner},
  \bibinfo{person}{Adrian~Trejo Nuñez}, \bibinfo{person}{Ana Brendel},
  \bibinfo{person}{Swarat Chaudhuri}, {and} \bibinfo{person}{Isil Dillig}.}
  \bibinfo{year}{2021}\natexlab{}.
\newblock \bibinfo{title}{Bottom-up Synthesis of Recursive Functional Programs
  using Angelic Execution}.
\newblock
\newblock
\showeprint[arxiv]{2107.06253}~[cs.PL]


\bibitem[\protect\citeauthoryear{Odena, Shi, Bieber, Singh, and Sutton}{Odena
  et~al\mbox{.}}{2020}]%
        {odena2020bustle}
\bibfield{author}{\bibinfo{person}{Augustus Odena}, \bibinfo{person}{Kensen
  Shi}, \bibinfo{person}{David Bieber}, \bibinfo{person}{Rishabh Singh}, {and}
  \bibinfo{person}{Charles Sutton}.} \bibinfo{year}{2020}\natexlab{}.
\newblock \showarticletitle{BUSTLE: Bottom-up program-Synthesis Through
  Learning-guided Exploration}.
\newblock \bibinfo{journal}{\emph{arXiv preprint arXiv:2007.14381}}
  (\bibinfo{year}{2020}).
\newblock


\bibitem[\protect\citeauthoryear{Osera and Zdancewic}{Osera and
  Zdancewic}{2015}]%
        {myth}
\bibfield{author}{\bibinfo{person}{Peter-Michael Osera} {and}
  \bibinfo{person}{Steve Zdancewic}.} \bibinfo{year}{2015}\natexlab{}.
\newblock \showarticletitle{Type-and-Example-Directed Program Synthesis}. In
  \bibinfo{booktitle}{\emph{Proceedings of the 36th ACM SIGPLAN Conference on
  Programming Language Design and Implementation}} (Portland, OR, USA)
  \emph{(\bibinfo{series}{PLDI '15})}. \bibinfo{publisher}{Association for
  Computing Machinery}, \bibinfo{address}{New York, NY, USA},
  \bibinfo{pages}{619–630}.
\newblock
\showISBNx{9781450334686}
\urldef\tempurl%
\url{https://doi.org/10.1145/2737924.2738007}
\showDOI{\tempurl}


\bibitem[\protect\citeauthoryear{Polikarpova, Kuraj, and
  Solar-Lezama}{Polikarpova et~al\mbox{.}}{2016}]%
        {synquid}
\bibfield{author}{\bibinfo{person}{Nadia Polikarpova}, \bibinfo{person}{Ivan
  Kuraj}, {and} \bibinfo{person}{Armando Solar-Lezama}.}
  \bibinfo{year}{2016}\natexlab{}.
\newblock \showarticletitle{Program Synthesis from Polymorphic Refinement
  Types}. In \bibinfo{booktitle}{\emph{Proceedings of the 37th ACM SIGPLAN
  Conference on Programming Language Design and Implementation}} (Santa
  Barbara, CA, USA) \emph{(\bibinfo{series}{PLDI '16})}.
  \bibinfo{publisher}{Association for Computing Machinery},
  \bibinfo{address}{New York, NY, USA}, \bibinfo{pages}{522–538}.
\newblock
\showISBNx{9781450342612}
\urldef\tempurl%
\url{https://doi.org/10.1145/2908080.2908093}
\showDOI{\tempurl}


\bibitem[\protect\citeauthoryear{Polozov and Gulwani}{Polozov and
  Gulwani}{2015}]%
        {flashmeta}
\bibfield{author}{\bibinfo{person}{Oleksandr Polozov} {and}
  \bibinfo{person}{Sumit Gulwani}.} \bibinfo{year}{2015}\natexlab{}.
\newblock \showarticletitle{FlashMeta: A Framework for Inductive Program
  Synthesis}. In \bibinfo{booktitle}{\emph{Proceedings of the 2015 ACM SIGPLAN
  International Conference on Object-Oriented Programming, Systems, Languages,
  and Applications}} (Pittsburgh, PA, USA) \emph{(\bibinfo{series}{OOPSLA
  2015})}. \bibinfo{publisher}{Association for Computing Machinery},
  \bibinfo{address}{New York, NY, USA}, \bibinfo{pages}{107–126}.
\newblock
\showISBNx{9781450336895}
\urldef\tempurl%
\url{https://doi.org/10.1145/2814270.2814310}
\showDOI{\tempurl}


\bibitem[\protect\citeauthoryear{Rolim, Soares, D'Antoni, Polozov, Gulwani,
  Gheyi, Suzuki, and Hartmann}{Rolim et~al\mbox{.}}{2017}]%
        {refazer}
\bibfield{author}{\bibinfo{person}{Reudismam Rolim}, \bibinfo{person}{Gustavo
  Soares}, \bibinfo{person}{Loris D'Antoni}, \bibinfo{person}{Oleksandr
  Polozov}, \bibinfo{person}{Sumit Gulwani}, \bibinfo{person}{Rohit Gheyi},
  \bibinfo{person}{Ryo Suzuki}, {and} \bibinfo{person}{Bj\"{o}rn Hartmann}.}
  \bibinfo{year}{2017}\natexlab{}.
\newblock \showarticletitle{Learning Syntactic Program Transformations from
  Examples}. In \bibinfo{booktitle}{\emph{Proceedings of the 39th International
  Conference on Software Engineering}} (Buenos Aires, Argentina)
  \emph{(\bibinfo{series}{ICSE '17})}. \bibinfo{publisher}{IEEE Press},
  \bibinfo{pages}{404–415}.
\newblock
\showISBNx{9781538638682}
\urldef\tempurl%
\url{https://doi.org/10.1109/ICSE.2017.44}
\showDOI{\tempurl}


\bibitem[\protect\citeauthoryear{Shi, Steinhardt, and Liang}{Shi
  et~al\mbox{.}}{2019}]%
        {shi2019frangel}
\bibfield{author}{\bibinfo{person}{Kensen Shi}, \bibinfo{person}{Jacob
  Steinhardt}, {and} \bibinfo{person}{Percy Liang}.}
  \bibinfo{year}{2019}\natexlab{}.
\newblock \showarticletitle{FrAngel: Component-Based Synthesis with Control
  Structures}.
\newblock \bibinfo{journal}{\emph{Proc. ACM Program. Lang.}}
  \bibinfo{volume}{3}, \bibinfo{number}{POPL}, Article \bibinfo{articleno}{73}
  (\bibinfo{date}{jan} \bibinfo{year}{2019}), \bibinfo{numpages}{29}~pages.
\newblock
\urldef\tempurl%
\url{https://doi.org/10.1145/3290386}
\showDOI{\tempurl}


\bibitem[\protect\citeauthoryear{Summers}{Summers}{1977}]%
        {thesys}
\bibfield{author}{\bibinfo{person}{Phillip~D. Summers}.}
  \bibinfo{year}{1977}\natexlab{}.
\newblock \showarticletitle{A Methodology for LISP Program Construction from
  Examples}.
\newblock \bibinfo{journal}{\emph{J. ACM}} \bibinfo{volume}{24},
  \bibinfo{number}{1} (\bibinfo{date}{jan} \bibinfo{year}{1977}),
  \bibinfo{pages}{161–175}.
\newblock
\showISSN{0004-5411}
\urldef\tempurl%
\url{https://doi.org/10.1145/321992.322002}
\showDOI{\tempurl}


\bibitem[\protect\citeauthoryear{Udupa, Raghavan, Deshmukh, Mador-Haim, Martin,
  and Alur}{Udupa et~al\mbox{.}}{2013}]%
        {transit}
\bibfield{author}{\bibinfo{person}{Abhishek Udupa}, \bibinfo{person}{Arun
  Raghavan}, \bibinfo{person}{Jyotirmoy~V. Deshmukh}, \bibinfo{person}{Sela
  Mador-Haim}, \bibinfo{person}{Milo~M.K. Martin}, {and}
  \bibinfo{person}{Rajeev Alur}.} \bibinfo{year}{2013}\natexlab{}.
\newblock \showarticletitle{TRANSIT: Specifying Protocols with Concolic
  Snippets}. In \bibinfo{booktitle}{\emph{Proceedings of the 34th ACM SIGPLAN
  Conference on Programming Language Design and Implementation}} (Seattle,
  Washington, USA) \emph{(\bibinfo{series}{PLDI '13})}.
  \bibinfo{publisher}{Association for Computing Machinery},
  \bibinfo{address}{New York, NY, USA}, \bibinfo{pages}{287–296}.
\newblock
\showISBNx{9781450320146}
\urldef\tempurl%
\url{https://doi.org/10.1145/2491956.2462174}
\showDOI{\tempurl}


\bibitem[\protect\citeauthoryear{Wang, Dillig, and Singh}{Wang
  et~al\mbox{.}}{2017a}]%
        {blaze}
\bibfield{author}{\bibinfo{person}{Xinyu Wang}, \bibinfo{person}{Isil Dillig},
  {and} \bibinfo{person}{Rishabh Singh}.} \bibinfo{year}{2017}\natexlab{a}.
\newblock \showarticletitle{Program Synthesis Using Abstraction Refinement}.
\newblock \bibinfo{journal}{\emph{Proc. ACM Program. Lang.}}
  \bibinfo{volume}{2}, \bibinfo{number}{POPL}, Article \bibinfo{articleno}{63}
  (\bibinfo{date}{Dec.} \bibinfo{year}{2017}), \bibinfo{numpages}{30}~pages.
\newblock
\urldef\tempurl%
\url{https://doi.org/10.1145/3158151}
\showDOI{\tempurl}


\bibitem[\protect\citeauthoryear{Wang, Dillig, and Singh}{Wang
  et~al\mbox{.}}{2017b}]%
        {dace}
\bibfield{author}{\bibinfo{person}{Xinyu Wang}, \bibinfo{person}{Isil Dillig},
  {and} \bibinfo{person}{Rishabh Singh}.} \bibinfo{year}{2017}\natexlab{b}.
\newblock \showarticletitle{Synthesis of Data Completion Scripts Using Finite
  Tree Automata}.
\newblock \bibinfo{journal}{\emph{Proc. ACM Program. Lang.}}
  \bibinfo{volume}{1}, \bibinfo{number}{OOPSLA}, Article
  \bibinfo{articleno}{62} (\bibinfo{date}{Oct.} \bibinfo{year}{2017}),
  \bibinfo{numpages}{26}~pages.
\newblock
\urldef\tempurl%
\url{https://doi.org/10.1145/3133886}
\showDOI{\tempurl}


\bibitem[\protect\citeauthoryear{Wang, Wang, and Dillig}{Wang
  et~al\mbox{.}}{2018}]%
        {relish}
\bibfield{author}{\bibinfo{person}{Yuepeng Wang}, \bibinfo{person}{Xinyu Wang},
  {and} \bibinfo{person}{Isil Dillig}.} \bibinfo{year}{2018}\natexlab{}.
\newblock \showarticletitle{Relational Program Synthesis}.
\newblock \bibinfo{journal}{\emph{Proc. ACM Program. Lang.}}
  \bibinfo{volume}{2}, \bibinfo{number}{OOPSLA}, Article
  \bibinfo{articleno}{155} (\bibinfo{date}{Oct.} \bibinfo{year}{2018}),
  \bibinfo{numpages}{27}~pages.
\newblock
\urldef\tempurl%
\url{https://doi.org/10.1145/3276525}
\showDOI{\tempurl}


\end{thebibliography}

\ifappendices

\appendix

\section{Additional Algorithms}
\label{sec:additional-algs}

\begin{algorithm}
\begin{algorithmic}[ht]
\Statex \Input{Ground specification \groundSpec} 
\Statex \Output{A queue element $(pri,P,\witness,\groundSpec)$, or $\bot$.} 
\Procedure{MakeQE}{\groundSpec}
\State $res \gets \Call{SynthesizeAngelic}{\groundSpec}$
\Match{$res$}
\Case{$\mathsf{Failure}(\_)$}{\Return $\bot$}
\EndCase
\Case{$\mathsf{Success}(P,\witness)$}{}
\State $pri \gets |P|$
\State \Return $(pri,P,\witness,\groundSpec)$
\EndCase
\EndMatch
\EndProcedure
\end{algorithmic}
\begin{algorithmic}
\Statex \Input{Ground specification \groundSpec}
\Statex \Output{A program $P$, or $\bot$.}
\Procedure{Synthesize}{\groundSpec}
\State $PQ \gets$ Empty
\State $PQ.\textsc{Push}(\Call{MakeQE}{\groundSpec})$
\While{PQ $\neq$ Empty}
\State $(\_,P,\witness,\groundSpec) \gets PQ.\textsc{Pop}()$
\If{$P \models \groundSpec$} \Return $P$
\Else
\State $PQ.\textsc{Push}(\Call{MakeQE}{\groundSpec \wedge \witness})$
\State $PQ.\textsc{Push}(\Call{MakeQE}{\groundSpec \wedge \neg\witness})$
\EndIf
\EndWhile
\EndProcedure
\end{algorithmic}
\caption{Recursive Synthesis Algorithm Guaranteeing Optimality}
\label{alg:opt-recursive-synthesis}
\end{algorithm}

\section{Proofs}
\label{sec:proofs}
\subsection{Completeness of \Synthesize}

We prove the completeness of \Synthesize by introducing the following definition and using it to prove the claim.

\begin{definition}
We define $\Respects(e \Downarrow v,\varphi)$ as the following.
\begin{mathpar}
    \inferrule
    {
      \Respects(e_2 \Downarrow v_2, \varphi)\\
      \Respects(e_1 [v_2/x] \Downarrow v_3, \varphi)\\
      {\sf SAT}(\varphi\wedge f(v_2) = v_3)
    }
    {
      \Respects((\texttt{rec}\ f(x)=e_1)\ e_2 \Downarrow v_3, \varphi)
    }\\
    
    \inferrule
    {
    }
    {
      \Respects(\texttt{unit} \Downarrow \texttt{unit}, \varphi)
    }
    
    \inferrule
    {
      \Respects(e_1 \Downarrow v_1,\varphi)\\
      \Respects(e_2 \Downarrow v_2,\varphi)\\
    }
    {
      \Respects((e_1,e_2) \Downarrow (v_1,v_2),\varphi)
    }\\
    
    \inferrule
    {
      \Respects(e \Downarrow (v_1,v_2), \varphi)
    }
    {
      \Respects(\texttt{fst } e \Downarrow v_1, \varphi)
    }
    
    \inferrule
    {
      \Respects(e \Downarrow (v_1,v_2), \varphi)
    }
    {
      \Respects(\texttt{snd } e \Downarrow v_2, \varphi)
    }\\
    
    \inferrule
    {
      \Respects(e \Downarrow v, \varphi)
    }
    {
      \Respects(\texttt{inl } e \Downarrow \texttt{inl } v, \varphi)
    }
    
    \inferrule
    {
      \Respects(e \Downarrow v, \varphi)
    }
    {
      \Respects(\texttt{inr } e \Downarrow \texttt{inr } v, \varphi)
    }\\
    
    \inferrule
    {
      \Respects(e \Downarrow \texttt{inl } v, \varphi)
    }
    {
      \Respects(\texttt{unl } e \Downarrow v, \varphi)
    }
    
    \inferrule
    {
      \Respects(e \Downarrow \texttt{inr } v, \varphi)
    }
    {
      \Respects(\texttt{unr } e \Downarrow v, \varphi)
    }\\
    
    \inferrule
    {
      \Respects(e_3 \Downarrow \texttt{inl } v_3, \varphi)\\
      \Respects(e_1 \Downarrow v_1, \varphi)
    }
    {
      \Respects(\texttt{switch } e_3 \texttt{ on inl } x_1 \to e_1 \texttt{ inr } x_2 \to e_2
        \Downarrow v_1, \varphi)
    }
    
    \inferrule
    {
      \Respects(e_3 \Downarrow \texttt{inr } v_3, \varphi)\\
      \Respects(e_2 \Downarrow v_2, \varphi)
    }
    {
      \Respects(\texttt{switch } e_3 \texttt{ on inl } x_1 \to e_1 \texttt{ inr } x_2 \to e_2
        \Downarrow v_2, \varphi)
    }
\end{mathpar}
\end{definition}

The following lemma relates the semantics in \autoref{fig:angelic-semantics} and the \Respects relation.
\begin{lemma}\label{lemma:respects}
$\Respects(e \Downarrow v,\varphi) \implies e \Downarrow^\varphi v$
\end{lemma}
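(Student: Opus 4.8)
The plan is to prove Lemma~\ref{lemma:respects} by structural induction on the derivation of $\Respects(e \Downarrow v, \varphi)$. The statement is essentially a ``forgetful'' projection: the $\Respects$ judgment carries exactly the same syntactic structure as the angelic evaluation judgment $\Downarrow^\varphi$ from Figure~\ref{fig:angelic-semantics}, but it additionally tracks the concrete recursive-call side conditions $\mathsf{SAT}(\varphi \wedge f(v_2) = v_3)$ explicitly at application nodes. So the lemma should follow by a routine rule-by-rule match between the two systems.

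\medskip

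First I would set up the induction on the structure of the $\Respects$ derivation, doing a case split on the last rule applied. For every rule \emph{except} the application rule, the correspondence is immediate: each $\Respects$ rule (for \texttt{unit}, pairs, \texttt{fst}/\texttt{snd}, \texttt{inl}/\texttt{inr}, \texttt{unl}/\texttt{unr}, and the two \texttt{switch} rules) has a premise of the form $\Respects(e_i \Downarrow v_i, \varphi)$, to which I apply the induction hypothesis to get $e_i \Downarrow^\varphi v_i$, and then I invoke the structurally identical rule of Figure~\ref{fig:angelic-semantics} to conclude. The one subtlety worth noting is the \texttt{switch} case: the $\Respects$ rule uses a variable-binding form ($x_1 \to e_1$, $x_2 \to e_2$) whereas the angelic-semantics figure writes the rules slightly differently between the two branches; I would just note that substitution of the scrutinee into the taken branch is handled uniformly and the premise shapes still line up.

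\medskip

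The interesting case is the application rule, i.e.\ $(\texttt{rec}\ f(x) = e_1)\ e_2 \Downarrow v_3$. Here the $\Respects$ rule has three premises: $\Respects(e_2 \Downarrow v_2, \varphi)$, $\Respects(e_1[v_2/x] \Downarrow v_3, \varphi)$, and $\mathsf{SAT}(\varphi \wedge f(v_2) = v_3)$. By the induction hypothesis the first two give $e_2 \Downarrow^\varphi v_2$ and $e_1[v_2/x] \Downarrow^\varphi v_3$. To match this with the angelic-semantics rule for recursive calls, I would observe that the first rule of Figure~\ref{fig:angelic-semantics}, ``$\texttt{f}\ e \Downarrow^\varphi v'$ when $e \Downarrow^\varphi v$ and $\mathsf{SAT}(f(v) = v' \wedge \varphi)$,'' combined with the treatment of the recursive binding, produces exactly $(\texttt{rec}\ f(x)=e_1)\ e_2 \Downarrow^\varphi v_3$ from these three facts --- the $\mathsf{SAT}$ side condition is carried through verbatim, and the body evaluation $e_1[v_2/x] \Downarrow^\varphi v_3$ supplies the angelic witness for the recursive result. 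This is the main (though still mild) obstacle: I need to be careful that the two formalisms agree on how the recursive binder $f$ inside $e_1$ is elaborated --- in $\Respects$ the body is evaluated after substituting only $v_2$ for $x$ (with $f$ handled via the explicit $\mathsf{SAT}$ condition at recursive-call sites), and I must confirm the angelic semantics of Figure~\ref{fig:angelic-semantics} does the same thing, so that no residual occurrences of $f$ cause a mismatch. Once that bookkeeping is settled, the implication closes, completing the induction.
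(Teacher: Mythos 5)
Your proof takes essentially the same approach as the paper's: structural induction on the $\Respects$ derivation with a rule-by-rule match against Figure~\ref{fig:angelic-semantics}, using the $\mathsf{SAT}$ side condition in the application case. The only (harmless) difference is that in the application case you also invoke the induction hypothesis on the body premise $e_1[v_2/x]$, whereas the paper's angelic rule for \texttt{f}$\ e$ needs only the argument evaluation and the $\mathsf{SAT}$ condition, so that extra hypothesis is never used.
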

\begin{proof}
We proceed by structural induction on the evaluation of \Respects.

Case {\sc Unit}:
By the angelic semantics it always holds that $\texttt{unit} \Downarrow^\varphi \texttt{unit}$.

Case {\sc Tuple Construction}:
Suppose the last rule applied was
\[
    \inferrule
    {
      \Respects(e_1 \Downarrow v_1,\varphi)\\
      \Respects(e_2 \Downarrow v_2,\varphi)
    }
    {
      \Respects((e_1,e_2) \Downarrow (v_1,v_2),\varphi)
    }
\]

By IH, $e_1 \Downarrow^\varphi v_1$ and $e_2 \Downarrow^\varphi v_2$.

By the angelic semantics, we have the following derivation:
\[
    \inferrule
    {
      e_1 \Downarrow^\varphi v_1\\
      e_2 \Downarrow^\varphi v_2
    }
    {
      (e_1,e_2) \Downarrow^\varphi (v_1,v_2)
    }
\]

Therefore, $(e_1,e_2) \Downarrow^\varphi (v_1,v_2)$.

Case {\sc Function Application}:
Suppose the last rule applied was:
\[
    \inferrule
    {
      \Respects(e_2 \Downarrow v_2,\varphi)\\
      \Respects(e_1[\texttt{rec}\ f(x)=e_1/f,v_2/x] \Downarrow v_3,\varphi)\\
      \mathsf{SAT}(\varphi\wedge f(v_2) = v_3)
    }
    {
      \Respects((\texttt{rec}\ f(x)=e_1)\ e_2 \Downarrow v_3,\varphi)
    }
\]
By IH, $e_2 \Downarrow^\varphi v_2$.
It also holds that $\mathsf{SAT}(\varphi \wedge f(v_2) = v_3)$.

By the angelic semantics, we have the following derivation:
\[
    \inferrule
    {
      e_2 \Downarrow^\varphi v_2\\
      \mathsf{SAT}(\varphi\wedge f(v_2) = v_3)
    }
    {
      f(e_2) \Downarrow^\varphi v_3
    }
\]
Therefore, $f(e_2) \Downarrow^\varphi v_3$.

The rest of the rules follow similarly.

\end{proof}

\begin{lemma}\label{lemma:rec}
If $P \models \varphi$ and $P \gets \texttt{rec }f(x) = e$, then for all $v, v'$ where $v' \in \SemanticsOf{P}(v)$, it holds that $\Respects(P(v) \Downarrow v',\varphi)$.
\end{lemma}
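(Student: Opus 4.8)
The plan is to induct on the derivation of the standard big-step evaluation, using a single key reading of the hypothesis $P \models \varphi$. By definition, $P \models \varphi$ means $\models \varphi[\SemanticsOf{P}(x)/f(x)]$, i.e.\ the interpretation $\mathcal{I}^\star$ assigning to the uninterpreted symbol $f$ the true semantic function $c \mapsto \SemanticsOf{P}(c)$ is a model of $\varphi$. Hence for \emph{every} value $v_2$ the formula $\varphi \wedge f(v_2) = \SemanticsOf{P}(v_2)$ is satisfiable, since $\mathcal{I}^\star$ validates both conjuncts. This is exactly what is needed to discharge the extra $\mathsf{SAT}(\cdot)$ premise that is the only difference between the rules defining $\Respects$ and the evaluation rules of Figure~\ref{fig:standard-semantics}.

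First I would set up the induction: over all judgments $e_0 \Downarrow v_0$ appearing in the standard evaluation of $P$ on $v$, show $\Respects(e_0 \Downarrow v_0, \varphi)$, by case analysis on the last rule applied. For every rule except application, the matching $\Respects$ rule has literally the same premises, so one simply applies the induction hypothesis to the sub-derivations and then the corresponding $\Respects$ rule. The application case is the main obstacle. Because the evaluation rule for application fires only when the left operand is \emph{syntactically} a $\texttt{rec}$-form, and the only $\texttt{rec}$-form ever present is the one copy of $\texttt{rec } f(x) = e$ coming from $P$, the applied term must be $\texttt{rec } f(x) = e$; thus the last rule has premises $e_2 \Downarrow v_2$ and $e[\texttt{rec } f(x){=}e/f,\ v_2/x] \Downarrow v_3$. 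The second premise is precisely the defining condition $\SemanticsOf{P}(v_2) = v_3$, so the preliminary observation gives $\mathsf{SAT}(\varphi \wedge f(v_2) = v_3)$; the induction hypothesis gives $\Respects(e_2 \Downarrow v_2, \varphi)$ and $\Respects(e[\ldots] \Downarrow v_3, \varphi)$; and the $\Respects$ rule for application then yields $\Respects((\texttt{rec } f(x){=}e)\, e_2 \Downarrow v_3, \varphi)$.

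To conclude the lemma, I would unfold $v' \in \SemanticsOf{P}(v)$ to $e[\texttt{rec } f(x){=}e/f,\ v/x] \Downarrow v'$, invoke the induction to get $\Respects$ of that judgment, and apply the $\Respects$ application rule one last time --- its $\mathsf{SAT}$ premise again supplied by the preliminary observation since $v' = \SemanticsOf{P}(v)$ --- to obtain $\Respects(P(v) \Downarrow v', \varphi)$. I expect the only genuinely delicate point to be pinning down that every applied expression is $\texttt{rec } f(x) = e$ (so that each nested result $v_3$ coincides with a value of $\SemanticsOf{P}$); everything else is a mechanical, rule-by-rule correspondence between the two systems, and there is no termination worry because the standard evaluation is given as a (necessarily finite) derivation tree.
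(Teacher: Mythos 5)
Your proposal is correct and follows essentially the same route as the paper: structural induction on the standard evaluation derivation, with every non-application rule matched premise-for-premise by the corresponding $\Respects$ rule, and the application case's extra $\mathsf{SAT}(\varphi \wedge f(v_2) = v_3)$ premise discharged from $P \models \varphi$. You merely spell out more explicitly than the paper why that satisfiability holds (interpreting $f$ as the true semantic function) and why the applied term must be the \texttt{rec}-form of $P$ itself.
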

\begin{proof}
By induction on the derivation of $P(v) \Downarrow v'$.

Case {\sc Unit}: If $P(v) = \mlstinline{unit}$, then $\mlstinline{unit} \Downarrow \mlstinline{unit}$. It follows that $\Respects(\mlstinline{unit} \Downarrow \mlstinline{unit}, \varphi)$.

Case {\sc Tuple Construction}:
Suppose the last rule applied was
\[
    \inferrule
    {
      e_1 \Downarrow v_1\\
      e_2 \Downarrow v_2\\
    }
    {
      (e_1,e_2) \Downarrow (v_1,v_2)
    }
\]

By IH, $\Respects(e_1 \Downarrow v_1, \varphi)$ and $\Respects(e_2 \Downarrow v_2, \varphi)$.

By the semantics of \Respects, we have the following derivation:
\[
    \inferrule
    {
      \Respects(e_1 \Downarrow v_1, \varphi)\\
      \Respects(e_2 \Downarrow v_2, \varphi)
    }
    {
      \Respects((e_1,e_2) \Downarrow (v_1,v_2), \varphi)
    }
\]
Therefore, $\Respects((e_1,e_2) \Downarrow (v_1,v_2), \varphi)$.

Case {\sc Function Application}:
Suppose the last rule applied was:
\[
    \inferrule
    {
      e_2 \Downarrow v_2\\
      e_1[\texttt{rec f}(x)=e_1/f,v_2/x],v_2/x] \Downarrow v_3\\
    }
    {
      (\texttt{rec f}(x)=e_1)\ e_2 \Downarrow v_3
    }
\]
By IH, $\Respects(e_2 \Downarrow v_2, \varphi)$ and $\Respects(e_1[\texttt{rec f}(x)=e_1/f,v_2/x] \Downarrow v_3, \varphi)$.
It also holds that $\mathsf{SAT}(\varphi \wedge f(v_2) = v_3)$ which follows from the fact that $P \models \varphi$.

By the semantics of \Respects, we have the following derivation:
\[
    \inferrule
    {
      \Respects(e_2 \Downarrow v_2,\varphi)\\
      \Respects(e_1[\texttt{rec}\ f(x)=e_1/f,v_2/x] \Downarrow v_3,\varphi)\\
      \mathsf{SAT}(\varphi\wedge f(v_2) = v_3)
    }
    {
      \Respects((\texttt{rec}\ f(x)=e_1)\ e_2 \Downarrow v_3,\varphi)
    }
\]
Therefore, $\Respects(f(e_2) \Downarrow v_3, \varphi)$.

The rest of the rules follow similarly.

\end{proof}

\begin{lemma}\label{lemma:semantics}
If $P \models \varphi$ then then $P \angelicmodels \varphi$
\end{lemma}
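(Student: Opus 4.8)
The plan is to show that the ordinary execution of $P$ on each input is itself a valid angelic execution that happens to be $\varphi$-consistent, so the single output prescribed by the standard semantics already witnesses angelic satisfaction. Concretely, write $P = \texttt{rec } f(x) = e$, fix an arbitrary input $v$, and let $v'$ be the value with $P\ v \Downarrow v'$ under the standard semantics (i.e. $v' \in \SemanticsOf{P}(v)$); as throughout the paper we take $P$ to terminate, so such a $v'$ exists. It suffices to establish $P \angelicmodels_v \varphi$ for this $v$, i.e. $v' \in \SemanticsOf{P}^\varphi(v)$ together with $\mathsf{SAT}(f(v) = v' \wedge \varphi)$; since $v$ is arbitrary this gives $P \angelicmodels \varphi$.

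For the membership $v' \in \SemanticsOf{P}^\varphi(v)$, I would first invoke Lemma~\ref{lemma:rec} with the hypothesis $P \models \varphi$ to obtain $\Respects(P\ v \Downarrow v', \varphi)$ — intuitively, the standard derivation of $v'$ from $v$ has the extra property that every recursive call it performs yields a $\varphi$-satisfiable I/O pair, which is exactly where $P \models \varphi$ is used (all I/O pairs of $P$ are models of $\varphi$). Then I would apply Lemma~\ref{lemma:respects} to convert this $\Respects$-derivation into an angelic derivation $e[v/x] \Downarrow^\varphi v'$, and finally apply the last rule of Figure~\ref{fig:angelic-semantics} to conclude $v' \in \SemanticsOf{P}^\varphi(v)$. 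For the side condition $\mathsf{SAT}(f(v) = v' \wedge \varphi)$, I would unfold the definition of standard satisfaction: $P \models \varphi$ means $\models \varphi[\SemanticsOf{P}(x)/f(x)]$, so the interpretation assigning to the symbol $f$ the function $z \mapsto \SemanticsOf{P}(z)$ satisfies $\varphi$ and makes $f(v)$ denote $\SemanticsOf{P}(v) = v'$; hence $f(v) = v' \wedge \varphi$ is satisfied, which is precisely $\mathsf{SAT}(f(v) = v' \wedge \varphi)$. Combining the two parts yields $P \angelicmodels_v \varphi$, and since $v$ was arbitrary, $P \angelicmodels \varphi$.

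The step I expect to be the real (though deliberately understated) obstacle is the bridge from standard to angelic execution: the ordinary semantics and the $\Respects$ judgment resolve a recursive call by substituting the definition $\texttt{rec } f(x) = e$ for $f$ and unfolding, whereas the angelic semantics keep $f$ uninterpreted and discharge each call via a satisfiability check, so the substitution bookkeeping in passing from $\Respects(P\ v \Downarrow v',\varphi)$ to $e[v/x] \Downarrow^\varphi v'$ must be checked carefully. However, Lemmas~\ref{lemma:rec} and~\ref{lemma:respects} are engineered precisely to cross this gap — $\Respects$ is the hybrid relation recording, alongside an ordinary evaluation, that each recursive call I/O pair is $\varphi$-satisfiable, and Lemma~\ref{lemma:respects} shows such a trace replays angelically — so for this lemma the argument collapses to composing those two results and performing the definitional unfoldings above. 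The only caveat worth flagging is the implicit termination assumption: if standard evaluation of $P$ diverged on some input, Lemma~\ref{lemma:rec} would not apply there and a separate argument would be needed, but that case lies outside the class of programs considered in the paper.
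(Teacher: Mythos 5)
Your proof is correct and follows essentially the same route as the paper: take an arbitrary input $v$, apply Lemma~\ref{lemma:rec} to obtain $\Respects(e[v/x] \Downarrow v',\varphi)$ from $P \models \varphi$, then Lemma~\ref{lemma:respects} to replay that derivation angelically and conclude $v' \in \SemanticsOf{P}^\varphi(v)$. You are in fact slightly more careful than the paper's own proof, which omits the explicit check of the side condition $\mathsf{SAT}(f(v)=v'\wedge\varphi)$ required by the definition of $P \angelicmodels_v \varphi$; your unfolding of standard satisfaction to discharge it is a welcome addition, not a deviation.
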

\begin{proof}
By the previous two lemmas, the conclusion follows.

Let $v$ be a value. We want to show there exists a value $v'$ such that $v' \in [[P]](v)$.

By Lemma \ref{lemma:rec}, $\Respects(e[v/x] \Downarrow v',\varphi)$.

By Lemma \ref{lemma:respects}, $e[v/x] \Downarrow^\varphi v'$, thus $v' \in [[P]](v)$.
\end{proof}

\begin{theorem}{\bf (Completeness)}
Suppose that (1) \SynthesizeAngelic \ is complete,
and (2) if \SynthesizeAngelic \ returns $\mathsf{Failure}(\kappa)$, then $\kappa$ satisfies the assumption from \autoref{eq:anti-spec}.
If $\Synthesize(\chi)$ returns $\bot$, then there is no program that satisfies $\chi$.
\end{theorem}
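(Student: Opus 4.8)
The plan is to prove the contrapositive in a structural way: assuming there exists a program $P^*$ with $P^* \models \chi$, show that $\Synthesize(\chi)$ cannot return $\bot$. The key bridge is \textbf{Lemma~\ref{lemma:semantics}}, which tells us that $P^* \models \chi$ implies $P^* \angelicmodels \chi$; more generally, $P^* \angelicmodels \phi$ for any $\phi$ that $P^*$ actually satisfies. Since $P^*$ satisfies any strengthening of $\chi$ by a witness to its own angelic behavior, $P^*$ will survive as a candidate through the recursive calls, and by completeness of \SynthesizeAngelic{}, each recursive invocation on a specification that $P^*$ satisfies must return a $\mathsf{Success}$ rather than a $\mathsf{Failure}$.

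The main structural obstacle is termination: why does the recursion in \Synthesize{} bottom out? The recursive calls at line~\ref{line:witness} and line~\ref{line:failure2} strengthen the specification, and the argument must show this process cannot descend forever along a branch where a solution exists. I would set up a well-founded measure — for instance, on a fixed finitization bound, the (finite) set of programs in the search space that angelically satisfy the current specification strictly shrinks across the $\chi \land \neg \witness$ recursive call, because $\witness$ is a genuine witness extracted from an accepting run, so the program that produced it is excluded; and along the $\chi \land \witness$ branch, one argues that a program $P$ returned by \SynthesizeAngelic{} that does not satisfy $\chi$ under standard semantics but whose witness $\witness$ it \emph{does} satisfy would have been returned at line~\ref{line:success1} — hence the $\chi \land \witness$ branch makes progress toward programs closer to standard satisfaction. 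Making this rank function precise, and checking it decreases on every recursive edge, is the delicate part.

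Concretely I would argue as follows. Suppose for contradiction $\Synthesize(\chi)$ returns $\bot$ but some $P^*$ has $P^* \models \chi$. Trace the execution: at line~\ref{line:angelic}, \SynthesizeAngelic{} is called on $\chi' := \chi \land \bigwedge_{\phi_i \in \Omega} \neg \phi_i$. By the invariant~\eqref{eq:anti-spec} maintained on $\Omega$ and assumption~(2), $P^* \models \chi'$ still holds (each $\neg\phi_i$ is implied for any program satisfying $\chi$). By Lemma~\ref{lemma:semantics}, $P^* \angelicmodels \chi'$, so by completeness of \SynthesizeAngelic{} (assumption~(1)), the call returns $\mathsf{Success}(P,\witness)$, not $\mathsf{Failure}$. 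If $P \models \chi$ we would return $P \neq \bot$, contradiction; so we enter the \textbf{else} branch and recurse on $\chi \land \witness$ at line~\ref{line:witness}. Here I split on whether $P^* \models \witness$: if yes, then $P^* \models \chi \land \witness$, and by the induction hypothesis on the termination measure the recursive $\Synthesize(\chi \land \witness)$ does not return $\bot$, so we return its result — contradiction. If $P^* \not\models \witness$, then $P^* \models \chi \land \neg\witness$; the first recursive call may return $\bot$, but then line~\ref{line:failure2} recurses on $\chi \land \neg\witness$, which $P^*$ satisfies, and again by the induction hypothesis this does not return $\bot$ — contradiction. In every case we contradict the assumption that $\bot$ is returned, so no such $P^*$ exists.

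The step I expect to be the real work is justifying the induction hypothesis, i.e., exhibiting the well-founded order on the pairs $(\text{specification}, \Omega)$ that validates the recursion. I would phrase it as: the lexicographic combination of (a) the number of programs in the bounded search space still consistent (angelically) with the specification conjoined with $\bigwedge_{\phi_i \in \Omega}\neg\phi_i$, which strictly decreases whenever $\Omega$ grows or a $\neg\witness$ conjunct is added (since a concrete accepting run is ruled out), and (b) for the $\chi \land \witness$ edge, a secondary measure capturing how far the returned $P$ is from standard satisfaction. Verifying (a) strictly decreases uses that \GetWitness{} returns a $\witness$ that the extracted program $P$ provably angelically satisfies along the run that produced it, so adding $\neg\witness$ removes at least that run's program; and for the recursion to even be well-posed, one needs that within a fixed finitization bound the search space is finite, which the implementation discussion in Section~\ref{sec:implementation} guarantees. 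I would state this finiteness as an explicit hypothesis of the theorem if it is not already implicit in ``\SynthesizeAngelic{} is complete.''
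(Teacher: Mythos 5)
Your proposal is correct and follows essentially the same route as the paper's proof: argue the contrapositive, use Lemma~\ref{lemma:semantics} to convert $P^*\models\chi$ into $P^*\angelicmodels\chi'$ so that assumption~(1) rules out $\mathsf{Failure}$, and then case-split on whether $P^*$ satisfies $\witness$ so that one of the two recursive branches must succeed. The only substantive difference is that you make the inductive step explicit via a well-founded measure on the remaining search space, whereas the paper leaves the induction implicit; note that the heavy machinery is not strictly needed here, since the theorem's hypothesis that $\Synthesize(\chi)$ actually \emph{returns} $\bot$ already guarantees a finite recursion tree, so induction on the depth of that call tree suffices (your measure would additionally establish termination, which the theorem does not claim).
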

\begin{proof}[Proof of \autoref{thm:completeness}]
The statement of the completeness of \SynthesizeAngelic is as follows: If \SynthesizeAngelic$(\chi)$ returns $\mathsf{Failure}(\kappa)$, then there is no program $P$ such that $P \angelicmodels \chi$.

Assume that there is a program $S$ such that $S \models \chi$.
Suppose that \SynthesizeAngelic$(\groundSpec \land \bigwedge_{\phi_i \in \Omega} \neg\phi_i)$ returns $\mathsf{Failure}(\kappa)$.
By assumption (1), there is no program $P$ such that $P \angelicmodels \chi$. By \autoref{lemma:semantics}, there is no program $P$ such that $P \models \chi$.
Thus, $S \not\models \groundSpec$ which contradicts our assumption.

Thus, \SynthesizeAngelic$(\groundSpec \land \bigwedge_{\phi_i \in \Omega} \neg\phi_i)$ always returns $\mathsf{Success}(P,\witness)$.
If $P \models \groundSpec$ then the algorithm returns $P \neq \bot$ and the conclusion holds.

Suppose that $P \not\models \groundSpec$.
Then it must be the case that either $\Synthesize(\chi \land \witness) \neq \bot$ or $\Synthesize(\chi \land \neg\witness) \neq \bot$, otherwise it would contradict assumption (1) and the assumption that there is some program $S$ such that $S \models \groundSpec$.
Furthermore, recursive calls do not destroy the invariant that $\Omega$ is an anti-specification due to assumption (2).
Thus, the algorithm returns some program $P \neq \bot$ and the conclusion holds.

Therefore, the theorem holds in all cases.

\end{proof}

\subsection{Completeness of {\sc BuildAngelicFTA}}
\begin{lemma}
If {\sc BuildAngelicFTA}$(v_{in},\varphi)$ returns $\fta$ and $(T,L)$ is an run on $\fta$ where $L(\mathsf{root}(T)) \neq \bot$, then $T \Downarrow^\varphi v$ where
$L(\mathsf{root}(T)) = q_v$
\end{lemma}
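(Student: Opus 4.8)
The plan is to prove this by structural induction on the term $T$, following the inductive structure of the rules in Figure~\ref{fig:ftacreationrules}. The statement says: whenever an accepting-style run $(T,L)$ exists on the FTA produced by {\sc BuildAngelicFTA}$(v_{in},\varphi)$, and the label of the root is not the sentinel $\bot$, then the term $T$ (viewed as a program expression) angelically evaluates under $\varphi$ to exactly the value $v$ carried by the root state $q_v$. So the induction hypothesis would be: for every proper subtree $T'$ of $T$ with $L(\mathsf{root}(T')) = q_{v'} \neq \bot$, we have $T' \Downarrow^\varphi v'$.

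First I would set up the induction by case-splitting on $\mathsf{Label}(\mathsf{root}(T))$, i.e., on which language construct the root node is. Each case corresponds to exactly one FTA-construction rule that could have produced the transition $\ell(q_{v_1},\dots,q_{v_n}) \to q_v$ used at the root of the run. For the base cases ({\sc Init} with label \lstinline{x}, and {\sc Unit}), there are no children and the claim is immediate from the corresponding angelic semantics rule (for \lstinline{x}, $v = v_{in}$ and $x \Downarrow^\varphi v_{in}$). For the constructor cases ({\sc Pair}, {\sc Inl}, {\sc Inr}) and the destructor cases ({\sc Fst}, {\sc Snd}, {\sc Unl}, {\sc Unr}), I would: (1) note the children's root labels are not $\bot$ — this needs a small argument, since a transition like $\texttt{(}\cdot,\cdot\texttt{)}(q_{v_1},q_{v_2})\to q_{(v_1,v_2)}$ is only introduced by the {\sc Pair} rule when $q_{v_1},q_{v_2}$ are genuine value states, not $\bot$ (the only transitions into $\bot$ come from {\sc Uneval Prod}, which produces the $\bot$ target, not a $q_v$ target); (2) apply the IH to the children to get their angelic evaluations; (3) assemble the angelic derivation using the matching rule from Figure~\ref{fig:angelic-semantics}. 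The {\sc Angelic Recursion} case is similar: a transition $\texttt{f}(q_{v_1})\to q_{v_2}$ exists only when $\mathsf{SAT}(\varphi \wedge f(v_1) = v_2)$, and the IH gives the child's evaluation, so the angelic recursion rule applies. The \lstinline{switch} cases ({\sc Switch Left}/{\sc Switch Right}) are the interesting ones: here one child is genuinely $\bot$ by design, but the run condition $L(\mathsf{root}(T)) \neq \bot$ together with the shape of the transition forces the \emph{scrutinee} child to be $q_{\texttt{inl}\,v_3}$ (resp.\ $q_{\texttt{inr}\,v_3}$) and the taken-branch child to be a non-$\bot$ state $q_{v_1}$; the untaken branch, labeled $\bot$, is simply ignored by the angelic \lstinline{switch} rule, so no IH is needed for it.

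The main obstacle, I expect, is the bookkeeping around the $\bot$ state and the precise correspondence between runs of the FTA and the angelic big-step semantics, particularly for \lstinline{switch}: one must be careful that a run with a $\bot$-labeled untaken branch still corresponds to a legitimate angelic derivation, and that the only way to reach a non-$\bot$ root state through a \lstinline{switch} transition is via {\sc Switch Left}/{\sc Switch Right} with an appropriately-shaped scrutinee state. A secondary subtlety is that the term $T$ is being simultaneously regarded as a tree accepted by the FTA and as a program expression $e$; I would make this identification explicit up front (the node labels are exactly the language constructs, and the arity matches), and note that function-application nodes in the expression grammar are exactly the \lstinline{f}-labeled nodes here since the synthesized function is the only thing that can be applied. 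One should also double-check that the statement as phrased does not require $L(\mathsf{root}(T)) \in \finalstates$ — it only requires non-$\bot$ — so this lemma is genuinely about all runs, not just accepting ones, and the induction goes through uniformly without reference to final states. Once this lemma is in hand, combined with the {\sc Final} rule it would immediately yield Theorem~\ref{thm:65} (if the root is final, then $\mathsf{SAT}(\varphi \wedge f(v_{in}) = v)$, which together with $T \Downarrow^\varphi v$ and the last rule of Figure~\ref{fig:angelic-semantics} gives $P \angelicmodels_{v_{in}} \varphi$).
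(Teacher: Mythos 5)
Your proposal is correct and follows essentially the same route as the paper's proof: structural induction on $T$, case analysis on the root label, inversion on the FTA-construction rule that introduced the root transition, and reassembly of the corresponding angelic big-step derivation, with the \lstinline{switch}/$\bot$ interaction handled exactly as the paper does. The extra care you take with the $\bot$ bookkeeping and with noting that the lemma concerns all runs (not just accepting ones) only makes explicit what the paper leaves implicit.
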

\begin{proof}
By induction on the tree $T$.
Assume the root is a node $n$ with children $\set{n_1,\dots,n_k}$.
Proceed by cases on $\textsf{Label}(T)$.

Case {$\textsf{Label}(n) = \mlstinline{x}$}:
By inversion, then $L(n) = q_{v_{in}}$. $x[v_{in}] \Downarrow^\varphi v_{in}$, as desired.

Case {$\textsf{Label}(n) = \mlstinline{unit}$}:
By inversion, then $L(n) = q_{\texttt{unit}}$. $unit[v_{in}] \Downarrow^\varphi \texttt{unit}$, as desired.

Case {$\textsf{Label}(n) = \mlstinline{switch on inl/inr}$}:
By inversion, then either it comes from a left or a right $\bot$ rule. Without loss of generality, assume it's a right $\bot$ rule, since the left side follows a symmetric argument.
Then, $L(n) = q_{v_1}$ and $\mathsf{children}(n) = [n_1,n_2,n_3]$ where $L(n_1) = q_{\mlstinline{inl}\ v_3}$, $L(n_2) = q_{v_1}$, and $L(n_3) = \bot$.

By IH, $n_1[v/x] \Downarrow^\varphi \mlstinline{inl}\ v_3$ and $n_2[v/x] \Downarrow^\varphi v_1$.
Thus, we have the following deduction:
\[
\inferrule*
{
n_1[v/x] \Downarrow^\varphi \mlstinline{inl}\ v_3\\
n_2[v/x] \Downarrow^\varphi v_1
}
{
\mlstinline{switch}\ n_1[v/x]\ \mlstinline{on inl _}\ \rightarrow n_2[v/x] \mid \mlstinline{inr _}\ \rightarrow n_3[v/x]  \Downarrow^\varphi (v_1,v_2)
}
\]

Since we can rewrite $\mlstinline{switch}\ n_1[v/x]\ \mlstinline{on inl _}\ \rightarrow n_2[v/x]\mid \mlstinline{inr _}\ \rightarrow n_3[v/x]$ as $(\mlstinline{switch}\ n_1\ \mlstinline{on inl _} \rightarrow n_2 \mid \mlstinline{inr _} \rightarrow n_3)[v/x]$,
then we have the conclusion as desired.

Case {$\textsf{label}(n) = f$}:
By inversion, then $L(n) = q_{v'}$, and $\mathsf{children}(n) = n'$, where $L(n') = q_{v}$ and $\mathsf{SAT}(\varphi \wedge f(v) = v')$.
By IH, $n'[v/x] \Downarrow^\varphi v$.
So we have the following deduction:
\[
\inferrule*
{
n'[v/x] \Downarrow^\varphi v\\
\mathsf{SAT}(\varphi \wedge f(v) = v')
}
{
f\ n'[v/x] \Downarrow^\varphi v'
}
\]
As $f\ n'[v/x] = (f\ n')[v/x]$, we have $(f\ n')[v/x] \Downarrow^\varphi v'$, as desired.

Case {$\textsf{Label}(n) = (\ast,\ast)$}:
By inversion, then $L(n) = q_{(v_1,v_2)}$, and $\mathsf{children}(n) = [n_1,n_2]$, where $L(n_1) = q_{v_1}$ and $L(n_2) = q_{v_2}$.
By IH, $n_1[v/x] \Downarrow^\varphi v_1$ and $n_2[v/x] \Downarrow^\varphi v_2$.
So we have the following deduction:
\[
\inferrule*
{
n_1[v/x] \Downarrow^\varphi v_1\\
n_2[v/x] \Downarrow^\varphi v_2
}
{
(n_1[v/x],n_2[v/x]) \Downarrow^\varphi (v_1,v_2)
}
\]

$(n_1[v/x],n_2[v/x]) = (n_1,n_2)[v/x]$, so $(n_1,n_2)[v/x] \Downarrow^\varphi (v_1,v_2)$, as desired.

All remaining cases can proceed similarly to $(\ast,\ast)$.
\end{proof}

\begin{theorem}
If {\sc BuildAngelicFTA}$(v_{in}, \varphi)$ returns $\fta$, then $P \angelicmodels_{v_{in}} \varphi$ for every $P \in \mathcal{L}(\fta)$.
\end{theorem}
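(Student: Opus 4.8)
The plan is to reduce the theorem to the lemma just proven together with two of the FTA construction rules and the program rule of the angelic semantics. First I would unfold the hypothesis $P \in \mathcal{L}(\fta)$: by definition of the language of an FTA there is an accepting run $(T,L)$ whose term $T$ is the AST of the body $e$ of $P$ (a recursive program over the single free variable $x$, per the grammar of Figure~\ref{fig:lang}), with $L(\mathsf{root}(T)) \in \finalstates$. Write $L(\mathsf{root}(T)) = q_v$. The only rule of Figure~\ref{fig:ftacreationrules} that places a state into $\finalstates$ is {\sc Final}; hence $q_v$ is genuinely indexed by an actual value $v$ (in particular $L(\mathsf{root}(T)) \neq \bot$, since {\sc Final} never applies to the sentinel state), and moreover $\mathsf{SAT}(\varphi \land f(v_{in}) = v)$.

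Next I would invoke the preceding lemma. Since $L(\mathsf{root}(T)) = q_v \neq \bot$, the lemma gives $T[v_{in}/x] \Downarrow^\varphi v$, i.e.\ $e[v_{in}/x] \Downarrow^\varphi v$. Applying the last inference rule of Figure~\ref{fig:angelic-semantics} --- the rule that lifts evaluation of the body to $\SemanticsOf{P}^\varphi$ --- then yields $v \in \SemanticsOf{P}^\varphi(v_{in})$.

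Finally I would unfold the definition of $P \angelicmodels_{v_{in}} \varphi$, which requires a witness value $v'$ with $v' \in \SemanticsOf{P}^\varphi(v_{in})$ and $\mathsf{SAT}(f(v_{in}) = v' \land \varphi)$. Taking $v' = v$ discharges both conjuncts: membership is exactly the previous step, and satisfiability is exactly what the {\sc Final} rule supplied. Hence $P \angelicmodels_{v_{in}} \varphi$, and since $P$ was an arbitrary element of $\mathcal{L}(\fta)$, the theorem follows. I expect the only delicate point to be the bookkeeping at the FTA/semantics interface: confirming that $\bot \notin \finalstates$, that terms accepted by $\fta$ are exactly bodies of recursive programs over $x$, and that the lemma's conclusion should be read with the substitution $[v_{in}/x]$ already applied so that it plugs directly into the program rule. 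None of this needs a new idea --- it is all immediate from the construction rules and the shape of the grammar.
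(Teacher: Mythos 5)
Your proposal is correct and follows essentially the same route as the paper's proof: unfold $P \in \mathcal{L}(\fta)$ to obtain an accepting run $(P,L)$, invert the {\sc Final} rule to get $L(\mathsf{root}(P)) = q_v$ with $\mathsf{SAT}(\varphi \wedge f(v_{in}) = v)$, apply the preceding lemma to conclude $v \in \SemanticsOf{P}^\varphi(v_{in})$, and take $v' = v$ as the witness. The only detail the paper flags that you do not is the (mild) side assumption that $P$ terminates on every input; your treatment of the $\bot$/substitution bookkeeping is otherwise more explicit than the paper's.
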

\begin{proof}[Proof of \autoref{thm:65}]
Assuming that $P$ terminates on every input, this follows from the previous lemma.

\autoref{fig:angelic-semantics}, \autoref{fig:ftacreationrules}
Assume that {\sc BuildAngelicFTA}$(v_{in}, \varphi)$ returns $\fta$ and let $P \in \mathcal{L}(\fta)$.
\[
P \angelicmodels_{v_{in}} \Spec \ \Longleftrightarrow \  \exists v. \ v \in \SemanticsOf{P}^\Spec(v_{in}) \land \mathsf{SAT}(f(v_{in}) = v \wedge \varphi)
\]

Let $P \in \LanguageOf{A}$, then there is a mapping $L$ such that $(P,L)$ is a run and $L(\mathsf{root}(P)) \in Q_f$.
Then by inversion, it follows that $L(\mathsf{root}(P)) = q_v$ for some $v$ where $\mathsf{SAT}(\varphi \wedge f(v_{in}) = v)$.
\end{proof}

\begin{theorem}
The following theorem generalizes this from individual inputs to ground specifications:
Let $\varphi$ be a ground formula such that:
\[
V = \set{ v_i \ | \ f(v_i) \in \mathsf{Terms}(\varphi)}
\]
Then, if {\sc BuildAngelicFTA}$(v_{i}, \varphi)$ returns $\fta_i$ for inputs $V = \{ v_1, \ldots, v_n\}$, then, for every $P \in \mathcal{L}(\fta_1) \cap \ldots \cap \mathcal{L}(\fta_n) $, we have $P \angelicmodels \varphi$.
\end{theorem}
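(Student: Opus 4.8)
The plan is to reduce the statement to Theorem~\ref{thm:65} by unfolding the definition of angelic satisfaction. Recall that $P \angelicmodels \varphi$ holds iff $P \angelicmodels_v \varphi$ for \emph{every} input value $v$, and that $P \angelicmodels_v \varphi$ means there exists $v'$ with $v' \in \SemanticsOf{P}^\varphi(v)$ and $\mathsf{SAT}(f(v) = v' \wedge \varphi)$. So, fixing an arbitrary $P \in \mathcal{L}(\fta_1) \cap \cdots \cap \mathcal{L}(\fta_n)$, it suffices to establish $P \angelicmodels_v \varphi$ for each input $v$, and I would split the argument into two cases according to whether $v \in V$.

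For the first case, $v = v_i \in V$: since $P \in \mathcal{L}(\fta_i)$ and $\fta_i = \textsc{BuildAngelicFTA}(v_i, \varphi)$, Theorem~\ref{thm:65} immediately gives $P \angelicmodels_{v_i} \varphi$. This is the only place the hypothesis $P \in \bigcap_i \mathcal{L}(\fta_i)$ is used, and it is exactly the appeal to "the definition of intersection" from the short proof of Theorem~\ref{thm:66}.

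For the second case, $v \notin V$: by the definition of $V$, the term $f(v)$ does not occur in $\varphi$, so $f(v) = v'$ is consistent with $\varphi$ for every value $v'$; that is, $\mathsf{SAT}(f(v) = v' \wedge \varphi)$ holds for all $v'$ provided $\varphi$ is itself satisfiable. (If $\varphi$ is unsatisfiable, then no state of any $\fta_i$ is marked final, so $\mathcal{L}(\fta_i) = \emptyset$ and no such $P$ exists; the claim then holds vacuously.) It remains to exhibit some $v' \in \SemanticsOf{P}^\varphi(v)$, i.e.\ to observe that $\SemanticsOf{P}^\varphi(v)$ is non-empty. Under the standing assumption that synthesized programs terminate (guaranteed in the implementation by the well-founded ordering on recursive-call arguments, cf.\ Section~\ref{subsec:termination}, and the same assumption under which Theorem~\ref{thm:65} is established), the angelic evaluation of $P$ on $v$ produces at least one value: every angelic recursive call $f(u)$ encountered can select a consistent return value --- arbitrary when $u \notin V$, and consistent-with-$\varphi$ when $u \in V$ (such a value must exist, else $\varphi$ would be unsatisfiable and $P$ would not exist). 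Picking any such $v'$ completes this case, and hence the proof.

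The main obstacle is the second case, specifically the termination/non-emptiness bookkeeping: angelic satisfaction on input $v$ requires the \emph{existence} of an angelic evaluation of $P$ on $v$, so one must be careful to invoke the termination assumption (and dispatch the degenerate unsatisfiable-$\varphi$ situation) rather than treating "$\varphi$ does not constrain $f(v)$" as automatically sufficient. Everything else is a routine unfolding of definitions together with a single application of Theorem~\ref{thm:65}.
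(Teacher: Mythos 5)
Your proposal is correct, and its core is the same as the paper's: for each $v_i \in V$, membership in $\mathcal{L}(\fta_i)$ plus Theorem~\ref{thm:65} yields $P \angelicmodels_{v_i} \varphi$, and the intersection hypothesis is what lets you apply this for every $i$ simultaneously. The paper's own proof is a one-liner that stops there (``follows directly from Theorem~\ref{thm:65} and the definition of intersection''). Where you genuinely go beyond the paper is the second case, $v \notin V$: since the definition of $P \angelicmodels \varphi$ quantifies over \emph{all} possible inputs, not just those mentioned in $\varphi$, the paper's argument is strictly speaking incomplete, and your observations --- that $f(v) = v'$ is consistent with $\varphi$ whenever $f(v)$ does not occur in $\varphi$ and $\varphi$ is satisfiable, that unsatisfiable $\varphi$ makes the claim vacuous because no state is final, and that non-emptiness of $\SemanticsOf{P}^\varphi(v)$ rests on the same termination assumption the paper already invokes in its proof of Theorem~\ref{thm:65} --- are exactly what is needed to close that gap. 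So your proof is not just correct but more complete than the one in the paper; the only caveat is that the termination and well-typedness assumptions you lean on are implicit standing assumptions of the paper rather than explicit hypotheses of the theorem statement, which is worth flagging if you want the argument to be fully self-contained.
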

\begin{proof}[Proof of \autoref{thm:66}]
Follows directly from \autoref{thm:65}.
\end{proof}

\fi

\end{document}